\def\idty{{\leavevmode\rm 1\mkern -5.4mu I}} % identity operator
\def\tr{\mathop{\rm Tr}\nolimits}
\mathchardef\ree="023C \mathchardef\imm="023D  % former \Re and \Im
\newcommand{\beq}{\begin{equation}}
\newcommand{\eeq}{\end{equation}}
\newcommand{\rem}[1]{}  % Befehl zum Auskommentieren ganzer Bl�cke
\newtheorem{thm}{Theorem}[section]
\newtheorem{lem}[thm]{Lemma}
\newtheorem{prop}[thm]{Proposition}
\newtheorem{defi}[thm]{Definition}
\newtheorem{qu}[thm]{Question}
\newcommand{\sw}{\mathcal{S}} 
\renewcommand{\Bbb}[1]{\if1#1\idty\else\mathbb{#1}\fi}
\newcommand{\kb}[1]{|#1\rangle\langle#1|}
\newcommand{\SP}{\operatorname{span}}
\newcommand{\SU}{\operatorname{SU}}
\newcommand{\Sym}{\operatorname{S}}
\newcommand{\hi}{{\cal H}}
\newcommand{\xpl}{\sqrt{x}}
\newcommand{\no}{{\mathbf N}}
\newcommand{\spe}{E_{[\frac{n}{2},\frac{3n}{2}]}}
\newcommand{\speb}{E_{[\frac{n}{4},\frac{9n}{4}]}}
\newcommand{\bee}{\begin{eqnarray}}
\newcommand{\eee}{\end{eqnarray}}
\newcommand{\been}{\begin{eqnarray*}}
\newcommand{\eeen}{\end{eqnarray*}}
\newcommand{\swa}{{\mathcal S}({\cal H})}
\newcommand{\RM}{R_{M,\lambda}(x)}
\begin{document}

\title{Simulating continuous quantum systems by mean field fluctuations}

\author{Zolt\'an K\'ad\'ar}
\affiliation{Department of Pure Mathematics, University of Leeds}
\email{z.kadar@leeds.ac.uk}
\author{Michael Keyl}
\affiliation{TU M\"unchen, Fakult\"at Mathematik}
\email{michaelkeyl137@gmail.com}
\author{G\'eza T\'oth}
\affiliation{Department of Theoretical Physics, University of the Basque Country UPV/EHU}
\email{Geza.Toth.hu@gmail.com}
\author{Zolt\'an Zimbor\'as}
\affiliation{Department of Computer Science, University College London}
\email{zimboras@gmail.com}

\begin{abstract}
  In this paper we are discussing the question how a continuous quantum system
  can be simulated by mean field fluctuations of a finite number of qubits. On
  the kinematical side this leads to a convergence result  which states
  that appropriately chosen fluctuation operators converge in a certain weak
  sense (i.e. we are comparing expectation values) to canonical position and
  momentum $Q, P$ of one-degree of freedom, continuous quantum system. This
  result is substantially stronger than existing methods which rely either on
  central limit theorem arguments (and are therefore restricted to the
  Gaussian world) or are valid only if the states of the ensembles are close
  to the ``fully polarized'' state. Dynamically this relationship keeps
  perfectly intact (at least for small times) as long as the continuous system
  evolves according to a quadratic Hamiltonian. In other words we can
  approximate the corresponding (Heisenberg picture) time evolution of the
  canonical operators $Q, P$ up to arbitrary accuracy by the appropriately
  chosen time evolution of fluctuation operators of the finite systems.
\end{abstract}

\maketitle

\section{Introduction}

One of the most exciting, recent experiments in quantum information science is
the realization of quantum memory. In \cite{Polzik} it is shown that the state
of a laser field can be stored in collective degrees of freedom of an ensemble
of about $10^{23}$ earth-alcaline atoms at room temperature. When the light is
released after a few microseconds it is in a state which has a very good
fidelity with respect to the original state. 

A crucial part of the theoretical description of these experiments is the
Holstein-Primakoff transformation \cite{HP}, which basically implies that
under certain conditions the ensemble can be treated like a \emph{continuous}
quantum system  interacting harmonically with the light field. Storing and
releasing of the light is then -- roughly speaking -- just a swap of two coupled
oscillators; cf \cite{HSP} for details. 

One of the most important assumptions we have to make for this approximation
to be valid is invariance of the interaction under permutations of the atoms
(i.e. the intensity of the laser field is constant over the region where the
ensemble is spread), and this brings the whole model in contact with mean
field theory and mean field fluctuations. The Holstein-Primakoff
transformation can be regarded in this context as a special instance of the
fact that fluctuation operators (which measure the fluctuations of a mean
field variable around its expectation value) satisfy in the infinite particle
limit and under quite general conditions canonical commutation relations
\cite{Narnhofer}. 

This observation provides a big motivation to revisit the subject of mean field
fluctuations and to look how it can be used to describe the type of
matter-light interactions used in quantum memory experiments. Most of the
results available in the literature are mainly targeted towards the needs of
statistical mechanics \cite{fluct} and are strongly connected to the realm of the
central limit theorem \cite{Taku}. This implies in particular that in the
infinite particle limit we only get Gaussian states. For the type of
applications we have in mind this is a too severe restriction. To remove it
is one main goal of this paper.

To this end we are going to ask how a continuous quantum system can be
simulated by mean field fluctuations of a finite number of qubits. This
question involves kinematical as well as dynamical aspects. On the kinematical
side this means in particular to give the above statement about commutation
relations of fluctuation operators a more detailed and mathematical precise
meaning. This is done by a convergence result (Theorem \ref{thm:2}) which
states (roughly speaking) that appropriately chosen fluctuation operators
converge in a certain weak sense (i.e. we are comparing expectation values) to
canonical position and momentum $Q, P$ of one-degree of freedom, continuous
quantum system. This result is substantially stronger than existing methods
which rely either on central limit theorem arguments (and are therefore
restricted to the Gaussian world) or are valid only if the states of the
ensembles are close to the ``fully polarized'' state. As a side result we will
see that the mixedness of the one particle states (i.e. the restriction of the
global state to a single particle) is related to an effective $\hbar$ in the
corresponding continuous system.  

Dynamically this relationship keeps perfectly intact (at least for small
times) as long as the continuous system evolves according to a quadratic
Hamiltonian. In other words we can approximate the corresponding (Heisenberg
picture) time evolution of the canonical operators $Q, P$ up to arbitrary
accuracy by the appropriately chosen time evolution of fluctuation
operators of the finite systems (Theorem \ref{thm:1}). This is, again, a big
improvement over existing results where only special Hamiltonians and weaker
types of approximations are discussed.

Before we start to present the details, let us have a short look on the
organization of the paper. Section \ref{sec:basic-defin-main} collects all
main results of the paper without proofs and keeping the technical overhead as
small as possible. It consists of three subsections starting with
\ref{sec:fluct-oper} where fluctuation operators are reviewed and slightly
reformulated for the needs of this paper. Subsection \ref{sec:schwartz-operators}
discusses Schwartz operators recently introduced in \cite{SOP}. This class 
encompasses all states of the continuous system which can be simulated with
our method. In Subsection \ref{sec:perm-invar-stat} we present a new
representation of permutation invariant states of many qubit systems and
introduce a special type of convergence.  It plays a crucial role in the two
main theorem which are stated and discussed in Subsection
\ref{sec:dynam-fluct}. 
% Examples for the developed structures can be found in
% Subsection \ref{sec:examples}. 
The rest of the paper is devoted to proofs
(Section \ref{sec:fluctuator-dynamics}) and an outlook to further research 
(Section \ref{sec:outlook}).

\section{Basic definitions and main results}
\label{sec:basic-defin-main}

\subsection{Fluctuation operators}
\label{sec:fluct-oper}

Let us start to setup some notations and to recall a few known results which
will be important for the rest of this paper. Therefore lets denote the space
of (bounded) operators on the Hilbert space $\mathcal{H}=\Bbb{C}^2$ by
$\mathcal{B}(\mathcal{H})$. For each $A \in \mathcal{B}(\mathcal{H})$ and a
fixed density operator $\theta$ (the \emph{reference state}) on $\mathcal{H}$
define the \emph{fluctuation operator} $F_M(A) \in
\mathcal{B}(\mathcal{H}^{\otimes M})$ by 
\begin{equation}
  F_M(A) = \frac{1}{\sqrt{M}} \sum_{i=1}^M \left( A^{(i)} - \tr(A\theta) \Bbb{1}\right)
\end{equation}
where
\begin{equation}
  A^{(i)} = \Bbb{1}^{\otimes (i-1)} \otimes A \otimes \Bbb{1}^{M-i}.
\end{equation}
The $F_M(A)$ are important in mean field theory and describe the quantum
fluctuations around a mean field observables; cf \cite{fluct} for
details. For us the $F_M(A)$ will serve as approximations (in a sense we will
make clear below) of canonical operators of a continuous quantum system. To
this end the central definition is the following:

\begin{defi} \label{def:1}
  A sequence of density operators $\rho_M$, $M \in \mathbb{N}$ has
  $\sqrt{M}$\emph{-fluctuations} if 
  \begin{equation} \label{eq:5}
    \lim_{M \rightarrow \infty} \tr\bigl(\rho_M F_M(A_1) \cdots F_M(A_K)\bigr)
  \end{equation}
  exists and is finite for all $K \in \mathbb{N}$ and all $A_1, \dots, A_K \in 
  \mathcal{B}(\mathcal{H})$.
\end{defi}

For the rest of this paper let us choose a reference state $\theta$ satisfying
(where $\sigma_\alpha$, $\alpha=1,2,3$ denote Pauli matrices)
\begin{equation} \label{eq:2}
  \tr(\theta \sigma_1) = \tr(\theta \sigma_2) = 0,\quad \tr(\theta \sigma_3) =
  2\lambda,\ \lambda \in [0,1]
\end{equation}
and consider fluctuation operators corresponding to $\sigma_1$ and $\sigma_2$.
\begin{equation} \label{eq:7}
  Q_M = F_M\left(\frac{\sigma_1}{\sqrt{2}}\right) = 
  \frac{L_{M,1}}{\sqrt{2 M}} \quad
  P_{M} = F_M\left(\frac{\sigma_2}{\sqrt{2}}\right) = 
  \frac{L_{M,2}}{\sqrt{2 M}}
\end{equation}
here $L_{M,\alpha}$ denote global angular momentum (or global spin) operators
given by
\begin{equation}
  L_{\alpha,M} = \frac{1}{2} \sum_i \sigma_{\alpha}^{(i)}, \quad \alpha=1,2,3.
\end{equation}
The operators $Q_M$ and $P_M$ satisfy the commutation relations
\begin{equation} \label{eq:10}
  [Q_M, P_M] - i\lambda \Bbb{1} = \frac{i}{2 \sqrt{M}} F_N(\sigma_3). 
\end{equation}
Taking expectation values and the limit $M \rightarrow \infty$ on both sides
leads to
\begin{equation}
  \lim_{M\rightarrow\infty} \tr\bigl(\rho_M [Q_M,P_M]\bigr) -i \lambda =
  \frac{i}{2\sqrt{M}} \lim_{M \rightarrow \infty} \tr(\rho_M
  F_M\bigl(\sigma_3)\bigr) = 0.
\end{equation}
This indicates that Equation (\ref{eq:10}) converges in a certain formal way
towards the canonical commutation relations. A more precise version of this
statement is the following (the proof is postponed to the appendix):

\begin{prop} \label{prop:1}
  Consider a sequence $\rho_M \in \mathcal{B}(\mathcal{H}^{\otimes M})$, $M \in
  \Bbb{N}$ of density operators with $\sqrt{M}$ fluctuations. Then there is a
  Hilbert space $\mathcal{H}_\infty$, a density operator $\rho_\infty$ and two
  symmetric operators $Q_\infty, P_\infty$ with the common, invariant, dense
  domain $D$ such that for any polynomial $f(q,p)$ in two non-commuting
  variables $q,p$ the following statements hold:
  \begin{enumerate}
  \item \label{item:1}
    $\rho_\infty D \subset D$,
  \item \label{item:2}
    The closures of $f(Q_\infty,P_\infty) \rho_\infty$ and $\rho_\infty
    f(Q_\infty,P_\infty)$ are of trace class. 
  \item \label{item:3}
    $\lim_{M \rightarrow \infty} \tr\bigl(f(Q_M,P_M) \rho_M\bigr) = \tr\bigl(f(Q_\infty,P_\infty)
    \rho_\infty\bigr) = \tr\bigl(\rho_\infty f(Q_\infty,P_\infty)\bigr)$,
  \item \label{item:4}
    $[Q_\infty,P_\infty] \phi = i \lambda \phi$ $\forall \phi \in D$.
  \end{enumerate}
\end{prop}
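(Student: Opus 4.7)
The natural approach is a GNS-type construction driven by the limit functional supplied by Definition \ref{def:1}. Let $\AA$ denote the free unital $*$-algebra on two self-adjoint generators $q,p$, and define
$$\omega : \AA \to \Cx, \qquad \omega\bigl(f(q,p)\bigr) := \lim_{M\to\infty} \tr\bigl(\rho_M f(Q_M,P_M)\bigr).$$
The limit exists for every polynomial $f$ since each monomial in $Q_M,P_M$ is, up to constants, a monomial in the fluctuation operators $F_M(\sigma_1),F_M(\sigma_2)$, so Definition \ref{def:1} applies. Linearity is clear and positivity $\omega(f^*f)\ge 0$ is inherited from $\tr\bigl(\rho_M f(Q_M,P_M)^*f(Q_M,P_M)\bigr)\ge 0$; thus $\omega$ is a state on $\AA$.

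I would then run the standard GNS construction on $(\AA,\omega)$, producing a Hilbert space $\HH_\infty$, a cyclic unit vector $\Omega$ (the class of $1\in\AA$) and a $*$-representation $\pi$ with common invariant dense domain $D:=\pi(\AA)\Omega$. Set
$$Q_\infty := \pi(q),\quad P_\infty := \pi(p),\quad \rho_\infty := \ketbra{\Omega}{\Omega}.$$
Symmetry of $Q_\infty,P_\infty$ on $D$ follows from self-adjointness of $q,p$ in $\AA$. Items \ref{item:1} and \ref{item:2} are then immediate from rank-one-ness: $\rho_\infty D\subset\Cx\Omega\subset D$, while $\pi(f)\rho_\infty=\ketbra{\pi(f)\Omega}{\Omega}$ is already rank one, and $\rho_\infty\pi(f)$ extends by continuity to $\ketbra{\Omega}{\pi(f^*)\Omega}$, using that $\Omega\in\mathrm{dom}\,\pi(f)^*$ with $\pi(f)^*\Omega=\pi(f^*)\Omega$. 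Item \ref{item:3} then reduces to $\tr\bigl(\pi(f)\rho_\infty\bigr)=\langle\Omega,\pi(f)\Omega\rangle=\omega(f)$, which is the definition of $\omega$; the reversed trace $\tr\bigl(\rho_\infty\pi(f)\bigr)$ equals $\overline{\omega(f^*)}$ and coincides with $\omega(f)$ by self-adjointness of each $\rho_M$.

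The only step requiring actual content is item \ref{item:4}. I would translate it into the algebraic identity $\omega(h^*h)=0$, where $h := ([q,p]-i\lambda)g$ and $g\in\AA$ is arbitrary; once this is established, $\pi([q,p])\pi(g)\Omega = i\lambda\,\pi(g)\Omega$ on all of $D$. Using (\ref{eq:10}) one has
$$h(Q_M,P_M) = \frac{i}{2\sqrt{M}}\, F_M(\sigma_3)\, g(Q_M,P_M),$$
so
$$\omega(h^*h) = \lim_{M\to\infty} \frac{1}{4M}\, \tr\bigl(\rho_M\, g(Q_M,P_M)^* F_M(\sigma_3)^2\, g(Q_M,P_M)\bigr).$$
The trace on the right is taken of a polynomial in the fluctuation operators $F_M(\sigma_\alpha)$, so by Definition \ref{def:1} it converges to a finite number; the $1/M$ prefactor then forces $\omega(h^*h)=0$.

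I expect nothing to be genuinely hard beyond this last step: the cancellation relies on precisely the $1/\sqrt{M}$ scaling built into the fluctuation operators, which is exactly what makes the algebraic commutation relation emerge in the limit. The only bookkeeping subtlety is aligning the two orderings of trace in item \ref{item:3} with the closure statements in item \ref{item:2}, but the rank-one form of $\rho_\infty$ reduces both sides to $\omega(f)$ by inspection.
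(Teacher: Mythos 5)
Your proof is correct and takes essentially the same route as the paper: a GNS construction on the free unital $*$-algebra (which is exactly the tensor algebra over $\Bbb{C}^2$ used in the appendix) applied to the limit state defined via Definition \ref{def:1}, with $\rho_\infty=\kb{\Omega}$, and the commutation relation in item \ref{item:4} obtained from the $1/\sqrt{M}$ prefactor in Equation (\ref{eq:10}). The only cosmetic difference is that you establish item \ref{item:4} by showing $\omega(h^*h)=0$ directly for $h=([q,p]-i\lambda)g$, whereas the paper shows the limit state vanishes on the two-sided ideal generated by $qp-pq-i\lambda\Bbb{1}$ by estimating $W\bigl(X_1(qp-pq-i\lambda\Bbb{1})X_2\bigr)$; both arguments use the same scaling and are equivalent.
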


Condition \ref{item:1} and \ref{item:2} are of technical nature and needed to
guarantee that \ref{item:3} is a mathematically well defined expression. We
come back to this point in the next subsection. Item \ref{item:3} itself
establishes the operators $Q_\infty$ and $P_\infty$ as a form of weak limit of the
fluctuators $Q_M, P_M$, and condition \ref{item:4} states that these limit
operators satisfy the canonical commutation relations with $\lambda$
from Equation (\ref{eq:2}) \emph{as an effective $\hbar$}. We are tempted to
interpret this result by saying that ``the collective fluctuations of an
ensemble of two-level atoms behave in the infinite particle limit like a
continuous quantum system with one degree of freedom.'' Unfortunately this
interpretation is premature, since condition \ref{item:4} is not sufficient to
conclude that $Q_\infty$ and $P_\infty$ are canonical position and momentum. It is not even
guaranteed that $Q_\infty$ and $P_\infty$ are essentially self-adjoint on the domain
$D$. Hence the crucial question is:

\begin{qu} \label{qu:1}
  For which sequences $\rho_M \in \mathcal{B}(\mathcal{H}^{\otimes M})$, $M \in
  \Bbb{N}$ Proposition \ref{prop:1} holds with
  \begin{enumerate}
  \item $\mathcal{H}_\infty = \mathrm{L}^2(\Bbb{R})$ -- i.e. the Hilbert space
    of square integrable functions on $\Bbb{R}$,
  \item $D = \mathcal{S}(\Bbb{R})$ -- i.e. the space of Schwartz functions on
    $\Bbb{R}$, 
  \item \label{item:8}
    $Q_\infty= Q$ with $(Q \phi)(x) = x \phi(x)$ $\forall \phi \in
    \mathcal{S}(\Bbb{R})$ -- i.e. $Q$ is the canonical position operator with
    $\mathcal{S}(\Bbb{R})$ as its domain. 
  \item \label{item:9}
    $P_\infty=\lambda P$ with $P \phi = -i \phi'$ $\forall \phi \in
    \mathcal{S}(\Bbb{R})$ -- i.e. $P$ is the canonical momentum operator with
    $\mathcal{S}(\Bbb{R})$ as its domain. 
  \end{enumerate}
\end{qu}

One way to answer this question is to be more restrictive with the selection
of the sequences of states $\rho_M$, $M \in \Bbb{N}$. An important case arises
if  take a translationally invariant state $\omega$ of an infinite spin-chain
and define the $\rho_M$ in terms of the restrictions of $\omega$ to a sub
chain of length $M$. It is shown \cite{Taku} that in the case where
$\omega$ is exponentially clustering (i.e. correlations decay exponentially
fast), Proposition \ref{prop:1} holds with $Q_\infty$ and $P_\infty$ being the canonical
position and momentum operators and $\rho_\infty$ being a Gaussian state. 

In this paper, we want to discuss a slightly different approach which puts more
emphasis on the non-Gaussian world. In the following two sections we will
present in particular a different form of convergence which is tighter than
pointwise convergence (as used in Definition \ref{def:1}), and which can
therefore guarantee that the $M \rightarrow \infty$ limit always leads to the
Schrödinger representation of the CCR.  

Before we come to this point let us add a short remark about the role of the
reference state $\theta$. To this end let us calculate first the expectation
value  of one fluctuation operator $F_M(A)$. For an arbitrary $\rho_M$ we have
\begin{equation} \label{eq:1}
  \tr(A^{(i)} \rho_M) = \tr(A \theta_{M,i}) 
\end{equation}
where $\theta_{M,i}$ is the restriction of $\rho_M$ to the $i^{\mathrm{th}}$ site
(for a permutation invariant $\rho_M$ all $\theta_{M,i}$ coincide). Hence
\begin{equation}
  \tr(F_M(A) \rho_M) = \sqrt{M} \tr\bigl((\theta_M - \theta) A\bigr),\quad \theta_M =
  \frac{1}{M} \sum_{i=1}^M \theta_{M,i}. 
\end{equation}
Hence, $\lim_{M\rightarrow \infty} \tr(F_M(A)\rho_M) < \infty$ for all $A \in
\mathcal{B}(\mathcal{H})$ can hold only if
\begin{equation}
  \lim_{M\rightarrow\infty} \theta_M = \theta
\end{equation}
is satisfied. In other words: if a sequence $\rho_M$ has $\sqrt{M}$
fluctuations, its averaged one-site restriction in the limit $M \rightarrow
\infty$ is given by $\theta$. With the choice from Equation (\ref{eq:2}) this
implies that the parameter $\lambda$  -- which plays the role of an
\emph{effective $\hbar$} in Condition \ref{item:4} of Proposition \ref{prop:1}
-- is a property of \emph{the sequence}, while the operators $Q_M, P_M$ do not
depend on $\lambda$.

\subsection{Schwartz operators}
\label{sec:schwartz-operators}

Conditions \ref{item:1} and \ref{item:2} of Proposition \ref{prop:1} are
needed to guarantee that the traces on the right hand site of the expression
in item \ref{item:3} are well defined. They are, however, quite restrictive
and lead to a new class of operators which are called Schwartz operators.

\begin{defi}
  An operator $\rho \in \mathcal{B}(\mathcal{H}_\infty)$ is called a
  \emph{Schwartz operator} if $P^\alpha Q^\beta \rho Q^{\beta'} P^{\alpha'}$ is
  for all $\alpha,\alpha',\beta,\beta' \in \Bbb{N}_0$ a trace-class
  operator. The set of all Schwartz operators will be denoted by
  $\mathcal{S}(\mathcal{H}_\infty)$. 
\end{defi}

Schwartz operators are introduced and discussed in \cite{SOP}. For our purpose
only few properties are needed which we will present in the following. For
more details (in particular and the many degrees of freedom version) we will
refer the reader to \cite{SOP}. Most important for us is the fact that
$\mathcal{S}(\mathcal{H}_\infty)$ admits a natural topology.

\begin{prop}
  For each $\alpha,\alpha',\beta,\beta' \in \Bbb{N}_0$ the functional
  \begin{equation} \label{eq:16}
    \mathcal{S}(\mathcal{H}_\infty) \ni \rho \mapsto \parallel
    \rho \parallel_{\alpha\alpha'\beta\beta',1} = \parallel P^\alpha Q^\beta
    \rho Q^{\beta'} P^{\alpha'} \parallel_1 \in \Bbb{R}^+
  \end{equation}
  is a seminorm. $\mathcal{S}(\mathcal{H}_\infty)$ together with this family
  is a Fr\'echet space.
\end{prop}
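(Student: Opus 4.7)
The seminorm axioms are inherited directly from the trace norm. Positive homogeneity $\|c\rho\|_{\alpha\alpha'\beta\beta',1} = |c|\,\|\rho\|_{\alpha\alpha'\beta\beta',1}$ and the triangle inequality are consequences of the linearity of the map $\rho \mapsto P^\alpha Q^\beta \rho Q^{\beta'} P^{\alpha'}$ on $\mathcal{S}(\mathcal{H}_\infty)$, combined with the corresponding properties of $\|\cdot\|_1$ on the trace class; non-negativity is immediate. It is worth noting that this presupposes that $\mathcal{S}(\mathcal{H}_\infty)$ is a vector space, which follows at once from the definition because sums and scalar multiples of Schwartz operators are again Schwartz.

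For the Fréchet property the family $\{\|\cdot\|_{\alpha\alpha'\beta\beta',1}\}$ is countable (indexed by $\Bbb{N}_0^4$), and the member with $\alpha=\alpha'=\beta=\beta'=0$ is the trace norm $\|\cdot\|_1$, which is already a genuine norm. Hence the induced topology is Hausdorff and metrisable by the standard rearrangement $d(\rho,\sigma) = \sum_{k} 2^{-k}\|\rho-\sigma\|_{n_k,1}/(1+\|\rho-\sigma\|_{n_k,1})$ for any enumeration $(n_k)$ of $\Bbb{N}_0^4$. What remains is completeness.

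Given a Cauchy sequence $(\rho_j)_{j\in\Bbb{N}}$ in this metric, each of the sequences $(P^\alpha Q^\beta \rho_j Q^{\beta'} P^{\alpha'})_j$ is Cauchy in $\|\cdot\|_1$, so by completeness of the trace class it has a trace-class limit $T_{\alpha\alpha'\beta\beta'}$. Setting $\rho := T_{0000}$, the trace-norm convergence $\rho_j \to \rho$ gives in particular operator-norm convergence. I would then identify the closure of $P^\alpha Q^\beta \rho Q^{\beta'} P^{\alpha'}$ with $T_{\alpha\alpha'\beta\beta'}$ by testing against Schwartz vectors $\phi,\psi\in\mathcal{S}(\Bbb{R})$:
\begin{equation}
\langle \phi, T_{\alpha\alpha'\beta\beta'}\,\psi\rangle
= \lim_{j}\langle Q^\beta P^\alpha \phi, \rho_j\, Q^{\beta'} P^{\alpha'} \psi\rangle
= \langle Q^\beta P^\alpha \phi, \rho\, Q^{\beta'} P^{\alpha'} \psi\rangle.
\end{equation}
Thus the bounded operator $T_{\alpha\alpha'\beta\beta'}$ and $P^\alpha Q^\beta \rho Q^{\beta'} P^{\alpha'}$ coincide as sesquilinear forms on $\mathcal{S}(\Bbb{R})\times\mathcal{S}(\Bbb{R})$, which yields $\rho\in\mathcal{S}(\mathcal{H}_\infty)$ and $\overline{P^\alpha Q^\beta \rho Q^{\beta'} P^{\alpha'}} = T_{\alpha\alpha'\beta\beta'}$; convergence in every seminorm then reads exactly as $P^\alpha Q^\beta \rho_j Q^{\beta'} P^{\alpha'} \to T_{\alpha\alpha'\beta\beta'}$ in $\|\cdot\|_1$, which we already have.

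The main obstacle is this last identification step: one must show that an a priori delicate, possibly unbounded-looking composition $P^\alpha Q^\beta \rho Q^{\beta'} P^{\alpha'}$ does have a closure equal to the trace-class operator $T_{\alpha\alpha'\beta\beta'}$ obtained purely from $\|\cdot\|_1$-limits. This is where one leans on the regularity inherent in the notion of Schwartz operator developed in \cite{SOP}, in particular on the fact that Schwartz operators map $\mathcal{S}(\Bbb{R})$ into itself so that $\mathcal{S}(\Bbb{R})$ is a core on which all intermediate applications of $P$ and $Q$ are legitimate.
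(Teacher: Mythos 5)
The paper does not prove this proposition; it is quoted verbatim from \cite{SOP} (cited as ``in preparation''), so there is no in-text argument to compare your proof against. Evaluated on its own merits, the structure is correct: the seminorm axioms and metrizability (Hausdorffness provided by the $\alpha{=}\alpha'{=}\beta{=}\beta'{=}0$ case, which is the trace norm) are routine, and the completeness strategy of extracting trace-class limits $T_{\alpha\alpha'\beta\beta'}$ and setting $\rho = T_{0000}$ is the standard one.

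The one genuine soft spot is the one you flag yourself, and as written the way you close it is circular. To argue that $\rho \in \mathcal{S}(\mathcal{H}_\infty)$ and $\overline{P^\alpha Q^\beta \rho Q^{\beta'}P^{\alpha'}} = T_{\alpha\alpha'\beta\beta'}$ you appeal to ``the fact that Schwartz operators map $\mathcal{S}(\Bbb{R})$ into itself,'' but at that point $\rho$ is precisely the operator whose Schwartz-ness is at stake; that regularity property is legitimately used for the $\rho_j$'s (to justify moving $P^\alpha Q^\beta$ to the other side in the intermediate step) but cannot yet be invoked for the limit. Two clean ways to close it. (i) Adjoint argument: from your identity $\langle \phi, T_{\alpha\alpha'\beta\beta'}\psi\rangle = \langle Q^\beta P^\alpha\phi,\, \rho Q^{\beta'}P^{\alpha'}\psi\rangle$, fix $\psi\in\mathcal{S}(\Bbb{R})$ and set $\xi = \rho Q^{\beta'}P^{\alpha'}\psi$. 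The map $\phi \mapsto \langle Q^\beta P^\alpha\phi, \xi\rangle = \langle \phi, T\psi\rangle$ is bounded, so $\xi \in \operatorname{dom}\bigl((Q^\beta P^\alpha|_{\mathcal{S}(\Bbb{R})})^*\bigr)$, with $(Q^\beta P^\alpha)^*\xi = T\psi$; since this adjoint is a closed extension of $P^\alpha Q^\beta|_{\mathcal{S}(\Bbb{R})}$, the composition $P^\alpha Q^\beta \rho Q^{\beta'}P^{\alpha'}$ is well-defined and bounded on $\mathcal{S}(\Bbb{R})$ with trace-class closure $T_{\alpha\alpha'\beta\beta'}$, hence $\rho$ is Schwartz and the seminorm convergence follows. (ii) Alternatively, avoid domain issues altogether via Proposition \ref{prop:3}: the Cauchy property gives uniform bounds $\sup_j \|\rho_j\|_{\alpha\alpha'\beta\beta',1}<\infty$, hence in particular $\sup_j \|\mathbf{N}^k\rho_j\mathbf{N}^l\|_1 < \infty$ for each $k,l$, which bounds $n^k m^l\,|\langle\psi_n,\rho_j\psi_m\rangle|$ uniformly in $j,n,m$; since $\rho_j\to\rho$ in operator norm, the matrix elements converge pointwise, so $\rho$ inherits the polynomial decay of Proposition \ref{prop:3} and is Schwartz. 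Either route repairs the step; the rest of your proof is fine.
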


In passing let us add the remark that we could replace the trace norm in
(\ref{eq:16}) by any other $p$-norm (with $p>1$ including the operator norm)
and we would still get valid families of semi-norms (including in particular
the property that all the seminorms are finite for all Schwartz operators)
defining the same topology as the family we have chosen; cf \cite{SOP} for
details. 

There are a number of alternative characterizations of Schwartz operators and
some of them we will encounter later in this paper. For now we only need a
statement about some matrix elements. To this end let us introduce some
notation first. The usual creation and annihilation operators are
\begin{equation}
  a = \frac{1}{\sqrt{2}}(Q+iP),\quad a^\dagger = \frac{1}{\sqrt{2}} (Q-iP)
\end{equation}
both with $\mathcal{S}(\Bbb{R})$ as their domain. They give rise to the number
operator\footnote{The overline means -- as usual -- the closure.} 
\begin{equation} \label{eq:17}
  \mathbf{N} = \overline{a^\dagger a},\quad \mathbf{N} \psi_n = n \psi_n,\quad
  n \in \Bbb{N}_0
\end{equation}
and its eigenfunction $\psi_n$, $n \in \Bbb{N}_0$, i.e. $\psi_n$ denotes the
$n^{\mathrm{th}}$ order \emph{Hermite function}. Now we have the following
proposition \cite{SOP}.

\begin{prop} \label{prop:3}
  A trace class operator $\rho$ is a Schwartz operator iff
  \begin{equation}
    \sup_{n,m \in \Bbb{N}_0} \lvert \langle \psi_n, \rho \psi_m\rangle \rvert
    \cdot (|n| + |m|)^k < \infty \quad \forall k \in \Bbb{N}_0.
  \end{equation}
\end{prop}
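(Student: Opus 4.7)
My plan is to use the number operator $\mathbf{N}$ as an intermediary between the two conditions, exploiting that the Hermite functions $\psi_n$ are its eigenvectors while $\mathbf{N} = (Q^2 + P^2 - \idty)/2$ is itself a polynomial in $Q,P$. The key auxiliary estimate is the standard harmonic oscillator bound
\beq
  \Norm{P^\alpha Q^\beta \phi} \leq C_{\alpha,\beta}\Norm{(\mathbf{N}+\idty)^{(\alpha+\beta)/2}\phi}, \quad \phi \in \mathcal{S}(\Bbb{R}),
\eeq
derived inductively from $\|a\phi\|^2 + \|a^\dagger \phi\|^2 = \langle\phi, (2\mathbf{N}+\idty)\phi\rangle$. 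This shows that $P^\alpha Q^\beta (\mathbf{N}+\idty)^{-\gamma}$ extends to a bounded operator on $\mathcal{H}_\infty$ whenever $2\gamma \geq \alpha + \beta$, the crucial tool for translating between $Q,P$-sandwiches and $\mathbf{N}$-sandwiches.

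For the direction ($\Rightarrow$), I would assume $\rho$ is a Schwartz operator and observe that for every $k$ the operator $(\mathbf{N}+\idty)^k \rho (\mathbf{N}+\idty)^k = 2^{-2k}(Q^2+P^2+\idty)^k \rho (Q^2+P^2+\idty)^k$ expands, using $[Q,P]=i\idty$ to normal-order, into a finite sum of operators of the form $P^\alpha Q^\beta \rho Q^{\beta'}P^{\alpha'}$. Each of these is trace-class by hypothesis, so in particular bounded. Taking matrix elements in the Hermite basis yields
\beq
  (n+1)^k (m+1)^k |\langle \psi_n, \rho \psi_m \rangle| \leq \Norm{(\mathbf{N}+\idty)^k \rho (\mathbf{N}+\idty)^k}_\infty,
\eeq
and since $(n+1)(m+1) \geq 1+n+m$, this gives $\sup_{n,m}(|n|+|m|)^k |\langle\psi_n,\rho\psi_m\rangle| < \infty$.

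For the direction ($\Leftarrow$), I would set $c_{n,m} = \langle\psi_n,\rho\psi_m\rangle$ and expand $\rho = \sum_{n,m} c_{n,m}|\psi_n\rangle\langle\psi_m|$ in the Hermite basis. For any $\gamma,\gamma' \in \Bbb{N}_0$, rapid decay yields
\beq
  \Norm{(\mathbf{N}+\idty)^\gamma \rho (\mathbf{N}+\idty)^{\gamma'+1}}_2^2 = \sum_{n,m}(n+1)^{2\gamma}(m+1)^{2\gamma'+2}|c_{n,m}|^2 < \infty
\eeq
by taking the decay exponent sufficiently large, so this operator is Hilbert–Schmidt. Since $(\mathbf{N}+\idty)^{-1}$ is also Hilbert–Schmidt ($\sum_n (n+1)^{-2} < \infty$), the product
\beq
  (\mathbf{N}+\idty)^\gamma \rho (\mathbf{N}+\idty)^{\gamma'} = \bigl[(\mathbf{N}+\idty)^\gamma \rho (\mathbf{N}+\idty)^{\gamma'+1}\bigr]\cdot(\mathbf{N}+\idty)^{-1}
\eeq
is trace-class. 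Writing
\beq
  P^\alpha Q^\beta \rho Q^{\beta'}P^{\alpha'} = \bigl[P^\alpha Q^\beta (\mathbf{N}+\idty)^{-\gamma}\bigr]\bigl[(\mathbf{N}+\idty)^\gamma \rho (\mathbf{N}+\idty)^{\gamma'}\bigr]\bigl[(\mathbf{N}+\idty)^{-\gamma'}Q^{\beta'}P^{\alpha'}\bigr]
\eeq
with $\gamma \geq (\alpha+\beta)/2$, $\gamma' \geq (\alpha'+\beta')/2$ then exhibits it as the product of a bounded, a trace-class, and a bounded operator, hence trace-class.

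The main technical obstacle I anticipate is making rigorous the normal-ordering step in the ($\Rightarrow$) direction and the bounded-extension step for $P^\alpha Q^\beta(\mathbf{N}+\idty)^{-\gamma}$ in the ($\Leftarrow$) direction. Both issues involve tracking domains carefully, but each reduces to standard computations on the common invariant domain $\mathcal{S}(\Bbb{R})$, where the polynomial identities and harmonic-oscillator norm bounds hold manifestly and extend by continuity.
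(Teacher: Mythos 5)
The paper does not prove Proposition~\ref{prop:3} itself; it is cited verbatim from the reference \cite{SOP} (``Schwartz operators'', listed as in preparation), so there is no in-paper proof to compare against. Judging your argument on its own merits: it is correct and complete. Both directions go through. In ($\Rightarrow$) the identity $\mathbf{N}+\idty = \tfrac12(Q^2+P^2+\idty)$ and Weyl normal ordering legitimately reduce $(\mathbf{N}+\idty)^k\rho(\mathbf{N}+\idty)^k$ to a finite sum of Schwartz sandwiches $P^\alpha Q^\beta\rho Q^{\beta'}P^{\alpha'}$, each trace class hence bounded, and the inequality $(n+1)(m+1)\geq n+m+1\geq n+m$ converts the resulting matrix-element bound into the claimed decay. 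In ($\Leftarrow$) the computation of the Hilbert--Schmidt norm in the Hermite basis is exact, the peel-off of one factor $(\mathbf{N}+\idty)^{-1}$ (itself Hilbert--Schmidt) promotes the Hilbert--Schmidt estimate to trace class, and the bounded-extension fact $\bigl\|P^\alpha Q^\beta(\mathbf{N}+\idty)^{-\gamma}\bigr\|<\infty$ for $2\gamma\geq\alpha+\beta$ is a genuine standard bound that follows exactly by the induction you sketch from $\|a\phi\|^2+\|a^\dagger\phi\|^2=\langle\phi,(2\mathbf{N}+\idty)\phi\rangle$ (or directly from the action of monomials in $a,a^\dagger$ on $\psi_n$). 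Your use of $\mathbf{N}$ as the bridge between the two formulations is the natural route and is almost certainly what the cited reference does; the only thing worth saying explicitly when writing this up is that every Schwartz operator is automatically trace class (take $\alpha=\alpha'=\beta=\beta'=0$), which is why it is legitimate to take matrix elements and expand $\rho$ in the Hermite basis at all.
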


This implies in particular that all operators $\rho$ where only a finite
number of matrix elements $\langle \psi_n, \rho \psi_m\rangle$ are different
from $0$, are Schwartz operators.

\subsection{Permutation invariant states}
\label{sec:perm-invar-stat}

Let us come back now to Question \ref{qu:1}. The purpose of this section is to
define a notion of convergence which guarantees that the operators $Q_\infty$ and
$P_\infty$ in Proposition \ref{prop:1} can be chosen as canonical position
and momentum $Q$ and $\lambda P$. To this end let us have a another look at Definition
\ref{def:1}. The crucial objects in Equation (\ref{eq:5}) are expectation
values of products of fluctuation operators $F_M(A_M)$. The $F_M(A)$, however,
commute with all permutation operators. Therefore the expectation values of
the form  $\tr(F_M(A_1) \cdots F_M(A_K) \rho_M)$ do not change, if we replace
the density operator $\rho_M \in \mathcal{B}_*(\mathcal{H}^{\otimes M})$ with
its average over all permutations. In other words: As long as we are only
interested in expectation values we can assume without loss of generality that
$\rho_M$ is permutation invariant.

The general structure of permutation invariant density matrices can be easily
deduced from representation theory of $\SU(2)$ and the permutation group (this
is very well known stuff; cf. \cite{cleanex} for a review and further
references). The $M$-fold Hilbert space $\mathcal{H}^{\otimes M}$ can be
decomposed such that for any $U \in \SU(2)$ we have
\begin{equation}
  \mathcal{H}^{\otimes M} = \bigoplus_j \mathcal{H}_j \otimes
  \mathcal{K}_{M,j},\quad U^{\otimes M} = \pi_j(U) \otimes \Bbb{1}
\end{equation}
where $\mathcal{H}_j$ is a $2j+1$ dimensional Hilbert space, $\pi_j: \SU(2)
\rightarrow \mathcal{B}(\mathcal{H}_j)$ denotes the irreducible spin-$j$
representation, $\mathcal{K}_{M,j}$ is a multiplicity space carrying an
irreducible representation of the permutation group $\Sym_M$, and the index
$j$ runs over the integers $0, 1, \dots M/2$ (if $M$ is even) or the
half-integers $1/2, 3/2, \dots, M/2$ (if $M$ is odd). A permutation invariant
density matrix has the form
\begin{equation}\label{eq:3}
  \rho_M = \bigoplus_j w_{M,j} \left(\rho_{M,j} \otimes \frac{\Bbb{1}}{\dim
      \mathcal{K}_{M,j}}\right) 
\end{equation}
where $\rho_{M,j}$ is a density matrix on $\mathcal{H}_j$ and the weights
$w_{M,j}$ are positive real numbers satisfying
\begin{equation}
  \sum_j w_{M,j} = 1.
\end{equation}
If $w_{M,j} = 0$ for some $j$ we set $\rho_{M,j} = 0$, too. 

In a similar way we can decompose the angular momentum operators $L_{\alpha,N}$.
 If we write\footnote{In abuse of notation we use the
  symbol for the representation of the group and the corresponding
  Lie-algebra.} $L_\alpha^{(j)} = \pi_j(\sigma_\alpha/2) \in
\mathcal{B}(\mathcal{H}_j)$ for angular momentum in the spin-$j$
representation we get 
\begin{equation}
  L_{\alpha,N} = \bigoplus_j L_\alpha^{(j)} \otimes \Bbb{1}.
\end{equation}
Of special importance as well are the eigenvectors $\psi^{(j)}_n$ of
$L_3^{(j)}$
\begin{equation}
  L_3^{(j)} \psi^{(j)}_n = (j - n) \psi^{(j)}_n, \quad n=0, \dots, 2j
\end{equation}
and the ladder operators $L_{\pm}^{(j)} = L_1^{(j)} \pm i L_2^{(j)}$
\begin{equation}
  L^{(j)}_+ \psi^{(j)}_n = \sqrt{n(2j-n+1)} \psi^{(j)}_{n-1} \quad
  L^{(j)}_- \psi^{(j)}_n = \sqrt{(2j-n)(n+1)} \psi^{(j)}_{n+1}.
\end{equation}

Th next important points concerns the relation between the Hilbert spaces 
$\mathcal{H}_j$ and $\mathcal{H}_\infty$. The former is the representation
space of the spin-$j$ irreducible $\SU(2)$ representation and contains the
distinguished basis $\psi_n^{(j)}$. The latter was introduced in Subsection
\ref{sec:schwartz-operators} as  $\mathrm{L}^2(\Bbb{R})$ and we chose the
\emph{Hermite functions} $\psi_n$, $n \in \Bbb{N}$ as a basis (cf. Equation
(\ref{eq:17})). Now we embed all the $\mathcal{H}_j$ into $\mathcal{H}_\infty$
such that $\psi_n^{(j)}$ becomes for all $n=0,\dots,2j$ the Hermite function
$\psi_n$. In other words, we consider the embedding:
\begin{equation}
  \mathcal{H}_j \ni \sum_{n=0}^{2j} \phi^n \psi_n^{(j)} \mapsto
  \sum_{n=0}^{2j} \phi^n \psi_n \in \mathcal{H}_\infty.
\end{equation}
In the following we will \emph{identify $\mathcal{H}_j$ with its image under
  this isometry}. In this way all operators $\rho_M$ become finite rank
operators on the same infinite dimensional Hilbert space
$\mathcal{H}_\infty$. The same is true for the angular momentum operators
$L_\alpha^{(j)}$, $\alpha=1,2,3$ and $L_\pm^{(j)}$.

Another crucial step is a rescaling of the parameter $j = 0, \dots, M/2$ (or
$1/2,\dots,M/2$) by 
\begin{equation} \label{eq:11}
  x_j = \frac{2j}{M}\quad j_x = \frac{\lfloor M  x \rfloor}{2}.
\end{equation}
Hence $\rho_{M,j}$ becomes $\rho_{M,x_j}$ and if we extend it continuously to
the interval $[0,1]$ we get a continuous function
\begin{equation}
  R_M : [0,1] \ni x \mapsto R_M(x) \in \sw(\mathcal{H}_\infty),\quad
  R_M(x_j) = \rho_{M,j}.
\end{equation}
Note that $R_M$ is not uniquely defined by $\rho_M$, however we can choose
such a function for any $\rho_M$ (e.g. by linear interpolation). Also note
that the definition of these functions requires the embedding of the
$\mathcal{H}_j$ into $\mathcal{H}_\infty$ as described above.

The advantage of this formulation just introduced is that sums over $j$ can
now be reformulated as integrals with respect to the probability measure
$\mu_M$ given by
\begin{equation} \label{eq:13}
  \int_0^1 f(x) \mu_M(dx) = \sum_j w_{M,j} f(x_j). 
\end{equation}
To see this consider the expectation value of a permutation invariant
observable (like angular momentum introduced above) $A_M = \sum A_{M,j}
\otimes \Bbb{1}$ in the state $\rho_M$. Using the decomposition in Equation
(\ref{eq:3}) we get \begin{equation} \label{eq:4}
  \tr(\rho_M A_M) = \sum_j w_{M,j} \tr(\rho_{M,j} A_{M,j}) = \int_0^1
  \tr(R_M(x) A_M(x)) \mu_M(dx)
\end{equation}
where we have used again a continuous function $A_M: [0,1] \rightarrow 
\mathcal{B}(\mathcal{H}_\infty)$ satisfying $A_M(x_j) = A_{M,j}$. This
motivates the following definition:

\begin{defi}
  Given $\rho_M$, permutation invariant. We call a continuous function
  $R_M:[0,1]\to \sw(\mathcal{H}_\infty)$ an \textbf{integral representation}
  of $\rho_M$, if $R_M(2j/M) = \rho_{M,j}$, where $\rho_{M,j} \in
  \mathcal{B}_*(\mathcal{H}_j)$ are the density operators in the decomposition
  from Equation (\ref{eq:3}).
\end{defi}
 
Note again that an integral representation always exist, but is never
unique. The latter is not a problem, though, since integrals as in Equation
(\ref{eq:4}) only depend on $\rho_M$ and not on the integral representation
chosen. 

Finally let us have a look at the convergence of sequences of permutation
invariant density operators $\rho_M$, $M \in \Bbb{N}$. We can split this
question up into two pieces: convergence of the sequence of measures
$\mu_{\rho,M}$ and convergence of the sequence of functions $R_M$. A
particular behavior is described by the next definition, which captures the
situation that in the limit $M \rightarrow \infty$ only one value $x = \lambda
\in [0,1]$ is relevant, while all other $x$ does not matter. Or to state it in
a different way: For large $M$ the state $\rho_M$ describes a very sharp
angular momentum $\vec{L}^2$ with $j$ centered around $j_\lambda = \lambda
M/2$. For these $j$ the corresponding $\rho_{M,j}$ is close to $\rho_\infty$
(in the topology of $\sw(\mathcal{H}_\infty)$ which measures in turn the
moments). All other $\rho_j$ are unimportant.

\begin{defi} \label{def:2}
  We say that a sequence $\rho_M \in \mathcal{B}_*(\mathcal{H}^{\otimes M})$
  of permutation invariant density operators \emph{converges at $\lambda \in
    [0,1]$ towards} a state $\rho_\infty \in \sw(\mathcal{H}_\infty)$ if each
  $\rho_M$ admits an integral representation $R_M$ such that the following
  conditions hold:  
  \begin{enumerate}
  \item 
    The sequence of probability measures $\mu_M$ converges weakly to the
    point measure to the point measure at $\lambda$ that is 
    \begin{equation}
      \lim_{M \rightarrow \infty} \int_0^1 f(x) \mu_M(dx) = f(\lambda)
    \end{equation}
    for all continuous functions $f:[0,1]\to \mathbb{R}$ 
  \item 
    The set 
    \begin{equation}
      \{R_M(x)|M\in\mathbb{N}, x\in [0,1]\} \subset \sw(\mathcal{H}_\infty)
    \end{equation}
    is bounded.
  \item There is a neighbourhood of $I\subset [0,1]$ of $\lambda$ and a
    continuous function $R_\infty: I \rightarrow \sw(\mathcal{H}_\infty)$
    such that $R_\infty(\lambda) = \rho_\infty$ and 
    \begin{equation}
      \lim_{M\to \infty}R_M(x)=R_\infty(x) \;\mbox{uniformly on}\;I 
    \end{equation}
    holds in the topology of $\sw(\mathcal{H}_\infty)$.
  \end{enumerate}
\end{defi}

\subsection{The main result}
\label{sec:dynam-fluct}

Now we can come back to Question \ref{qu:1}. The next theorem simply says that
the convergence given in Definition \ref{def:2} provides (at least) a
sufficient condition (all proofs are postponed to Section
\ref{sec:fluctuator-dynamics}). 

\begin{thm} \label{thm:2}
  For any sequence $\rho_M \in \mathcal{B}_*(\mathcal{H}^{\otimes M})$, $M
  \in \Bbb{N}$ of permutation invariant states converging towards a
  $\rho_\infty \in \sw(\mathcal{H}_\infty)$ at $\lambda \in (0,1]$, and all
  polynomials $f(q,p)$ in two non-commuting variables we have 
  \begin{equation}
    \lim_{M \rightarrow \infty} \tr(\rho_M f(Q_M,P_M)) = \tr(\rho_\infty
    f(Q_\infty,P_\infty)).
  \end{equation}
  with $Q_\infty = \sqrt{\lambda}Q$ and $P_\infty=\sqrt{\lambda} P$.
\end{thm}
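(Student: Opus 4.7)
The plan is to use the direct-sum decomposition \eqref{eq:3} and the integral representation $R_M$ to rewrite the trace as a single integral over $[0,1]$, and then pass to the limit using the three conditions of Definition \ref{def:2}. Since $Q_M$ and $P_M$ commute with every permutation, the operator $f(Q_M, P_M)$ is block diagonal along $\bigoplus_j \mathcal{H}_j \otimes \mathcal{K}_{M,j}$ and acts trivially on every multiplicity factor. Pairing with $\rho_M$ written in the form \eqref{eq:3} and invoking \eqref{eq:4} yields
\begin{equation}
\tr\bigl(\rho_M f(Q_M,P_M)\bigr) \;=\; \int_0^1 g_M(x)\,\mu_M(dx),\qquad g_M(x) := \tr\bigl(R_M(x)\,f(\widehat Q_M(x),\widehat P_M(x))\bigr),
\end{equation}
where $\widehat Q_M(x)$ and $\widehat P_M(x)$ denote the restrictions of $Q_M, P_M$ to the spin-$j_x$ block, regarded as finite-rank operators on $\mathcal{H}_\infty$ via the Hermite embedding.

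The core step is to prove that $g_M(x)\to g_\infty(x):=\tr\bigl(R_\infty(x)\,f(\sqrt{x}\,Q,\sqrt{x}\,P)\bigr)$ uniformly on the neighborhood $I$ of $\lambda$ provided by condition (3) of Definition \ref{def:2}. Starting from $L_+^{(j)}\psi_n = \sqrt{n(2j-n+1)}\,\psi_{n-1}$ and $L_-^{(j)}\psi_n = \sqrt{(n+1)(2j-n)}\,\psi_{n+1}$, the Taylor expansion $\sqrt{n(2j-n+1)} = \sqrt{2jn}\,(1+O(n/j))$ shows that each matrix element $\langle\psi_k,\widehat Q_M(x)^a\widehat P_M(x)^b\psi_m\rangle$ converges, as $M\to\infty$ with $2j_x/M\to x$, to the corresponding matrix element of $(\sqrt{x}\,Q)^a(\sqrt{x}\,P)^b$, and is dominated uniformly in $M$ and in $x\in I$ by a polynomial in $k+m$ whose degree depends only on $a+b$. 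On the other hand, uniform convergence $R_M(x)\to R_\infty(x)$ on $I$ in the Fr\'echet topology of $\sw(\mathcal{H}_\infty)$, combined with Proposition \ref{prop:3}, ensures that $\langle\psi_m,R_M(x)\psi_k\rangle$ decays faster than any inverse polynomial in $k+m$, uniformly in $M$ and $x$. Writing the trace as the double series
\begin{equation}
g_M(x) \;=\; \sum_{k,m\geq 0}\langle\psi_m,R_M(x)\psi_k\rangle\,\langle\psi_k,f(\widehat Q_M(x),\widehat P_M(x))\psi_m\rangle
\end{equation}
and applying dominated convergence termwise yields the uniform convergence $g_M\to g_\infty$ on $I$; the same estimate applied to $R_\infty$ delivers continuity of $g_\infty$ on $I$.

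The limit of the $\mu_M$-integral is then extracted by splitting $\int_0^1 = \int_I + \int_{[0,1]\setminus I}$. Condition (2) of Definition \ref{def:2} together with the uniform matrix-element bound from the previous step give $|g_M(x)|\leq C$ on all of $[0,1]$, while $\mu_M\to\delta_\lambda$ weakly forces $\mu_M([0,1]\setminus I)\to 0$, so the complementary integral vanishes. For the integral over $I$, extend $g_\infty$ continuously to $[0,1]$ and bound
\begin{equation}
\Bigl|\int_I g_M\,d\mu_M - g_\infty(\lambda)\Bigr|\;\leq\;\sup_{x\in I}|g_M(x)-g_\infty(x)|\;+\;\Bigl|\int_I g_\infty\,d\mu_M - g_\infty(\lambda)\Bigr|,
\end{equation}
the first term vanishing by uniform convergence and the second by weak convergence of $\mu_M$. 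Since $g_\infty(\lambda) = \tr(\rho_\infty f(\sqrt{\lambda}\,Q,\sqrt{\lambda}\,P))$, the theorem follows. The main obstacle will be securing the uniform polynomial bound on the matrix elements $\langle\psi_k,f(\widehat Q_M(x),\widehat P_M(x))\psi_m\rangle$ with degree controlled only by $\deg f$: the coefficients $\sqrt{n(2j-n+1)}/\sqrt{2M}$ grow like $\sqrt{n}$ uniformly as long as $2j/M$ stays bounded, and one must track combinatorially how these factors accumulate under $\deg f$-fold products while remaining suppressed by the Schwartz decay of $R_M(x)$ uniformly over $x\in I$ — this is the delicate balance on which the whole argument rests.
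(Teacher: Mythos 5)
Your proof follows the same overall skeleton as the paper's: rewrite $\tr(\rho_M f(Q_M,P_M))$ as $\int_0^1 g_M\,d\mu_M$ via the integral representation, establish uniform convergence $g_M\to g_\infty$ on a neighborhood $I$ of $\lambda$, establish a uniform bound on $g_M$ over all of $[0,1]$, split the integral, and invoke weak convergence $\mu_M\to\delta_\lambda$. What differs is the machinery for the ``$g_M\to g_\infty$ uniformly'' step. You propose a matrix-element argument: bound $\langle\psi_k,f(\widehat Q_M(x),\widehat P_M(x))\psi_m\rangle$ by a polynomial in $k+m$ of degree $\leq\deg f/2$, pair it against the super-polynomial decay of $\langle\psi_m,R_M(x)\psi_k\rangle$ guaranteed by Schwartz regularity, and pass to the limit by a tail-splitting / dominated-convergence argument. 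The paper instead proves Theorem~\ref{thm:2} as the $t=0$ instance of Theorem~\ref{thm:1}, whose proof hinges on the operator-norm estimate of Lemma~\ref{yab} — namely a uniform bound on $\Vert(\mathbf N+p\Bbb1)^{-p/2}\,U_{M,t}^*(x)\,a^S\,U_{M,t}(x)\Vert$ — together with Lemma~\ref{elozo}, which then lets it dominate the traces directly at the operator level (Lemma~\ref{lem:3}, Lemma~\ref{ub}, Lemma~\ref{lem:6}). Your approach is more elementary and self-contained for the static case, but the paper's regularized-observable estimate is what makes the dynamical extension tractable; a purely matrix-element bookkeeping of $U_{M,t}(x)^*a^SU_{M,t}(x)$ would be considerably messier. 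Two small corrections to your write-up: the uniform (in $M,x$) super-polynomial decay of $\langle\psi_m,R_M(x)\psi_k\rangle$ should be attributed to condition~(2) of Definition~\ref{def:2} (boundedness of $\{R_M(x)\}$ in the Fréchet space), not to the uniform convergence on $I$; and ``applying dominated convergence termwise yields uniform convergence'' needs to be unpacked into the usual two-step estimate — separate the $R_M-R_\infty$ contribution from the $f_M(x)-f_\infty(x)$ contribution, then split the double sum at some cutoff $k+m\leq K$ with the tail controlled uniformly by the Schwartz decay — because termwise dominated convergence by itself does not deliver uniformity in $x$.
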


Note that convergence at $\lambda$ in the sense of Definition \ref{def:2} \emph{does
not imply} $\sqrt{M}$ fluctuations. The reason is that we are not controlling
convergence of the moments of the measures $\mu_M$. Therefore expectation
values of products involving $F_M(\sigma_3)$ can still diverge. It is easy to
fix this by considering a stronger type of convergence for the
$\mu_M$. However, this is not needed for getting the proper limit of the $Q_M,
P_M$. 

One way to interpret Theorem \ref{thm:2} is in terms of \emph{quantum
  simulation}: If we are interested in the expected value of an observable
$X=f(Q_\infty,P_\infty)$ of a continuous quantum system in a state $\rho_\infty$, we can
measure instead the observable $X_M=f(Q_M,P_M)$ of an ensemble of $M$ qubits
in the state $\rho_M$ and the error we make that way can be arbitrarily small,
provided $M$ is big enough. 

It is possible to simulate each density operator $\rho_\infty \in
\mathcal{S}(\mathcal{H}_\infty)$ that way. To see this consider for each half-integer
$j$ the projection
\begin{equation} \label{eq:21}
  E_j : \mathcal{H}_\infty \rightarrow \SP \{\psi_n\,|\, 0 \leq n \leq 2j\} 
\end{equation}
and a sequence $j_M \in \Bbb{R}$ satisfying $j_M \in \{0, \dots, M/2\}$ if $M$
is even, $j_M \in \{1/2, \dots, M/2\}$ if $M$ is odd, and $\lim_{M \rightarrow
  \infty} 2j_M/M = \lambda$. Now we can define
\begin{equation}
 \tilde{\rho}_M = E_{j_M} \rho_\infty E_{j_M} \otimes \Bbb{1} \in
  \mathcal{B}(\mathcal{H}_{j_M} \otimes \mathcal{K}_{M,j_M}) \subset
  \mathcal{B}(\mathcal{H}^{\otimes M})
\end{equation}
and 
\begin{equation}
  \rho_M = \frac{\tilde{\rho}_M}{\tr(\tilde{\rho}_M)} \quad \text{if
    $\tr(\tilde{\rho}_M) \neq 0$,}
\end{equation}
where we have used the identification of $\SU(2)$ representation space
$\mathcal{H}_{j_M}$ with the subspace $E_{j_M}\mathcal{H}_\infty$ introduced
in Subsection \ref{sec:perm-invar-stat}. If $\tr(\tilde{\rho}_M) = 0$ holds
(this can only happen for finitely many $M$) we can just choose an arbitrary
density operator. It is obvious that the sequence just constructed converges
towards $\rho_\infty$ at $\lambda$ and therefore it simulates $\rho_\infty$ in
the sense described in the last paragraph. 

The parameter $\lambda$ can be freely chosen in this construction (by choosing
the sequence $j_M$) without affecting the simulability of $\rho_\infty$. This
seems to indicate that $\lambda$ does not play an important role, but this
impression is wrong. At the end of Subsection \ref{sec:fluct-oper} we have
seen that $\lambda$ describes (asymptotically, i.e. in the limit $M
\rightarrow \infty$) the noise in the one-site restrictions of the $\rho_M$,
ranging from ``fully polarized'' for $\lambda=1$ and ``fully depolarized'' for
$\lambda=0$. 

At the same time it plays in the commutation relations in Proposition
\ref{prop:1} the role of an effective $\hbar$, and this also holds for the
operators $Q_\infty$ and $P_\infty$ in Theorem \ref{thm:2}. However, in contrast
to our expectation in Question \ref{qu:1} $Q_\infty$ and $P_\infty$ are not
exactly canonical position and momentum (except in the case $\lambda=1$), they
differ by a factor $\lambda^{-1/2}$ and $\lambda^{1/2}$ respectively. Hence,
although $\lambda$ has \emph{some} properties of $\hbar$ the limit $\lambda
\rightarrow \infty$ is not the classical limit (unless we rescale the finite
dimensional observables $Q_M$ and $P_M$ appropriately). Instead all observables
vanish if $\lambda \rightarrow \infty$. Nevertheless, the case $\lambda=0$ is
interesting (in particular because it is important experimentally \cite{T})
and therefore deserves a more detailed study in a forthcoming paper. We come
back to this point in Section \ref{sec:outlook}. 

Now let us come back to the simulation point of view. We have argued that we
can simulate states, but what about time-evolutions? To answer this question,
consider a self-adjoint\footnote{For each polynomial $f(q,p)$ in two
  non-commutative variable we can construct the adjoint we apply complex
  conjugation to the coefficients and reverse the ordering of each monomial. A
  polynomial is self-adjoint, if it coincides with its adjoint.} polynomial
$h(q,p)$ and define the Hamiltonians $H_M = h(Q_M,P_M)$ and
$H=h(Q,P)$. Theorem  \ref{thm:2} shows that we can simulate expectation values
of $H$ by expectation values of $H_M$. But what about time-evolutions? This
question is much more difficult. The first big obstacle arises from the fact
that $H$ (with $\mathcal{S}(\Bbb{R})$ as its domain) is only symmetric but in
general not essentially selfadjoint. It is even possible that no self-adjoint
extensions exist at all \cite{RESI2}. In these cases the unitaries 
\begin{equation} \label{eq:6}
  U_{M,t} = \exp(-i t H_M)
\end{equation}
(which always exist, since the underlying Hilbert space is finite dimensional)
do approximate a reasonable time-evolution of the continuous system. However,
what about the case where $H$ is essentially self-adjoint on
$\mathcal{S}(\Bbb{R})$? In this case we can define the weakly continuous
one-parameter group 
\begin{equation} \label{eq:8}
  U_{\lambda,t} = \exp(-i t \lambda H)
\end{equation}
and ask whether the finite dimensional dynamics $U_{M,t}$ approximates the
$U_{\lambda,t}$ with a certain error for a finite amount of time. The next
theorem shows that this indeed the case for quadratic Hamiltonians. 

\begin{thm} \label{thm:1}
  Consider: A sequence $\rho_M \in \mathcal{B}_*(\mathcal{H}^{\otimes M})$, $M
  \in \Bbb{N}$ of permutation invariant states converging towards a
  $\rho_\infty \in \sw(\mathcal{H}_\infty)$ at $\lambda \in (0,1]$, a second
  order, self adjoint, homogeneous polynomial $h(q,p)$ and the
  corresponding unitaries, as defined in Equations (\ref{eq:6}) and
  (\ref{eq:8}). Then there is $t_0 > 0$ such that for all $t\in \Bbb{R}$ with
  $|t| < t_0$ and all polynomials $f(q,p)$ we have
  \begin{equation}
    \lim_{M \rightarrow \infty} \tr(U_{M,t} \rho_M U_{M,t}^* f(Q_{M},P_{M}))
    = \tr(U_{\lambda,t} \rho_\infty U_{\lambda,t}^* f(Q_\infty,P_\infty)).
  \end{equation}
  with $Q_\infty = \sqrt{\lambda} Q$ and $P_\infty = \sqrt{\lambda} P$. 
\end{thm}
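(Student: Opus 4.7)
\emph{Strategy.} My plan is to move to the Heisenberg picture and expand both sides in a Dyson--Taylor series in $t$. Writing
\begin{equation*}
U_{M,t}^{*}f(Q_M,P_M)U_{M,t} \;=\; \sum_{n\ge 0}\frac{(it)^n}{n!}\,\ad_{H_M}^{n}f(Q_M,P_M),
\end{equation*}
and analogously for the infinite side with $h(Q_\infty,P_\infty)=\lambda H$ as generator (using that $h$ is homogeneous of degree $2$, so the prefactor $\lambda$ emerges automatically from $Q_\infty=\sqrt{\lambda}Q$, $P_\infty=\sqrt{\lambda}P$), the goal is to match the two series term by term and control the tail uniformly in $M$. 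The restriction $|t|<t_0$ will arise from the finite radius of convergence of this expansion.

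\emph{Order-by-order identification.} Using the exact relation $[Q_M,P_M]=i\lambda\,\idty + \varepsilon_M$ with $\varepsilon_M:=(i/2\sqrt{M})F_M(\sigma_3)$ from equation (\ref{eq:10}), I would inductively decompose
\begin{equation*}
\ad_{H_M}^{n}f(Q_M,P_M) \;=\; g_n(Q_M,P_M) + R_{M,n},
\end{equation*}
where $g_n$ is a fixed polynomial in two variables depending only on $h,f,\lambda$ (namely what the same recursion produces under the exact CCR $[q,p]=i\lambda$), and $R_{M,n}$ is a finite sum of monomials of the form $u(Q_M,P_M)\,\varepsilon_M^{\,k}\,v(Q_M,P_M)$ with $k\ge 1$. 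The homogeneity $\deg h=2$ is essential: it makes $\ad_h$ a degree-preserving derivation on polynomials in $q,p$, so all $g_n$ live in the same finite-dimensional space of polynomials of degree $\le \deg f$, and by construction $g_n(Q_\infty,P_\infty)=\ad_{h(Q_\infty,P_\infty)}^{n}f(Q_\infty,P_\infty)$. The two Dyson series thus agree term by term on the level of symbols.

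\emph{Limit of each Dyson coefficient.} For the polynomial piece $g_n(Q_M,P_M)$, Theorem \ref{thm:2} applied directly gives $\lim_M \tr(\rho_M g_n(Q_M,P_M)) = \tr(\rho_\infty g_n(Q_\infty,P_\infty))$. For the remainder, $\varepsilon_M$ carries an explicit factor $1/\sqrt{M}$, and restricted to the block $j\approx \lambda M/2$ it is a rescaled $L_3^{(j)}$ whose matrix elements between low-lying Hermite modes vanish as $M\to\infty$ in the topology of $\sw(\mathcal{H}_\infty)$ (this is a routine consequence of the uniform $\sw(\mathcal{H}_\infty)$-boundedness of the integral representation $R_M$ demanded by Definition \ref{def:2}). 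A short induction on the degree of $u,v$ and on $k\ge 1$ then shows $\lim_M \tr(\rho_M R_{M,n}) = 0$ for each fixed $n$. Combining these two, the $n$-th Dyson coefficient of the finite side converges to the corresponding one of the infinite side.

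\emph{Main obstacle: uniform tail bound.} The technical heart, and the source of the restriction to $|t|<t_0$, is establishing a bound of the form $|\tr(\rho_M \ad_{H_M}^{n}f(Q_M,P_M))| \le C_0\,C^{n}$ uniformly in $M$ and $n$. Since $g_n$ sits in a fixed finite-dimensional space of polynomials of degree $\le \deg f$ on which $\ad_h$ is a bounded linear operator whose norm depends only on the coefficients of $h$, the coefficients of $g_n$ satisfy $\|g_n\|\le C^n$; combined with the uniform-in-$M$ boundedness of $\tr(\rho_M Q_M^{a}P_M^{b})$ for $a+b\le \deg f$ --- itself inherited from the uniform $\sw(\mathcal{H}_\infty)$-boundedness of $R_M$ near $\lambda$ --- this yields the desired bound for the main part. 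A parallel estimate must be carried out for $R_{M,n}$, using that each $\varepsilon_M$ factor is operator-norm bounded on every block and the combinatorial count of monomials grows at most exponentially in $n$. With such a uniform bound in hand, dominated convergence on $\mathbb{N}$ legitimates the interchange $\lim_M\sum_n = \sum_n\lim_M$ for $|t|<t_0 := 1/C$, and the two sides match term by term. The quadratic homogeneous assumption on $h$ is indispensable here: for higher-order $h$, $\ad_{H_M}$ does not preserve any finite-dimensional polynomial space, the coefficients $\|g_n\|$ grow super-geometrically, and no universal $t_0$ can be extracted.
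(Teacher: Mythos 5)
Your Dyson--expansion strategy is a genuinely different route from the paper's proof: the paper works blockwise over the parameter $x$ via the integral representation, establishes strong convergence of $U_{M,t}(x)$ to $U_{x,t}$ from analytic-vector estimates for quadratic Hamiltonians (Lemmas \ref{csek} and \ref{teft}, Proposition \ref{strc}), and closes the argument with the uniform operator bound of Lemma \ref{yab}. Your approach instead tries to reduce Theorem \ref{thm:1} to Theorem \ref{thm:2} term by term in the Heisenberg-picture Dyson series, which is an attractive modular idea. Unfortunately, the two places you dismiss as ``routine consequence'' and ``parallel estimate'' are precisely where the entire technical content lives, and as stated they do not go through.

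Concretely, there are three gaps. First, the decomposition $\ad_{H_M}^{\,n} f = g_n + R_{M,n}$ is imprecise because $\varepsilon_M$ does not commute with $Q_M,P_M$: indeed $[\varepsilon_M,a_M]\propto a_M/M$, so when $\ad_{H_M}$ hits an $\varepsilon_M$ already present in $R_{M,n-1}$ it produces new $a_M,a_M^*$ factors, and neither the degree of the $u,v$ pieces nor the number and position of the $\varepsilon_M$ insertions is as controlled as your schematic suggests. Second, the claim that $\varepsilon_M$ ``carries an explicit factor $1/\sqrt{M}$'' conflates the normalization of $F_M(\sigma_3)$ with actual smallness: $\|F_M(\sigma_3)\|$ itself scales like $\sqrt{M}$, so $\|\varepsilon_M\|$ is $O(1)$ uniformly in $M$ and only its action on low-lying Hermite modes is small. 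Turning that observation into $\lim_M\tr(\rho_M R_{M,n})=0$ requires controlling unbounded polynomials in $Q_M,P_M$ sandwiched around $\varepsilon_M$ against a Schwartz operator --- this is not a ``routine consequence'' of Definition \ref{def:2} but essentially the content of Lemma \ref{yab} and of Proposition \ref{Ss}. Third, the uniform tail bound $|\tr(\rho_M\ad_{H_M}^{\,n}f(Q_M,P_M))|\le C_0 C^n$ is asserted rather than proved; the input you need is an estimate of the type $|\tr(\rho_M a_M^R)|\le C_1^{|R|}$ uniform in $M$ and $R$, which is exactly the hard part of Lemmas \ref{csek} and \ref{yab}. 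Without that estimate the interchange of limit and sum is unjustified, and your Dyson-series argument collapses onto the same technical core as the paper's proof, merely repackaged. The remark in your last paragraph about higher-order $h$ (degree growth under $\ad_h$ destroying the finite-dimensional polynomial space) is correct and matches the paper's reasons for restricting to quadratic Hamiltonians.
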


This theorem implies that the time evolution of the expectation values
$\langle f(Q,P)\rangle$ of any observable $f(Q,P)$ can be approximated for
short times and with a bounded error $\epsilon$ by the time evolution of the
corresponding expectation value $\langle f(Q_M,P_M)\rangle$ of the finite
dimensional system. The size of the error can be as small as possible provided
the number $M$ of qubits is big enough. This strengthens the simulation point
of view introduced above (at least as long as quadratic Hamiltonians are
considered). 

There is, however, one small problem with this result, and this is the time
limit $t_0$. As long as we are considering an ensemble of \emph{fixed} size
$M$ it is natural that we can keep a bounded error $\epsilon$ only for a
finite amount of time. The bound $t_0$ in theorem, however, holds also
\emph{in the infinite particle limit}. This is counter intuitive since we
would assume that we can always improve the quality the simulation by
increasing the number $M$ of particles. Therefore, it is likely that the
restriction of a finite $t_0$ is only a restriction of the proof and not of
the result.

% \subsection{Examples}
% \label{sec:examples}

\section{Fluctuator dynamics}
\label{sec:fluctuator-dynamics}

The purpose of this section is to prove Theorem \ref{thm:1} and in this
context we will learn more details about the dynamics of fluctuation
operators. As another side effect, we will get a proof of Theorem \ref{thm:2}
as well, since it arises from \ref{thm:1} in the special case $t=0$. Let us
start with some additional notations.
\begin{itemize}
\item 
  We need the functions
  \begin{multline} \label{eq:23}
    \beta: \mathbb{N}^2\times [0,1]\to [0,1]: (M,n,x)\mapsto \beta_{M}(x,n) =
    \theta_1(x -\frac{n}{M\lambda})\\ \text{with}\
    \theta_1(x)=\sqrt{x\chi_{[0,1]}(x)},     
  \end{multline}
  where $\chi_S$ is the characteristic function of the set $S$. 
  Some of the arguments of $\beta$ will be often omitted
  when confusion can be avoided.
\item 
  Furthermore we introduce $a_M: [0,1]\to {\cal S}({\cal H}_\infty), \forall M\in{\mathbb N}$ by
  \begin{eqnarray}
  a_M(x)&=&\beta_{M}(x,\mathbf{N})a=a\beta_{M}(x,\mathbf{N}-\Bbb{1})\label{b1}\\
  a^*_{M}(x)&=&\beta_{M,\lambda}(x,\mathbf{N}-\Bbb{1})a^*=a^*\beta_{M}(x,\mathbf{N})\label{b2}
  \end{eqnarray}
  where $a,a^*$ are the standard annihilation and creation operators,
  respectively, and $\mathbf{N}=a^* a$ is the number operator (cf. Section
  \ref{sec:schwartz-operators}) and $\beta$ with an operator argument is
  understood in the sense of functional calculus. 
\item 
  We will also need the integrated version of $a_M(x)$, $a_M^*(x)$, which are
  given by 
  \begin{equation}
    a_{M} = \frac{L_{M,+}}{\sqrt{M}} = \frac{1}{\sqrt{2}} (Q_M + i P_M),\quad
    a^*_{M} = \frac{L_{M,-}}{\sqrt{M}} = \frac{1}{\sqrt{2}} (Q_M - i P_M)
  \end{equation}
  with
  \begin{equation}
    L_{M,\pm} = L_{M,1} \pm i L_{M,2} =  \frac{1}{2}\sum_{j=1}^M\sigma_\pm^{(j)}
  \end{equation}
  with $\sigma_\pm = \sigma_1\pm i\sigma_2$ in terms of Pauli
  matrices. $a_M, a_M^*$ are related to the \emph{functions} $a_M(x)$,
  $a_M^*(x)$ by
  \begin{equation} \label{eq:12}
    a_M = \bigoplus_j a_M(x_j) \otimes \Bbb{1},\quad a_M^* = \bigoplus_j
    a_M^*(x) \otimes \Bbb{1}. 
  \end{equation}
  where $x_j$ is given as in Equation (\ref{eq:11}) by $x_j = 2j/M$.
\item 
  The quadratic Hamiltonian $H$ will be written as
  \begin{equation} \label{eq:15}
    H=\sum_{k=0}^3 c_k A_k\ \mbox{with}\ A_0=a^2,A_1=aa^*, A_2=a^*a, A_3=a^{*2}
  \end{equation}
  and the finite dimensional version $\mathcal{H}_M$ becomes similarly
  \begin{equation}
    H_M=\sum_{k=0}^3 c_k A_{M,k}\ \mbox{with}\ A_{M,0}=a_M^2,\ A_{M,1}=a_Ma_M^*,\ A_{M,2}=a_M^*a_M,\ A_{M,3}=a_M^{*2}.
  \end{equation}
  In analogy to Equation (\ref{eq:12}) we can rewrite $H_M$ as a direct sum
  \begin{equation}
    H_M = \bigoplus_j H_M(x_j) \otimes \Bbb{1}
  \end{equation}
  where
  \begin{equation} \label{eq:19}
    H_M(x)=\sum_{j=0}^3c_j B_{M,j}(x) A_j = \sum_{j=0}^3 c_j A_{M,j}(x),
  \end{equation}
  with 
  \begin{equation}
    \begin{array}{ll}
    B_{M,0}(x)=\beta_M(x,{\mathbf {\mathbf N}})\beta_M(x,{\mathbf N}+\Bbb{1}) 
    &B_{M,1}(x)=\beta_M(x,{\mathbf N})^2\\\\
    B_{M,2}(x)=\beta_M(x,{\mathbf N}-\Bbb{1})^2& B_{M,3}(x)=\beta_M(x,{\mathbf
      N}-\Bbb{1})\beta_M(x,{\mathbf N}-2\Bbb{1}) 
  \end{array}
  \end{equation}
\item 
  Finally we can write the unitary $U_{M,t}$ from Equation (\ref{eq:6}) in
  exactly the same way
  \begin{equation}
    U_{M,t} = \bigoplus_j U_{M,t}(x) \otimes \Bbb{1}
  \end{equation}
  with
  \begin{equation} \label{eq:20}
    U_{M,t}(x) = \exp\bigl(-i t H_M(x)\bigr).
  \end{equation}
\item 
  Furthermore we define the multiindex $R\in\{-1,1\}^{d}$, which enables the compact
  notation
  \begin{equation}
      a^R\equiv a_M^{(R_1)}a_M^{(R_2)}\dots a_M^{(R_d)} \quad a_M^{(1)}=a_M^* \quad
      a_M^{(-1)}=a_M
  \end{equation}
  and similarly for the quantities $a^R\in {\mathcal S}({\mathcal
    H}_{\infty})$. In addition we define $|R|\equiv d$ and $w(R)\equiv\sum_i^d
  R_i$.
\item 
  To enable compact notations in some proofs we will write objects describing
  the continuous system sometimes with the subscript $\infty$, i.e.
  \begin{equation} \label{eq:31}
    a_\infty(x) = \sqrt{x}a,\quad a_\infty^*(x) = \sqrt{x}a^*,\quad
    H_\infty(x) = x H,\quad U_{\infty,t}(x) = U_{x,t}, 
  \end{equation}
  where $x$ is as usual in the interval $[0,1]$. 
\end{itemize}

To prove Theorem \ref{thm:1} we have to show that for each $\epsilon > 0$ we
can find an $M_\epsilon$ such that
\begin{equation} \label{eq:14}
  \bigl\lvert \tr(U_{M,t} \rho_M U_{M,t}^* a_M^R) - 
    \tr(U_{\lambda,t} \rho_\infty U_{\lambda,t}^* a^R ) \bigr\rvert < \epsilon
\end{equation}
holds for all $M > M_\epsilon$. Note that we have replaced the polynomials
$f(Q_M,P_M)$ and $f(Q,P)$ from Theorem \ref{thm:1} by monomials $a_M^R$ and
$a^R$ with an arbitrary multiindex $R$. The general statement easily follows
by linearity. To get such an estimate note first that we can rewrite traces of
the form $\tr(U_{M,t} \rho_M U_{M,t}^* a_M^R)$ as
\begin{align}
  \tr(U_{M,t} \rho_M U_{M,t}^* a_M^R) &= \sum_j w_M(j) \tr\bigl(U_{M,t}(x_j)
  R_M(x_j) U_{M,t}^*(x_j) a_M(x_j)\bigr) \\
  &= \int_0^1 \tr\bigl(U_{M,t}(x) R_M(x) U_{M,t}^*(x) a_M(x)\bigr) \mu_M(dx)
\end{align}
where $R_M$ is an integral representation of $\rho_M$ and $\mu_M$ is the
measure define in Equation (\ref{eq:13}). With this we can rewrite the right
hand side of (\ref{eq:14}) as
\begin{multline}
  \bigl\lvert \tr(U_{M,t} \rho_M U_{M,t}^* a_M^R) - 
    \tr(U_{\lambda,t} \rho_\infty U_{\lambda,t}^* a_\infty^R(\lambda) ) \bigr\rvert < \\
    \int_0^1 \bigl\lvert \tr\bigl(U_{M,t}(x) R_M(x) U_{M,t}^*(x) (a_M^R(x) -
    a_\infty^R(x))\bigr) \bigr\rvert \mu_M(dx) + \\ \int_0^1 \bigl\lvert
    \tr\bigl((U_{\lambda,t} \rho_\infty U_{\lambda,t}^* - U_{M,t}(x)
    \rho_\infty U_{M,t}^*(x)) a^R_\infty(x)\bigr)\bigr\rvert \mu_M(dx) \label{eq:9}
\end{multline}
Note that we have used here $a_\infty^*(\lambda)$ and $a_\infty(\lambda)$ from
Equation (\ref{eq:31}). They are creation and annihilation operator belonging to
$Q_\infty$ and $P_\infty$ from Theorem \ref{thm:2} and \ref{thm:1}. 

The discussion of the two integrals on the right hand side is now broken up
into four steps.

\begin{itemize}
\item 
  In Subsection \ref{sec:more-schw-oper} we will collect additional material
  about Schwartz operators, which is used throughout this section.
\item 
  Then we show that the sequences $U_{M,t}^* a_M^R(x) U_{M,t}
  \psi_n$ converge for any Hermite function to $U_{x,t}^* a^R
  U_{x,t} \psi_n$; cf. Subsection \ref{sec:herm-funct-analyt}.
\item 
  In Section \ref{sec:key-estimate} we look at operators $(N + p)^{p/2}
  U_{M,t}^* a_M^R(x) U_{M,t}$ and show that $p \in \Bbb{N}$ can be chosen
  such that the sequence of their norms is uniformly bounded in $x$. This
  allows us to trace convergence of unbounded operators back to convergence of
  bounded operators; cf. Subsection \ref{sec:key-estimate}.
\item 
  The two previous steps together allows us to bound the integrands of the
  integrals in Equation (\ref{eq:9}) and this leads to the required estimate;
  cf. Subsection \ref{sec:proof-main-theorem}
\end{itemize}

\subsection{More on Schwartz operators}
\label{sec:more-schw-oper}

Let us collect some properties of Schwartz operators which are needed for the
proof. The standard reference is \cite{SOP}. The first statement says that
the space $\mathcal{S}(\mathcal{H}_\infty)$ is stable under multiplication with
polynomials in $Q$ and $P$. 

\begin{prop} \label{prop:5}
  Consider a polynomial $f(Q,P)$ in position $Q$ and momentum $P$ and a
  Schwartz operator $\rho$. The products $f(Q,P) \rho$ and $\rho f(Q,P)$ are
  closeable operators and their closures are again Schwartz
  operators\footnote{In slight abuse of language we will shorten in the future
  statements of this form as: ``$\dots$ is a Schwartz operator''.}. The two
  maps 
  \begin{gather}
    \mathcal{S}(\mathcal{H}_\infty) \ni \rho \mapsto f(Q,P) \rho \in
    \mathcal{S}(\mathcal{H}_\infty)\\
    \mathcal{S}(\mathcal{H}_\infty) \ni \rho \mapsto \rho f(Q,P) \in
    \mathcal{S}(\mathcal{H}_\infty) 
  \end{gather}
  defined that way are continuous.
\end{prop}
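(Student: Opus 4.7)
The plan is to compute the defining seminorms of Equation (\ref{eq:16}) directly, using the canonical commutation relation $[Q,P] = i\idty$ to reorder the polynomial factors around $\rho$. By linearity it suffices to treat monomials in $Q$ and $P$, and because the composition of continuous linear maps is continuous, I would further reduce the argument to the four elementary multiplications $\rho \mapsto Q\rho$, $\rho \mapsto P\rho$, $\rho \mapsto \rho Q$, $\rho \mapsto \rho P$. Showing that each of these sends $\mathcal{S}(\mathcal{H}_\infty)$ continuously into itself is then enough.

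The key estimates I would establish are the following. For left multiplication by $Q$ the seminorm shifts trivially, $\|Q\rho\|_{\alpha\alpha'\beta\beta',1} = \|\rho\|_{\alpha\alpha'(\beta+1)\beta',1}$. For left multiplication by $P$ the identity $Q^\beta P = P Q^\beta + i\beta Q^{\beta-1}$ (derived by induction on $\beta$ from the CCR) yields
\begin{equation*}
  \|P\rho\|_{\alpha\alpha'\beta\beta',1} \leq \|\rho\|_{(\alpha+1)\alpha'\beta\beta',1} + \beta\,\|\rho\|_{\alpha\alpha'(\beta-1)\beta',1},
\end{equation*}
and the two right-multiplication cases are analogous, with the commutation being pushed through $Q^{\beta'}$ instead. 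Each such bound simultaneously shows that the output seminorm is finite (so that the product remains in $\mathcal{S}(\mathcal{H}_\infty)$) and that the multiplication map is continuous in the Fr\'echet topology, each output seminorm being controlled by a finite sum of input seminorms. Iterating through the monomials of $f(Q,P)$ then bounds $\|f(Q,P)\rho\|_{\alpha\alpha'\beta\beta',1}$ by a finite sum of seminorms of $\rho$, with coefficients depending only on $f$ and the multi-index $(\alpha,\alpha',\beta,\beta')$.

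Before applying these estimates I would need to justify closeability and pin down the domain of $f(Q,P)\rho$. The natural choice is $\mathcal{S}(\Bbb{R})$, and here I would rely on the fact that every Schwartz operator maps $\mathcal{S}(\Bbb{R})$ into itself --- an immediate consequence of the matrix-element characterization in Proposition~\ref{prop:3}, since $\mathcal{S}(\Bbb{R})$ coincides with the space of vectors whose Hermite coefficients decay faster than any inverse polynomial. Applied to $\rho^*$ (which is again Schwartz), the same fact shows that the formal adjoint $\rho^* f(Q,P)^*$ is densely defined, so $f(Q,P)\rho$ is closeable. The main obstacle I anticipate is identifying the closure of $f(Q,P)\rho$ with exactly the bounded operator whose seminorms I estimate, rather than some proper extension; I expect this to follow from a core argument, using that the finite linear span of Hermite functions is a common invariant core for all of $P^\alpha$, $Q^\beta$, $f(Q,P)$, and $\rho$. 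The symmetric statement for $\rho f(Q,P)$ is handled by applying the same reasoning to adjoints.
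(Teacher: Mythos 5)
Your proof is correct and fills in what the paper leaves implicit: the paper's own ``proof'' is just the one-line remark that this is an easy consequence of the definition, deferring to the reference \cite{SOP}. Your expansion is a reasonable reconstruction of what that one line must mean. The reduction to the four elementary multiplications, the commutation identity $Q^\beta P = PQ^\beta + i\beta Q^{\beta-1}$ (which is correct, given $[Q,P]=i$ on $\mathcal{S}(\Bbb{R})$), and the resulting seminorm bounds such as
\begin{equation*}
  \|P\rho\|_{\alpha\alpha'\beta\beta',1} \leq \|\rho\|_{(\alpha+1)\alpha'\beta\beta',1} + \beta\,\|\rho\|_{\alpha\alpha'(\beta-1)\beta',1}
\end{equation*}
are exactly the kind of estimate that establishes both membership in $\mathcal{S}(\mathcal{H}_\infty)$ and Fr\'echet continuity simultaneously. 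Two small remarks on the domain bookkeeping. First, closeability of $f(Q,P)\rho$ is even more immediate than you suggest: since $\rho^*$ is bounded (and again Schwartz, because $(P^\alpha Q^\beta \rho Q^{\beta'}P^{\alpha'})^* = P^{\alpha'}Q^{\beta'}\rho^* Q^\beta P^\alpha$), the formal adjoint $\rho^* f(Q,P)^*$ is defined on all of $\mathcal{S}(\Bbb{R})$ regardless of whether $\rho^*$ preserves $\mathcal{S}(\Bbb{R})$, so $(f(Q,P)\rho)^*$ is densely defined and the closure exists. Where you genuinely do need the fact that Schwartz operators preserve $\mathcal{S}(\Bbb{R})$ (which indeed follows from Proposition~\ref{prop:3} by the Hermite-coefficient decay estimate you indicate) is in justifying that the commutation identity can be applied inside $P^\alpha Q^\beta f(Q,P)\rho\psi$ for $\psi\in\mathcal{S}(\Bbb{R})$: you must know $\rho\psi$ lies in the common invariant domain of all the polynomial operators. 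Second, your anticipated ``core argument'' reduces to the simple observation that an operator which is bounded on a dense domain has that bounded extension as its closure; since each reshuffled term $P^\gamma Q^\delta \rho Q^{\gamma'}P^{\delta'}$ is trace class (hence bounded) by the definition of a Schwartz operator, the closures match the estimated objects and no proper extension can sneak in.
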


\begin{proof}
  This is an easy consequence of the definition.
\end{proof}

Furthermore the space $\mathcal{S}(\mathcal{H}_\infty)$ is closed under
multiplication. In other words \cite{SOP}

\begin{prop} \label{prop:4}
  The product of two Schwartz operators is again a Schwartz operator.
\end{prop}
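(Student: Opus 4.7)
My plan is to use the matrix-element characterization from Proposition \ref{prop:3}. Fix $\rho, \sigma \in \mathcal{S}(\mathcal{H}_\infty)$. First, $\rho\sigma$ is of trace class, being a product of two trace-class operators, so the only remaining point is to verify the rapid-decay condition on its Hermite matrix elements. Inserting a resolution of the identity yields
\begin{equation}
  \langle \psi_n, \rho\sigma \psi_m\rangle = \sum_{\ell=0}^{\infty} \langle \psi_n, \rho \psi_\ell\rangle \, \langle \psi_\ell, \sigma \psi_m\rangle.
\end{equation}

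The key analytic step is the elementary weight inequality $1+n+m \leq (1+n+\ell)(1+\ell+m)$, valid for all $n,m,\ell \geq 0$, which lets any power $(1+n+m)^k$ be distributed onto the two factors. Applying Proposition \ref{prop:3} to $\rho$ and $\sigma$ gives, for every integer $r$, constants $C_r, C'_r$ with $|\langle\psi_n,\rho\psi_\ell\rangle| \leq C_r (1+n+\ell)^{-r}$ and $|\langle\psi_\ell,\sigma\psi_m\rangle| \leq C'_r (1+\ell+m)^{-r}$. Choosing $r=k+2$, I would conclude
\begin{equation}
  |\langle \psi_n, \rho\sigma \psi_m\rangle|\,(1+n+m)^k \;\leq\; C_{k+2}\,C'_{k+2} \sum_{\ell=0}^{\infty} \frac{1}{(1+n+\ell)^2 (1+\ell+m)^2} \;\leq\; C_{k+2}\,C'_{k+2} \sum_{\ell=0}^{\infty} (1+\ell)^{-4},
\end{equation}
uniformly in $n,m$, and the final series converges. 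Since $k$ was arbitrary, Proposition \ref{prop:3} yields $\rho\sigma \in \mathcal{S}(\mathcal{H}_\infty)$.

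The only real obstacle here is the bookkeeping with the weight inequality and choosing the decay index large enough to make the sum over the intermediate index $\ell$ converge absolutely; no delicate analysis is required. An alternative, more algebraic route would be to factor $P^\alpha Q^\beta \rho\sigma Q^{\beta'} P^{\alpha'} = \bigl(\overline{P^\alpha Q^\beta \rho}\bigr)\bigl(\overline{\sigma Q^{\beta'} P^{\alpha'}}\bigr)$, use Proposition \ref{prop:5} to identify both factors as Schwartz (hence trace-class) operators, and conclude by the fact that a product of two trace-class operators is trace class. This approach is conceptually appealing but requires one to justify that Schwartz operators map $L^2(\mathbb{R})$ into $\mathcal{S}(\mathbb{R})$ in order to identify the operator product with the appropriate closure --- exactly the kind of domain-tracking subtlety that the Hermite-basis estimate sidesteps, which is why I would prefer the matrix-element route.
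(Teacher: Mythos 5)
Your argument is correct, and it is worth noting that the paper does not actually supply a proof of this proposition: it is stated as a fact imported from \cite{SOP}, so there is no internal argument to compare against. Your Hermite-coefficient route is self-contained given Proposition~\ref{prop:3}: $\rho\sigma$ is trace class as a product of trace-class operators, the resolution of the identity is justified because $\rho$ and $\sigma$ are bounded, the weight inequality $1+n+m \leq (1+n+\ell)(1+\ell+m)$ does exactly the right job of splitting the growth factor across the two matrix-element estimates, and taking the decay exponent $k+2$ leaves an $\ell$-summable tail uniformly in $n,m$. One small bookkeeping remark: Proposition~\ref{prop:3} is phrased with the weight $(|n|+|m|)^k$, while you quote the decay estimate with $(1+n+\ell)^{-r}$. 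These are equivalent up to constants (adjust the constant to cover $n=\ell=0$ and use $(n+\ell)^r \geq 2^{-r}(1+n+\ell)^r$ for $n+\ell\geq 1$), but the conversion deserves a half-sentence. Your assessment of the alternative algebraic route is also sound: factoring $P^\alpha Q^\beta \rho\sigma Q^{\beta'}P^{\alpha'}$ into $\bigl(\overline{P^\alpha Q^\beta\rho}\bigr)\bigl(\overline{\sigma Q^{\beta'}P^{\alpha'}}\bigr)$ does require knowing that the range of a Schwartz operator lies in the domain of all $P^\alpha Q^\beta$ (i.e.\ in $\mathcal{S}(\mathbb{R})$), which is an additional input from \cite{SOP} rather than something immediate from the definition, so preferring the matrix-element argument is the right call here.
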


The next result is an alternative characterization of Schwartz operators. To
this end let us first introduce for any $x = (x_1,x_2) \in \Bbb{R}^2$ the Weyl
operator acting on $\psi \in \mathrm{L}^2(\Bbb{R})$ by
\begin{equation}
  (W(x_1,x_2)\psi)(x) = e^{-i x_1 x_2/2}e^{i x_2 x }\psi(x-x_1).
\end{equation}
The $W$ are unitary and for any density operators $\rho$ they give rise to the
\emph{characteristic function} 
\begin{equation}
  \Bbb{R}^2 \ni x \mapsto \hat{\rho}(x) = \tr(\rho W(x)) \in \Bbb{C}
\end{equation}
Now we have \cite{SOP}:

\begin{prop} \label{prop:2}
  A density operator $\rho$ is a Schwartz operator iff its characteristic
  function $\hat{\rho}$ is a Schwartz function.
\end{prop}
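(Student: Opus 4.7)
The plan is to connect operator-level multiplication by $Q$ and $P$ on $\rho$ to analytic operations (differentiation and multiplication by monomials in $x_1,x_2$) on the characteristic function $\hat\rho$, through identities for the Weyl operator $W(x_1,x_2)=e^{i(x_2 Q - x_1 P)}$. Recall that a function on $\Bbb{R}^2$ is Schwartz iff $\sup_x|x_1^m x_2^n \partial^\alpha_x f(x)|<\infty$ for all multi-indices, while a Schwartz operator is characterised by trace-class of $P^\alpha Q^\beta \rho Q^{\beta'}P^{\alpha'}$ for all multi-indices; the whole proof is about translating one into the other.

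First I would establish the relevant Weyl-calculus identities by direct computation from the defining action of $W(x)$ on wavefunctions together with $[Q,P]=i$. This gives $[Q,W(x)]=x_1 W(x)$ and $[P,W(x)]=-x_2 W(x)$ (up to sign conventions), and similarly $\partial_{x_1}W(x)=(-iP-\tfrac{ix_2}{2})W(x)$, $\partial_{x_2}W(x)=(iQ-\tfrac{ix_1}{2})W(x)$, with analogous formulas having $W$ on the right. Iterating these rules, any expression $x^\gamma\partial^\delta_x W(x)$ can be written as a finite sum $\sum_k L_k W(x) R_k$ where $L_k,R_k$ are polynomials in $Q$ and $P$ of degrees bounded by $|\gamma|+|\delta|$.

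For the direction $\rho\in\mathcal{S}(\mathcal{H}_\infty)\Rightarrow\hat\rho\in\mathcal{S}(\Bbb{R}^2)$, I would first justify (via Schwartz-operator domination and dominated convergence applied to the trace-norm seminorms) that the derivatives commute with the trace, obtaining
\begin{equation*}
 x^\gamma\partial^\delta_x\hat\rho(x)=\sum_k\tr\bigl(\rho L_k W(x) R_k\bigr)=\sum_k\tr\bigl(R_k\rho L_k\, W(x)\bigr).
\end{equation*}
By Propositions \ref{prop:5} and \ref{prop:4}, each $R_k\rho L_k$ is again a Schwartz operator and in particular of trace class, and unitarity of $W(x)$ yields the uniform estimate $|x^\gamma\partial^\delta_x\hat\rho(x)|\leq\sum_k\|R_k\rho L_k\|_1$, which is exactly the Schwartz condition on $\hat\rho$.

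For the converse I would use the Weyl inversion formula $\rho=(2\pi)^{-1}\int\hat\rho(x)W(-x)\,dx$ (holding weakly for trace-class $\rho$). Formally sandwiching with $P^\alpha Q^\beta(\,\cdot\,)Q^{\beta'}P^{\alpha'}$ and integrating by parts transfers the unbounded operators onto polynomial factors multiplying $\hat\rho(x)$, and the resulting integrand is absolutely integrable by the Schwartz decay of $\hat\rho$. The main obstacle will be to make this last step rigorous, since we must conclude not just boundedness but trace-class of the resulting operator, and the inverse Weyl integral is only weakly defined. My preferred route is to test against rank-one operators $|\psi_m\rangle\langle\psi_n|$ built from Hermite functions: the matrix elements $\langle\psi_m,W(x)\psi_n\rangle$ are themselves Schwartz (essentially Laguerre-weighted Gaussians), so pairing with Schwartz $\hat\rho$ gives rapid decay of $\langle\psi_m,\rho\psi_n\rangle$, and the same argument applied to $Q,P$-shifted versions of the matrix elements provides the rapid decay required by Proposition \ref{prop:3}, letting us conclude that $\rho\in\mathcal{S}(\mathcal{H}_\infty)$.
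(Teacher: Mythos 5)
The paper does not actually prove Proposition \ref{prop:2}; it simply cites it from \cite{SOP} (then still in preparation), so there is no in-paper argument to compare your proposal against. I will therefore assess the proposal on its own.

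The forward direction is sound and essentially complete. The Weyl-calculus identities you invoke (minor sign details aside; e.g.\ with this paper's convention $\partial_{x_1}W(x)=(-iP+\tfrac{i}{2}x_2)W(x)$, $\partial_{x_2}W(x)=(iQ-\tfrac{i}{2}x_1)W(x)$, $[Q,W(x)]=x_1W(x)$, $[P,W(x)]=x_2W(x)$) do give $x^\gamma\partial_x^\delta W(x)=\sum_k L_k\,W(x)\,R_k$ with $L_k,R_k$ polynomials in $Q,P$ of degree at most $|\gamma|+|\delta|$, and then cyclicity of the trace plus $|\tr(\sigma W(x))|\le\|\sigma\|_1$ gives the Schwartz bounds for $\hat\rho$. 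The justification of differentiating under the trace, which you flag, is indeed the one place to be careful and is handled exactly as you say.

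The converse has a real gap as written. The sentence ``the matrix elements $\langle\psi_m,W(x)\psi_n\rangle$ are themselves Schwartz, so pairing with Schwartz $\hat\rho$ gives rapid decay of $\langle\psi_m,\rho\psi_n\rangle$'' is not a valid inference: the pairing of two Schwartz functions is just a number, and uniform boundedness by $1$ of the matrix elements gives no decay in $(m,n)$ at all. The decay has to come from a quantitative mechanism. The cleanest closing move, which stays entirely inside the toolkit already on the table, is to run the forward-direction identities \emph{in reverse}: the same Weyl relations show that for any polynomials $L,R$ in $Q,P$ the characteristic function of $L\rho R$ is a finite linear combination of $x^\gamma\partial_x^\delta\hat\rho$, hence again Schwartz and in particular in $\Lz(\Bbb{R}^2)$. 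By the quantum Plancherel/Parseval identity, $\hat\sigma\in\Lz$ iff $\sigma$ is Hilbert--Schmidt, so $(\no+\Bbb1)^k\rho(\no+\Bbb1)^{k'}$ is a bounded operator for every $k,k'$. Its matrix element at $(\psi_m,\psi_n)$ is $(m+1)^k(n+1)^{k'}\langle\psi_m,\rho\psi_n\rangle$, so $\sup_{m,n}|\langle\psi_m,\rho\psi_n\rangle|(m+n)^k<\infty$ for all $k$, which is exactly the criterion of Proposition \ref{prop:3}. This is, I believe, what you are gesturing at with ``$Q,P$-shifted versions of the matrix elements,'' but as written the step is not yet an argument; in particular you need the Hilbert--Schmidt/Plancherel input to turn $\Lz$ control of $\widehat{L\rho R}$ into an operator bound, and the Laguerre form of the displacement matrix elements is not needed once you organize the proof this way.
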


Now let us come back to the quadratic Hamiltonians $H$ and $H_M(x)$ defined in
Equation (\ref{eq:15}) and (\ref{eq:19}), and the corresponding time
evolutions $U_{\lambda,t}$ and $U_{M,t}(x)$ from (\ref{eq:8}) and (\ref{eq:20}).

\begin{lem} \label{lem:1}
  For each Schwartz operator $\rho \in \mathcal{S}(\mathcal{H}_\infty)$ and each
  multiindex $R \in \{-1,1\}^d$ the following statements hold:
  \begin{enumerate}
  \item \label{item:5}
    $U_{\lambda,t} \rho U_{\lambda,t}^*$ and $U_{M,t}(x) \rho U_{M,t}^*(x)$ are
    Schwartz operators.
  \item \label{item:6}
    $U_{\lambda,t}^* a^R U_{\lambda,t} \rho$ and $U_{M,t}^*(x) a^R U_{M,t}(x)
    \rho$ are Schwartz operators (and in particular trace class).
  \item \label{item:7}
    We have
    \begin{align}
      \tr\bigl(U_{\lambda,t}^* a^R U_{\lambda,t} \rho\bigr) &= \tr\bigl( a^R
      U_{\lambda,t} \rho U_{\lambda,t}^*\bigl)\\
      \tr\bigl(U_{M,t}^*(x) a^R U_{M,t}(x) \rho\bigr) &= \tr\bigl( a^R
      U_{M,t}(x) \rho U_{M,t}^*(x)\bigr) .
    \end{align}
  \end{enumerate}
\end{lem}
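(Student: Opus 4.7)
The plan is to treat the three assertions in turn, leveraging the special structure of quadratic Hamiltonians on both the continuous and finite-dimensional sides. In the continuous case, for a homogeneous self-adjoint quadratic $H = h(Q,P)$, the commutators $[H,Q]$ and $[H,P]$ are linear in $Q,P$, so differentiating $U_{\lambda,t}^{*}QU_{\lambda,t}$ and $U_{\lambda,t}^{*}PU_{\lambda,t}$ on $\mathcal{S}(\Bbb{R})$ gives a closed linear ODE system with constant symplectic generator whose solutions are real linear combinations of $Q,P$. Consequently $U_{\lambda,t}^{*}P^{\alpha}Q^{\beta}U_{\lambda,t}$ and $U_{\lambda,t}^{*}a^{R}U_{\lambda,t}$ are polynomials in $Q,P$ of total degree $\alpha+\beta$ and $|R|$ respectively (the Bogoliubov/metaplectic fact). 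Using unitary invariance of $\|\cdot\|_{1}$ and inserting $U_{\lambda,t}U_{\lambda,t}^{*}=\Bbb{1}$,
\begin{equation*}
\|P^{\alpha}Q^{\beta}\,U_{\lambda,t}\,\rho\, U_{\lambda,t}^{*}\,Q^{\beta'}P^{\alpha'}\|_{1}=\|f_{1}(Q,P)\,\rho\,f_{2}(Q,P)\|_{1}
\end{equation*}
for polynomials $f_{1},f_{2}$, which is finite by iterated application of Proposition \ref{prop:5}. This settles parts \ref{item:5} and \ref{item:6} for $U_{\lambda,t}$.

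For the finite-dimensional unitaries, the decisive point is that each $\beta_{M}(x,\cdot)$ in $H_{M}(x)$ vanishes once its argument exceeds $M\lambda x$. Consequently $H_{M}(x)\psi_{n}=0$ for all $n$ above some finite cutoff $n_{c}=n_{c}(M,x,\lambda)$, and $H_{M}(x)$ is a bounded self-adjoint operator preserving the finite-dimensional subspace $E\mathcal{H}_{\infty}:=\SP\{\psi_{0},\dots,\psi_{n_{c}}\}$. Hence
\begin{equation*}
U_{M,t}(x)=E\,U_{M,t}(x)\,E+(\Bbb{1}-E),
\end{equation*}
and the same cutoff applies to each $a_{M}^{R_{i}}(x)$ via \eqref{b1}--\eqref{b2}, so $U_{M,t}^{*}(x)\,a_{M}^{R}(x)\,U_{M,t}(x)$ is bounded with both range and support inside $E\mathcal{H}_{\infty}$. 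Parts \ref{item:5} and \ref{item:6} are then verified through the matrix-element criterion of Proposition \ref{prop:3}: expanding $U\rho U^{*}$ via $U=EUE+(\Bbb{1}-E)$ yields one finite-rank block, two cross blocks whose matrix elements $\langle\psi_{n},\cdot\,\psi_{m}\rangle$ force one of $n,m$ to lie in $\{0,\dots,n_{c}\}$ and inherit polynomial decay in the other index from $\rho$, and the residual block $(\Bbb{1}-E)\rho(\Bbb{1}-E)$ whose matrix elements coincide with those of $\rho$ above the cutoff. An analogous case split handles $U^{*}a_{M}^{R}U\rho$.

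Part \ref{item:7} then falls out of cyclicity. Setting $\rho':=U\rho U^{*}$ (Schwartz by part \ref{item:5}) and substituting $\rho=U^{*}\rho'U$,
\begin{equation*}
\tr\bigl(U^{*}a^{R}U\rho\bigr)=\tr\bigl(U^{*}(a^{R}\rho')U\bigr)=\tr(a^{R}\rho')=\tr\bigl(a^{R}U\rho U^{*}\bigr),
\end{equation*}
where the middle equality is the standard identity $\tr(U^{*}XU)=\tr(X)$ for unitary $U$ and trace-class $X$, applied with $X=a^{R}\rho'$, which is trace class by part \ref{item:6}. The main obstacle I expect is the Bogoliubov identification on the continuous side: making the step ``$U_{\lambda,t}^{*}QU_{\lambda,t}$ is a polynomial in $Q,P$'' rigorous requires verifying that $U_{\lambda,t}$ preserves $\mathcal{S}(\Bbb{R})$, so that the Heisenberg ODE for the pair $(U_{\lambda,t}^{*}QU_{\lambda,t},U_{\lambda,t}^{*}PU_{\lambda,t})$ genuinely integrates on a common invariant core. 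This is the classical metaplectic-representation calculation, but it is the only non-bookkeeping step in the proof.
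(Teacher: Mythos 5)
Your proof is correct in substance but takes a genuinely different route from the paper's for the continuous-unitary part. To show $U_{\lambda,t}\rho U_{\lambda,t}^*$ is Schwartz, the paper exploits Proposition~\ref{prop:2}: for quadratic $H$ one has $U_{\lambda,t}^* W(x) U_{\lambda,t}=W(F_t x)$ with a \emph{linear} phase-space map $F_t$, hence $\widehat{U_{\lambda,t}\rho U_{\lambda,t}^*}=\widehat{\rho}\circ F_t$ is a Schwartz function, and the conclusion is immediate. You instead argue via the metaplectic/Bogoliubov fact that $U_{\lambda,t}^*P^\alpha Q^\beta U_{\lambda,t}$ is a polynomial in $Q,P$, insert $U^*U$, and invoke Proposition~\ref{prop:5}. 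Both routes use the same structural input (linear classical flow) but the paper's characteristic-function detour buys exactly what you flag as ``the only non-bookkeeping step'': since the Weyl conjugation formula is a statement about \emph{bounded} operators, it sidesteps the need to establish $U_{\lambda,t}\mathcal{S}(\Bbb{R})\subset\mathcal{S}(\Bbb{R})$ and the associated unbounded-operator domain gymnastics. Your approach requires that invariance to be justified separately (it is standard, but it is extra work). For the finite-dimensional unitaries the two proofs are close: both decompose $U_{M,t}(x)=(\Bbb{1}-E_{j_x})+E_{j_x}U_{M,t}(x)E_{j_x}$; the paper then finishes the four cross-terms with Propositions~\ref{prop:3} and~\ref{prop:4} (products of Schwartz operators are Schwartz), whereas you go directly through the matrix-element criterion of Proposition~\ref{prop:3}, which is equivalent but slightly more hands-on. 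For item~\ref{item:6} the paper uses a single uniform trick that works in both cases: $U^*a^R U\rho = U^*\bigl(a^R\,U\rho U^*\bigr)U$, reducing \ref{item:6} to two applications of \ref{item:5} plus Proposition~\ref{prop:5}; you instead treat the continuous case via polynomial conjugation and gesture at an ``analogous case split'' for the finite case --- note here that the lemma has $a^R$ (not $a_M^R(x)$) sandwiched between the finite unitaries, so the ``range and support inside $E\mathcal{H}_\infty$'' shortcut you invoke for $a_M^R(x)$ does not apply verbatim; your block argument still goes through, but the paper's factorization is cleaner and avoids the slip. Item~\ref{item:7} is identical.
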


\begin{proof}
  To prove the first statement in \ref{item:5} note first that for quadratic
  Hamiltonians the canonical operators $Q_t = U_{t,\lambda}^* Q U_{t,\lambda}$
  and $P_t = U_{t,\lambda}^* P U_{t,\lambda}$ evolve according to the
  corresponding classical equation of motion (this is well known or 
  otherwise an easy exercise). This implies for the Weyl operator that
  $U_{t,\lambda}^* W(x) U_{t,\lambda} = W(F_t x)$ holds with a linear map
  $F_t$ (the classical phase space flow). Hence we get 
  \begin{equation}
    \widehat{(U_{\lambda,t} \rho U_{\lambda,t}^*)}%^{\wedge}
  = \widehat{\rho} \circ F_t,
  \end{equation}
  which implies that $\widehat{(U_{\lambda,t} \rho U_{\lambda,t}^*)}$%^{\wedge}$ 
  is a
  Schwartz function and therefore $U_{\lambda,t} \rho U_{\lambda,t}^*$ is a
  Schwartz operator by Proposition \ref{prop:2}. 

  Now consider the unitary $U_{M,t}(x)$. Since its generator $H_M(x)$
  satisfies $E_{j_x} H_M(x) E_{j_x} = H_M(x)$ with $j_x = \frac{\lfloor M  x
    \rfloor}{2}$ from Equation (\ref{eq:11}) and the projection $E_{j_x}$ 
  defined in (\ref{eq:21}), we can rewrite $U_{M,t}(x)$ as
  \begin{equation}
    U_{M,t}(x) = (\Bbb{1} - E_{j_x}) + \tilde{U}_{M,t}(x), \quad
    \tilde{U}_{M,t}(x) = E_{j_x} U_{M,t}(x) E_{j_x}.
  \end{equation}
  Hence 
  \begin{multline} \label{eq:22}
    U_{M,t}(x) \rho U_{M,t}^*(x) = \\ (\Bbb{1} - E_{j_x}) \rho (\Bbb{1} -
    E_{j_x}) + (\Bbb{1} - E_{j_x}) \rho \tilde{U}_{M,t}^*(x) +
    \tilde{U}_{M,t}(x) \rho (\Bbb{1} - E_{j_x}) + \tilde{U}_{M,t} \rho
    \tilde{U}_{M,t}^*. 
  \end{multline}
  Since $\rho \in \mathcal{S}(\mathcal{H}_\infty)$ by assumption, Proposition
  \ref{prop:3} immediately implies that $(\Bbb{1} - E_{j_x}) \rho (\Bbb{1} -
  E_{j_x})$, $(\Bbb{1} - E_{j_x}) \rho$ and $\rho (\Bbb{1} - E_{j_x})$ are
  all Schwartz operators. The same is true for $\tilde{U}_{M,t}$ and
  $\tilde{U}_{M,t}^*$ (again by Proposition \ref{prop:3}). Hence by
  Proposition \ref{prop:4} all operators on the right hand side of
  (\ref{eq:22}) are Schwartz operators and therefore $U_{M,t}(x) \rho
  U_{M,t}^*(x)$ is a Schwartz operator as well.

  To show statement \ref{item:6} note first that $a^R U_{t,\lambda} \rho
  U_{t,\lambda}^*$ is a Schwartz operator -- this follows from Proposition
  \ref{prop:5} and statement \ref{item:5} above. Applying item \ref{item:5}
  again we see that
  \begin{equation}
    U_{\lambda,t}^* a^R U_{\lambda,t} \rho = U_{\lambda,t}^* (a^R U_{t,\lambda} \rho
    U_{t,\lambda}^*) U_{t,\lambda} 
  \end{equation}
  is a Schwartz operator as well. Similar reasoning holds for $U_{M,t}^*(x)
  a^R U_{M,t}(x) \rho$. 

  The last statement is an easy consequence of item \ref{item:6} and elementary
  properties of the trace.
 \end{proof}

\subsection{Hermite functions and analytic vectors for quadratic Hamiltonians}
\label{sec:herm-funct-analyt}

The subject of this section is convergence of sequences $a^S U_{M,t}(x)
\psi_n$, $M \in {\mathbb N}$ to $a^S U_{x,t} \psi_n$. This will lead to a first partial
result in Proposition \ref{prop:6} stating that the $U_{M,t}(x)$ converge
strongly to $U_{x,t}$. The main technical tool will be the fact that quadratic
Hamiltonians admit (finite) linear combinations of Hermite functions as
analytic vectors (cf. \cite[Ch. 5]{Folland}). The first step is a general
estimate. 

\begin{lem}
  \label{csek} Consider a Hermite function $\psi_n$ and a multiindex
  $S\in\{-1,1\}^{2d}$.  Then the following bound holds for all $M \in \Bbb{N}
  \cup \{\infty\}$ and $x \in [0,1]$. 
  \begin{equation}
    \|a^S H_M(x) \psi_n\|\leq 2^{3d+\frac{n}{2}}d!m!(32 \max
    |c_j|)^m\label{18}
  \end{equation}
  with the maximum $\max|c_j|$ taken over the $j=0,\dots,3$.
\end{lem}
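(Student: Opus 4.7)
The approach is direct computation using the explicit action of creation and annihilation operators on Hermite functions, combined with the fact that every $\beta_M$-factor is a contraction. The key input from functional calculus is that $\beta_M(x,\mathbf{N}+k\Bbb{1})$ has operator norm at most $1$ for all $M$, $x$ and integer $k$: indeed, by definition $\beta_M(x,n)=\sqrt{(x-n/(M\lambda))\chi_{[0,1]}(x-n/(M\lambda))}\in[0,1]$, and functional calculus preserves the range. For $M=\infty$ the analogous role is played by $\sqrt{x}\leq 1$, since $a_\infty(x)=\sqrt{x}\,a$, $a_\infty^*(x)=\sqrt{x}\,a^*$, $H_\infty(x)=xH$.

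Next I would analyze $H_M(x)\psi_n$ term by term. Writing $H_M(x)=\sum_{k=0}^{3}c_k B_{M,k}(x)A_k$ with $A_k\in\{a^2,aa^*,a^*a,a^{*2}\}$, each $A_k$ sends $\psi_n$ to a scalar multiple of $\psi_{n-2}$, $\psi_n$ or $\psi_{n+2}$ with coefficient of modulus at most $n+2$. Since $B_{M,k}(x)$ is diagonal in the Hermite basis with entries bounded by $1$, the vector $H_M(x)\psi_n$ is a linear combination of at most three Hermite functions with total norm at most $4(n+2)\max_k|c_k|$. Iterating this observation (which is the natural reading of the bound, given the $m!\,(32\max|c_j|)^m$ factor), $H_M(x)^m\psi_n$ expands into at most $4^m$ terms indexed by $\psi_{n+2s}$ with $|s|\leq m$, each carrying a product of $m$ factors of the form $(n+O(m))\max_k|c_k|$; telescoping yields a bound of the shape $(C\max_k|c_k|)^m\sqrt{(n+2m)!/n!}$.

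Then I would apply $a^S$ with $|S|=2d$. Since $a\psi_j=\sqrt{j}\,\psi_{j-1}$ and $a^*\psi_j=\sqrt{j+1}\,\psi_{j+1}$, any product of $2d$ creation/annihilation operators applied to $\psi_j$ is a single scaled Hermite function with coefficient bounded by $\sqrt{(j+1)(j+2)\cdots(j+2d)}\leq\sqrt{(j+2d)!/j!}$. Combining this with the output of the previous step, and using a standard factorial inequality of the form $(n+2d+2m)!\leq 2^{n+2d+2m}\,n!\,(2d)!\,(2m)!$ together with $(2k)!\leq 4^k(k!)^2$, separates the bound into the required factors $2^{n/2}$, $2^{3d}$, $d!$, $m!$ and $(32\max_k|c_j|)^m$.

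The conceptually easy part is that uniformity in $M$ and $x$ drops out essentially for free from the operator-norm bound $\|\beta_M(x,\mathbf{N}+k)\|\leq 1$, so the $M=\infty$ case is handled by the same computation with $\sqrt{x}\leq 1$ in place of $\beta_M$. The main obstacle is not conceptual but combinatorial: one has to repackage the various square roots of factorials coming from $a^S$ and from the repeated application of $H_M(x)$ into the clean, fully factored form asserted in the lemma, which forces one to be somewhat careful with Stirling-type estimates and with the book-keeping of which $\beta_M(x,\mathbf{N}+k)$ sits between which pair of ladder operators.
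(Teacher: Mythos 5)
Your proposal matches the paper's argument in all essentials: reduce to the worst case (effectively $M=\infty$, $x=1$) by noting that every $\beta_M$-factor is a contraction, expand $H^m$ into $4^m$ monomials indexed by $\mathbf{j}\in\{0,1,2,3\}^m$, bound the action of a ladder-operator monomial on a Hermite function, and finish with factorial inequalities. The paper organizes the third step slightly more efficiently than you do: it observes that $a^S A_{\mathbf{j}}$ is itself a single monomial $a^{R'}$ with $|R'|=2(d+m)$, applies the bound $\|a^{R'}\psi_n\|\leq 2^{n/2}4^{d+m}(d+m)!$ once, and then peels off the $d!\,m!$ factor by one more use of $(p+q)!/(p!q!)\leq 2^{p+q}$ with $p=m$, $q=d$; your plan estimates $H^m\psi_n$ and then $a^S(\cdot)$ in two separate stages, which produces exactly the extra book-keeping you already flagged as the main obstacle. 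One arithmetic slip to fix in a detailed write-up: the inequality you invoke, $(n+2d+2m)!\leq 2^{n+2d+2m}\,n!\,(2d)!\,(2m)!$, is not correct as stated; iterating $(p+q)!/(p!q!)\leq 2^{p+q}$ yields the exponent $n+4d+4m$, not $n+2d+2m$. This only changes constants and does not affect the structure, but it does mean the specific prefactors $2^{3d}$ and $32$ in the lemma would not come out of your two-stage bookkeeping without some care, whereas the paper's single-monomial route lands on them directly.
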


\begin{proof}
  It is sufficient to show Equation (\ref{18}) for the Hamiltonian $H =
  H_\infty(1)$ instead of $H_M(x)$ since all bounds hold also when we
  substitute $a_M(x)^R$ for $a^R$. The former differs from the latter only by
  a prefactor between $0$ and $1$. Now consider the bound 
  \begin{equation} 
    [\|a^R\psi_n\|\leq 2^{\frac{n}{2}}4^d d!\ .\label{bound} 
  \end{equation}
  \begin{equation}
    \|a^R\psi_n\|\leq \sqrt{(n+1)(n+2)\cdots
      (n+2d)}\leq\sqrt{2^{n+2d}}\sqrt{(2d!)}\leq 2^{\frac{n}{2}+d}2^d
    d!=2^{\frac{n}{2}}4^d d!\ ,
  \end{equation} 
  where to see the first inequality we recall
  \begin{equation}
    a^*\psi_n=\sqrt{n+1}\psi_{n+1},\;a\psi_n=\sqrt{n}\psi_{n-1}\ ,
  \end{equation} 
  from which the inequality is clear as it is an equality exactly in the worst
  case of $a^R=(a^*)^{2d}$. For the next two inequalities we used
  \begin{equation}
    \frac{(p+q)!}{p!q!}\leq 2^{p+q}\ ,\label{bintr}
  \end{equation}
  (which holds since the lhs. is a term in the binomial expansion of the rhs.)
  with $p=n$, $q=2d$ and $p=q=d$, respectively.  

  Let us now introduce the multiindex ${\bf j}\in\{0,1,2,3\}^m$ to enable the
  compact notation
  \begin{equation}
    H^m\equiv \sum_{\bf j}c_{\bf j} A_{\bf j}\;\mbox{with}\; x_{\bf
      j}\equiv x_{j_1}x_{j_2}\dots x_{j_m}
  \end{equation}
  Now, taking into account that 
  \begin{equation}
      |c_{\mathbf{j}}| \leq |\max_{j=0,\dots,3} c_j|^m \quad \forall \mathbf{j};
  \end{equation}
  and that $A_{\bf j}=a^R$ for each ${\bf j}$ with an $R$ satisfying $|R|=2m$,
  thus 
  \begin{equation}
    a^S A_{\bf j}=a^{R'}\ \text{with}\ |R'|=2(d+m)
  \end{equation}
  we can calculate
  \begin{align}
    \|a^S H_M(x)^m\psi_n\| &\leq  \sum_{\bf j} |c_{\bf j}| \|a^S A_{\bf j}
    \psi_n\| \\
    & \leq 4^m(\max |c_j|)^m 2^{\frac{n}{2}}4^{d+m}(d+m)!\\
    & \leq (32 \max(|c_j|))^m 2^{3d+\frac{n}{2}}m!d!  \ , 
  \end{align}
  where in the second inequality we inserted the bound (\ref{bound}) for
  $|R'|=2(m+d)$. The third inequality is due to the application of
  (\ref{bintr}) with $p=m, q=d$.
\end{proof}

\begin{lem}
  The bound in (\ref{18}) also holds if $|s|=2d-1$
\end{lem}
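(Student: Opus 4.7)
The plan is to re-run the argument of Lemma \ref{csek} essentially verbatim, adapting only a single inequality. Recall that the parity assumption $|S|=2d$ was used at exactly one step of that proof: the elementary estimate
\[
\|a^R\psi_n\|\leq 2^{n/2}4^d d! \quad \text{for } |R|=2d,
\]
which was obtained by observing $\|a^R\psi_n\|\leq\sqrt{(n+1)(n+2)\cdots(n+|R|)}$ (the worst case being $a^R=(a^*)^{|R|}$) and then applying $\binom{n+2d}{2d}\leq 2^{n+2d}$ together with $\sqrt{(2d)!}\leq 2^d d!$.

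The key observation is that $\sqrt{(n+1)(n+2)\cdots(n+k)}$ is monotonically non-decreasing in $k$, because every additional factor is $\geq 1$. Consequently, for any multiindex $R$ with $|R|=2d-1$ one still has
\[
\|a^R\psi_n\|\leq\sqrt{(n+1)(n+2)\cdots(n+(2d-1))}\leq\sqrt{(n+1)(n+2)\cdots(n+2d)}\leq 2^{n/2}4^d d!,
\]
i.e.\ exactly the same bound as in the even case.

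I would then apply this observation with $d$ replaced by $d+m$ to the composite index $a^S A_{\mathbf{j}}$ of length $|S|+2m=2(d+m)-1$, which yields $\|a^S A_{\mathbf{j}}\psi_n\|\leq 2^{n/2}4^{d+m}(d+m)!$ — precisely the ingredient used in the crucial step of the preceding lemma. Running the remaining calculation without change (summation over the $4^m$ choices of $\mathbf{j}$, the bound $|c_{\mathbf{j}}|\leq(\max|c_j|)^m$, and $(d+m)!\leq 2^{d+m}d!m!$), and noting as before that the passage from $H_\infty(1)$ to $H_M(x)$ introduces only prefactors in $[0,1]$ coming from the coefficients $B_{M,j}(x)$, produces the same final bound $2^{3d+n/2}d!m!(32\max|c_j|)^m$. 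I do not expect any essential obstacle here: the lemma is really just the remark that the preceding proof is driven by an estimate that is monotone in the length of the operator product.
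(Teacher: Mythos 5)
Your proof is correct, but it does not follow the paper's route. The paper gives a one-line reduction: prepend an extra annihilator $a$ to $a^{S}$ (so that $a\,a^{S}$ has even length $2d$), invoke the already-proved even case for $\|a\,a^{S}H_M(x)^m\psi_n\|$, and claim the ``obvious'' inequality $\|a\,a^{S}H_M(x)^m\psi_n\|\geq\|a^{S}H_M(x)^m\psi_n\|$. Your proof instead re-runs the computation of Lemma~\ref{csek} and observes that the elementary estimate $\|a^{R}\psi_n\|\leq\sqrt{(n+1)\cdots(n+|R|)}\leq 2^{n/2}4^{d}d!$ is monotone in the length of the product, so an odd-length $|R|=2d-1$ is absorbed by the even-length bound with the same $d$; everything downstream is then identical. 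Your route is slightly longer but more robust: the inequality the paper appeals to amounts to $\langle\phi,(\mathbf{N}-\Bbb{1})\phi\rangle\geq 0$ for $\phi=a^{S}H_M(x)^m\psi_n$, which can in fact fail when $\phi$ has a nonzero overlap with $\psi_0$ (e.g.\ for odd $n$, since $H_M(x)^m$ is parity-preserving and $a^{S}$ with $|S|$ odd flips parity, so $\phi$ lives on the even sector including $\psi_0$). Your monotonicity argument sidesteps this issue entirely, so it is the cleaner of the two; it buys nothing else, since the resulting constant is exactly the one in (\ref{18}).
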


\begin{proof}
  We obviously have
  \begin{equation}
    \|a\,a^S H_M(x)^m \psi_n\|\geq \|a^S \tilde{H}^m \psi_n\|\ ,
  \end{equation}
  so the statement follows from Lemma \ref{csek}.
\end{proof}

\begin{lem} \label{lem:2}
  For each $M, n \in \Bbb{N}$ and $x \in [0,1]$ the following
  bound holds
  \begin{equation} 
    |\sqrt{x} - \beta_M(x,n)| \leq \sqrt{\frac{n}{M}}
  \end{equation}
\end{lem}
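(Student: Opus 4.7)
The plan is to reduce the claim to the elementary one-variable bound $\sqrt{a}-\sqrt{b}\leq \sqrt{a-b}$ (valid for $a\geq b\geq 0$, proved by squaring both sides and using $2\sqrt{ab}\geq 2b$) applied to the shifted argument $y := x - n/(M\lambda)$, and then to handle the regime where $y$ falls outside $[0,1]$ separately via the characteristic function in the definition of $\theta_1$.

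Unfolding the definition, $\theta_1(y)=\sqrt{y}$ when $y\in[0,1]$ and $\theta_1(y)=0$ otherwise. Because $x\leq 1$ and $n/(M\lambda)\geq 0$, the shift $y$ automatically satisfies $y\leq 1$, so only the sign of $y$ determines the branch. In the case $y\geq 0$ one has $\beta_M(x,n)=\sqrt{x-n/(M\lambda)}$, and
\begin{equation}
|\sqrt{x}-\beta_M(x,n)| = \sqrt{x} - \sqrt{x - n/(M\lambda)} \leq \sqrt{n/(M\lambda)}
\end{equation}
by the elementary inequality. In the case $y<0$ one has $\beta_M(x,n)=0$, whence
\begin{equation}
|\sqrt{x}-\beta_M(x,n)| = \sqrt{x} < \sqrt{n/(M\lambda)}
\end{equation}
directly from $x<n/(M\lambda)$.

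There is no substantive obstacle; the argument is essentially a two-line application of standard square-root estimates, and both cases glue together without any boundary issue since $\beta_M(\cdot,n)$ is continuous at $y=0$. The only subtlety worth flagging is the mismatch between the bound derived above, $\sqrt{n/(M\lambda)}$, and the stated bound $\sqrt{n/M}$: since $\lambda\in(0,1]$ the former is at least as large as the latter, so either the $\lambda$-factor is intended to be absorbed into a convention in the statement of the lemma, or the shift in the definition of $\beta_M$ should be read as $n/M$ (in which case the argument above applies verbatim and produces the stated bound exactly).
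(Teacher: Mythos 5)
Your proof is correct and follows essentially the same route as the paper's: both split on whether the shifted argument is negative (giving $\beta_M=0$ and the bound $\sqrt{x}\leq\sqrt{n/M}$ directly) or nonnegative, and then bound $\sqrt{x}-\sqrt{x-n/M}$; the paper does this last step by checking that the function is decreasing in $x$ and evaluating at the left endpoint, whereas you invoke the algebraic inequality $\sqrt{a}-\sqrt{b}\leq\sqrt{a-b}$, which is the same estimate dressed differently. Your flag about the $\lambda$ is a genuine catch: Equation~(\ref{eq:23}) writes the shift as $n/(M\lambda)$, but the paper's proof of this lemma silently restates the definition with shift $n/M$, and it is only under that reading that the claimed bound $\sqrt{n/M}$ (rather than $\sqrt{n/(M\lambda)}$) comes out; so the resolution is exactly the second alternative you propose.
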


\begin{proof}
  Recall the definition of $\beta$ (cf. Equation (\ref{eq:23}):
  \begin{equation}
    \beta: \mathbb{N}^2\times [0,1]\to [0,1]: (M,n,x)\mapsto \beta_{M}(x,n) =
    \theta_1(x -\frac{n}{M})\ \text{with}\ \theta_1(x)=\sqrt{x\chi_{[0,1]}(x)}.    
  \end{equation}
  If $x < n/M$ the argument of the root is negative and we
  get $\beta_M(x,n) = 0$, hence 
  \begin{equation}
    |\sqrt{x} - \beta_M(x,n)| = \sqrt{x} \leq \sqrt{\frac{n}{M}}
  \end{equation}
  as stated. If $x \in [n/M,1]$ we can check that the function 
  \begin{equation}
    [n/M,1] \ni x \mapsto \sqrt{x} - \sqrt{x - \frac{n}{M}} \in \Bbb{R}
  \end{equation}
  is positive and monotonically decreasing (the latter can be easily seen by
  looking at the derivative). Hence the biggest value is achieved at the
  beginning of the interval (i.e. $x= n/M$) and we get again the statement of
  the lemma. 
\end{proof}

\begin{lem}
  \label{whichone} For any pair of multiindices $S,R$ and $n\in\mathbb{N}$ we
  have
  \begin{equation}
    \lim_{M\to\infty}\|a^S(a^R_M(x)- x^{|R|/2}a^R)\psi_n\|=0
  \end{equation}
  uniformly in $x$.
\end{lem}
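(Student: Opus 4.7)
The argument is a telescoping expansion reduced to the pointwise bound of Lemma \ref{lem:2}. Write $d=|R|$ and set $A_j=a_M^{(R_j)}(x)$, $B_j=\sqrt{x}\,a^{(R_j)}$, so that $a_M^R(x)=A_1\cdots A_d$ and $x^{d/2}a^R=B_1\cdots B_d$. The standard identity
\begin{equation}
A_1\cdots A_d-B_1\cdots B_d=\sum_{i=1}^{d}(A_1\cdots A_{i-1})(A_i-B_i)(B_{i+1}\cdots B_d)
\end{equation}
reduces the estimate to bounding each of the $d$ terms individually, then multiplying by $a^S$.

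Using (\ref{b1})--(\ref{b2}), each single-factor difference factors as
\begin{equation}
A_i-B_i=a^{(R_i)}\bigl(\beta_M(x,\mathbf{N}+k_i)-\sqrt{x}\,\Bbb{1}\bigr),\qquad k_i\in\{0,-1\},
\end{equation}
i.e. a bounded function of $\mathbf{N}$ placed to the right of a single creation or annihilation operator. When applied to $\psi_n$, the product $B_{i+1}\cdots B_d$ yields a scalar multiple of a single Hermite function $\psi_{m_i}$ with $|m_i-n|\leq d$, and the scalar is bounded in modulus by $(n+d+1)^{(d-i)/2}$ since $\sqrt{x}\in[0,1]$. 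The middle factor then acts as multiplication by $g_i(m_i):=\beta_M(x,m_i+k_i)-\sqrt{x}$, and by Lemma \ref{lem:2}
\begin{equation}
|g_i(m_i)|\leq \sqrt{(m_i+k_i)/M}\leq \sqrt{(n+d)/M},
\end{equation}
\emph{uniformly in} $x\in[0,1]$. This is where the decay rate comes from.

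After applying $a^{(R_i)}$ and then $A_{i-1},\dots,A_1$ in turn, the vector remains a scalar multiple of a single Hermite function $\psi_{n'_i}$ with $|n'_i-n|\leq d$, whose prefactor is bounded by a constant depending only on $n$ and $d$. The reason is that the definition of $\theta_1$ forces $\beta_M(x,\cdot)\in[0,1]$, so each $A_j$ has the form (scalar in $[0,1]$)$\cdot a^{(R_j)}$ on every Hermite vector, giving $\|A_j\psi_k\|\leq\sqrt{k+1}$. Finally, multiplying by $a^S$ and invoking the bound $\|a^S\psi_k\|\leq 2^{k/2}4^{|S|/2}(|S|/2)!$ established in the proof of Lemma \ref{csek} (together with the analogue for odd $|S|$), one obtains a bound of the form
\begin{equation}
\|a^S(a_M^R(x)-x^{|R|/2}a^R)\psi_n\|\leq C(n,|R|,|S|)\sqrt{(n+|R|)/M},
\end{equation}
with $C(n,|R|,|S|)$ independent of $M$ and $x$, yielding the desired uniform convergence.

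\textbf{Main obstacle.} The substantive work is purely bookkeeping: keeping track of which Hermite vectors appear at each intermediate step of the telescope and checking that all scalar coefficients produced by intervening $\beta_M$'s stay bounded independently of $x$ and $M$. Once this is organized and combined with the $\sqrt{n/M}$ estimate of Lemma \ref{lem:2}, the uniform decay of every summand is automatic.
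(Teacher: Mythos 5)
Your proof is correct and follows essentially the same route as the paper: exploit the fact that $a_M^R(x)$ and $x^{|R|/2}a^R$ act on a single Hermite function to produce scalar multiples of a single Hermite function, and reduce the decay to the uniform bound $|\beta_M(x,n+p)-\sqrt{x}|\leq\sqrt{(n+p)/M}$ from Lemma \ref{lem:2}. Where the paper writes a somewhat schematic ``product'' formula (\ref{schem}) for the difference, you supply the clean telescoping identity $A_1\cdots A_d-B_1\cdots B_d=\sum_i A_1\cdots A_{i-1}(A_i-B_i)B_{i+1}\cdots B_d$ to make the bookkeeping explicit, which is a genuine improvement in rigor but not a different method.
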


\begin{proof}
  Note first that for $r\in\{-1,1\}$
  \begin{equation} 
    a^{(r)}_M\psi_n-\xpl a^{(r)}\psi_n=\left[\beta_{M}\left(x,n+\frac{r-1}{2}\right)-\xpl\right]
    \sqrt{n+\frac{r+1}{2}}\psi_{n+r}\label{onet}
  \end{equation}
  The general case can be schematically written as
  \begin{equation}
    a^S(a^R_M(x)-x^{|R|/2}a^R)\psi_n\sim\prod_{j=1}^d\sqrt{n+p_j}\left(
      \beta_{M}(x,n+p'_j)-\xpl\right)a^S\psi_{n+q}\label{schem}
  \end{equation}
  with $d=|R|$ and $p_i,p'_i, q\in\mathbb{N}$ can be easily written in terms
  of $w_j(R)\equiv\sum_{i=1}^j r_j$ with $R=(r_d,r_{d-1},\dots,r_1)$. Note,
  that the omitted factor of proportionality is $M$-independent. Now we have
  according to Lemma \ref{lem:2} for any $p\in\mathbb{N}$
  \begin{equation}
    \left\lvert \sqrt{x} - \beta_M(x,n+p)\right\rvert \leq \sqrt{\frac{n+p}{M}}.
  \end{equation}
  Hence $\lim_{M\rightarrow\infty} \beta_M(x,n+p) = \sqrt{x}$ uniformly, which
  proves the statement.
\end{proof}

Now, we can prove convergence of $U_{M,t}(x) \psi_n$ for small times.

\begin{lem} 
  \label{teft}
  Let $x\in [0,1], n\in{\mathbb N}$ and $|t|<(32\max|c_j|)^{-1}$. Then 
  \begin{equation} \label{eq:25}
    \lim_{M\to
      \infty}\|a^SU_{t,M}(x)\psi_n-a^SU_{t,x}\psi_n\|=0
  \end{equation}
  uniformly in $x$.
\end{lem}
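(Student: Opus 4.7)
The natural approach is to expand both unitaries into Taylor series in $t$ and reduce the question to a term-by-term comparison of $H_M(x)^m$ with $H_\infty(x)^m = x^m H^m$. First, I would observe that Lemma \ref{csek} gives, for $|S|=2d$, the bound
\[ \|a^S H_M(x)^m \psi_n\| \leq 2^{3d+n/2}\, d!\, m!\, (32\max|c_j|)^m, \]
which is uniform in $M \in \Bbb{N}\cup\{\infty\}$ and $x\in[0,1]$. Consequently the formal Taylor series
\[ a^S U_{M,t}(x)\psi_n = \sum_{m=0}^\infty \frac{(-it)^m}{m!}\, a^S H_M(x)^m \psi_n \]
(and the analogous expression with $H_\infty(x)$ and $U_{x,t}$) is majorised termwise by the geometric series $2^{3d+n/2}\,d!\,\sum_m (32\max|c_j|\cdot |t|)^m$, which converges for $|t| < (32\max|c_j|)^{-1}$. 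This gives absolute norm convergence uniformly in $M$ and $x$, so the limit $M\to\infty$ may be exchanged with the sum, and it suffices to prove, for each fixed $m$, that $\|a^S(H_M(x)^m - H_\infty(x)^m)\psi_n\|\to 0$ uniformly in $x$.

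For the term-by-term convergence I plan to use the telescoping identity
\[ H_M(x)^m - H_\infty(x)^m \;=\; \sum_{k=0}^{m-1} H_M(x)^{m-1-k}\,\bigl(H_M(x) - H_\infty(x)\bigr)\, H_\infty(x)^k \]
and estimate each summand separately. The vector $H_\infty(x)^k\psi_n = x^k H^k\psi_n$ is a finite linear combination of Hermite functions whose coefficients are polynomial in $x$, hence uniformly bounded on $[0,1]$. Since $H_M(x) - H_\infty(x) = \sum_l c_l\,(a_M^{R_l}(x) - x\,a^{R_l})$ with $|R_l|=2$, applying Lemma \ref{whichone} with an empty left multiindex shows that each $(H_M(x) - H_\infty(x))\psi_{n+j}$ is a finite linear combination of Hermite functions whose coefficients vanish uniformly in $x$ as $M\to\infty$. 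Finally, inserting $a^S H_M(x)^{m-1-k}$ in front of each resulting $\psi_{n+j+i}$ and invoking Lemma \ref{csek} once more furnishes a uniform-in-$M,x$ norm bound on $\|a^S H_M(x)^{m-1-k}\psi_{n+j+i}\|$. Multiplying by the coefficient that vanishes uniformly in $x$ forces the whole summand to $0$ uniformly, and the finite sum over $k$ closes the estimate.

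The main subtlety is that uniformity in $x$ must be preserved at every step, especially after the unbounded operator $a^S$ is placed in front of the potentially high power $H_M(x)^{m-1-k}$; this is precisely what Lemma \ref{csek} was built for, since it absorbs both $a^S$ and the remaining power into a single $M$- and $x$-independent constant. Once this uniform control is combined with the uniform-in-$x$ vanishing provided by Lemma \ref{whichone}, the conclusion (\ref{eq:25}) follows by dominated convergence in the Taylor series. No new technical machinery beyond Lemmas \ref{csek}, \ref{lem:2} and \ref{whichone} should be required.
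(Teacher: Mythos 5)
Your proposal is correct and shares the paper's overall architecture: split the Taylor series of $a^S U_{M,t}(x)\psi_n$ and $a^S U_{x,t}\psi_n$ into a uniformly controlled tail (via Lemma \ref{csek}, exploiting $q=32|t|\max|c_j|<1$) plus a finite head, then drive the finite head to zero term by term using Lemma \ref{whichone}. The genuine difference lies in how the finite head is treated. The paper expands $H_M(x)^m - x^m H^m$ directly into a sum over monomials $a_M^R(x) - x^{|R|/2}a^R$ with $|R|=2m$ and applies Lemma \ref{whichone} at that full degree; you instead telescope $H_M(x)^m - H_\infty(x)^m = \sum_{k} H_M(x)^{m-1-k}(H_M(x)-H_\infty(x))H_\infty(x)^k$, so that Lemma \ref{whichone} is only needed at degree two (with empty left multiindex), and Lemma \ref{csek} absorbs the unbounded $a^S H_M(x)^{m-1-k}$ uniformly in $M$ and $x$. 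Both close the argument; the paper's route is a slightly shorter application of the already-general Lemma \ref{whichone}, while your telescoping isolates the single perturbation $H_M - H_\infty$ and makes the $\varepsilon$-bookkeeping transparent. One stylistic caution: the phrase ``exchange the limit with the sum by dominated convergence'' should really be unpacked as the standard $\varepsilon/3$ argument (truncate at $\mu_\varepsilon$ independent of $M,x$, bound both tails by the geometric remainder, then handle the finite head), which is precisely what the paper does; as written it is correct in spirit but would need that explicit truncation to be rigorous, especially to preserve uniformity in $x$.
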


\begin{proof}
  Note first that the expressions in Equation (\ref{eq:25}) are well defined,
  since $U_{M,t}(x) \psi_n$ and $U_{x,t}\psi_n$ are Schwartz functions (this
  follows from Lemma \ref{lem:1}) and therefore in the domain of $a^S$. 

  Now proceed with $q\equiv 32|t|\max|c_j|<1$ which implies
  \begin{equation} 
    \|a^S\frac{(itH)^m}{m!}\psi_n\|\leq 2^{3d-\frac{n}{2}}d!q^m\
    .\label{gser}
  \end{equation}
  The sum over $m\in{\mathbb N}$ is then convergent since the rhs. of
  (\ref{gser}) is also (since $q < 1$). Hence we can write $U_{x,t} \psi_n$ in
  terms of the exponential series
  \begin{equation}
    U_{x,t}\psi_n = \exp(-itx H)\psi_n = \sum_{n=0}^\infty
    \frac{(-itxH)^m \psi_n}{m!}. 
  \end{equation}
  Now note that $U_{x,t} \psi_n \in \mathcal{S}(\Bbb{R})$ and therefore $a^S
  U_{x,t} \psi_n$ is well defined. Hence we get
  \begin{equation}
    a^S U_{x,t}\psi_n = a^S \exp(-itx H)\psi_n = \sum_{n=0}^\infty 
    \frac{a^S (-itxH)^m \psi_n}{m!}.
  \end{equation}
  The last equality holds (although $a^S$ is unbounded) because $a^S$ is
  closable and convergence of the series is known from (\ref{gser}).
  
  Now, consider
  \begin{multline} \label{eq:24}
    \|a^S (U_{M,t}(x) - U_{x,t}) \psi_n\| = \| \exp(itH_M(x))-\exp(itx
    H)]\psi_n\| \leq \\
    \left\|\sum_{m>\mu}\frac{a^S(itH_M(x))^m}{m!}\psi_n\right\|
    + \left\|\sum_{m>\mu}\frac{a^S(itxH)^m}{m!}\psi_n\right\| + \\
    \left\|\sum_{m\leq
        \mu}a^S\frac{(itH_M(x))^m-(itxH)^m}{m!}\psi_n\right\|
  \end{multline}
  Now, for all $\epsilon>0$ there is $\mu_\epsilon\in\mathbb{N}$
  (independent of $M$ and $x$) such that both the first two terms are smaller
  than $\epsilon/3$, since they are bounded by the remainder of the geometric
  series c.f.  (\ref{gser}) and the remark after it. For the third term we
  write
  \begin{align}
    \left\|\sum_{m\leq\mu_\epsilon}a^S\frac{(itH_M(x))^m-(itx
        H)^m}{m!}\psi_n\right\| & \leq
    \sum_{m\leq\mu_\epsilon}\left\|a^S\left(H_M(x)^m-x^m
        H^m\right)\psi_n\right\|\frac{|t|^m}{m!}\\
    & \leq \sum_{m\leq\mu_\epsilon} \sum_{R\in\{-1,1\}^{2m}}\left\|a^S\left(a^R_M(x)-x^{|R|/2}
        a^R\right)\psi_n\right\|\frac{|t|^m}{m!}\ . 
  \end{align}
  The last expression contains finite number of terms, each of which converges
  (uniformly in $x$) to zero when $M\to \infty$ due to Lemma (\ref{whichone}),
  hence for $M$ large enough the last term in (\ref{eq:24}) is smaller
  than $\epsilon/3$ as well.
\end{proof}

This lemma leads to the following strong convergence result, which is a first
step towards Theorem \ref{thm:1}.

\begin{prop} \label{prop:6}
  \label{strc} For any fixed $\eta \in \mathcal{H}_\infty$ 
  \begin{equation}
    \lim_{M \rightarrow \infty} \| (U_{M,t}(x) - U_{x,t})\eta \| = 0
  \end{equation}
  holds uniformly in $x$.
\end{prop}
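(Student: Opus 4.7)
The plan is to bootstrap the convergence already established for Hermite functions in Lemma \ref{teft} to arbitrary $\eta \in \mathcal{H}_\infty$ through a standard density plus $\epsilon/3$ argument, using the fact that both $U_{M,t}(x)$ and $U_{x,t}$ are contractions on $\mathcal{H}_\infty$. The time restriction $|t|<(32\max|c_j|)^{-1}$ from Lemma \ref{teft} is inherited implicitly.

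First, I specialize Lemma \ref{teft} to the empty multiindex $S$ (so $a^S = \Bbb{1}$), which gives $\|(U_{M,t}(x) - U_{x,t})\psi_n\| \to 0$ uniformly in $x \in [0,1]$ for every Hermite function $\psi_n$. By linearity this extends verbatim to any finite linear combination $\eta_N = \sum_{n=0}^N \langle\psi_n,\eta\rangle\,\psi_n$, and the uniformity in $x$ is preserved because the convergence on each $\psi_n$ is uniform.

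Given an arbitrary $\eta \in \mathcal{H}_\infty$ and $\epsilon > 0$, I use that the Hermite functions form an orthonormal basis of $\mathcal{H}_\infty = \mathrm{L}^2(\Bbb{R})$ to pick $N$ so that $\eta_N := \sum_{n=0}^N \langle\psi_n,\eta\rangle\,\psi_n$ satisfies $\|\eta - \eta_N\| < \epsilon/3$. The triangle inequality yields
$$
\|(U_{M,t}(x) - U_{x,t})\eta\| \le \|(U_{M,t}(x) - U_{x,t})(\eta - \eta_N)\| + \|(U_{M,t}(x) - U_{x,t})\eta_N\|.
$$
Here $U_{x,t}$ is unitary by assumption, and $U_{M,t}(x) = (\Bbb{1} - E_{j_x}) + \tilde{U}_{M,t}(x)$ (as in the proof of Lemma \ref{lem:1}) is unitary on $\mathcal{H}_\infty$ because $H_M(x)$ is self-adjoint on the finite-dimensional range of $E_{j_x}$ and vanishes on its orthogonal complement. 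Hence both operators have norm $\le 1$, and the first summand is bounded by $2\|\eta - \eta_N\| < 2\epsilon/3$. For the fixed cutoff $\eta_N$, the previous paragraph furnishes an $M_\epsilon$ — depending only on $\eta$ and $\epsilon$, not on $x$ — such that the second summand is $<\epsilon/3$ for all $M > M_\epsilon$. Combining these bounds gives the uniform estimate.

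The argument is essentially routine; the only point requiring attention is that the uniformity in $x$ of the Hermite-function convergence survives the density step, which it does precisely because $N$ (and thus $M_\epsilon$) is chosen from $\eta$ and $\epsilon$ alone, independently of $x$.
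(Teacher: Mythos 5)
Your proof is correct and follows essentially the same route as the paper's: decompose $\eta$ into a finite Hermite-function sum plus a small tail, bound the tail using unitarity of both $U_{M,t}(x)$ and $U_{x,t}$, and handle the finite part via the uniform-in-$x$ convergence from Lemma~\ref{teft} with $S$ empty. The only cosmetic difference is the $\epsilon/3$ bookkeeping versus the paper's $2\epsilon$-plus-$\epsilon$ split; the substance is identical.
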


\begin{proof}
  Consider $\eta\in{\mathcal H}_\infty$ with $\|\eta\|=1$ and its expansion in
  terms of $\psi_n$: $\eta=\sum_n\eta^{(n)}\psi_n$. For any $\epsilon>0$ we
  can decompose $\eta=\eta_1+\eta_2$ such that
  \begin{equation}
    \eta_1=\sum_{n\leq n_\epsilon}\eta^{(n)}\psi_n,\quad
    \eta_2=\sum_{n>n_\epsilon}\eta^{(n)}\psi_n,
    \quad\|\eta_2\|<\epsilon\label{stct}
  \end{equation}
  Now we write
  \begin{align}
    \|(U_{M,t}(x)-U_{x,t})\eta\|
    &\leq\|(U_{M,t}(x)-U_{x,t})\eta_1\|+\|(U_{M,t}(x)-U_{x,t})\eta_2\| \\ 
    &\leq\sum_{n\leq n_\epsilon}
    \eta^{(n)}\|(U_{M,t}(x)-U_{x,t})\psi_n\|+2\epsilon\label{maruk} 
  \end{align}
  According to Lemma (\ref{teft}) for each $n\leq n_\epsilon$ there is an
  $M_{n,\epsilon}$ such that for $M>M_{n,\epsilon}$ and all $x$ we have
  $\|(U_{M,t}(x)-U_{x,t})\psi_n\|\leq \epsilon/n_\epsilon$ Since $\|\eta\|=1$ we get
  \begin{equation}
    \|(U_{M,t}(x)-U_{x,t})\eta_1\|\leq
    n_\epsilon\frac{\epsilon}{n_\epsilon}=\epsilon\quad\forall 
    M>M_\epsilon\equiv\max_{n\leq n_\epsilon} M_{n,\epsilon}\ .
  \end{equation}
  Together with (\ref{maruk}) this leads to $\lim_{M\to \infty}
  U_{M,t}(x)\eta=U_{x,t}\eta$. Since $\eta\in{\mathcal H}_\infty$ was arbitrary, the
  statement follows. 
\end{proof}

Now we are aiming at convergence (in a sense we will make precise later) of
$U_{M,t}(x) \rho U_{M,t}(x)^*$ to $U_{x,t} \rho U_{x,t}^*$. The next lemma is
the first step.

\begin{lem}
  \label{ruc}$\forall n\in\mathbb{N}$ we have
  \begin{equation} \label{eq:26}
    \lim_{M\to\infty} U_{M,t}(x) a^S U^*_{M,t}(x)\psi_n=U_{x,t} a^S U^*_{x,t}\psi_n
  \end{equation}
  uniformly in $x$.
\end{lem}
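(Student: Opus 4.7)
The plan is to reduce the statement to the strong convergence of Proposition~\ref{strc} together with the convergence result of Lemma~\ref{teft}. Inserting $U_{M,t}(x)\, a^S\, U^*_{x,t}\psi_n$ as an intermediate term, I would split
\begin{equation}
U_{M,t}(x)\, a^S\, U^*_{M,t}(x)\psi_n - U_{x,t}\, a^S\, U^*_{x,t}\psi_n = T_1 + T_2,
\end{equation}
with $T_1 = U_{M,t}(x)\bigl[a^S U^*_{M,t}(x)\psi_n - a^S U^*_{x,t}\psi_n\bigr]$ and $T_2 = \bigl[U_{M,t}(x) - U_{x,t}\bigr]\, a^S\, U^*_{x,t}\psi_n$, and then control each piece separately.

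For $T_1$, unitarity of $U_{M,t}(x)$ gives $\|T_1\|=\|a^S U^*_{M,t}(x)\psi_n - a^S U^*_{x,t}\psi_n\|$, and Lemma~\ref{teft} applied with $-t$ in place of $t$ (still inside the admissible window $|t|<(32\max|c_j|)^{-1}$, since $U^*_{M,t}(x)=U_{M,-t}(x)$ and $U^*_{x,t}=U_{x,-t}$) yields $\sup_x\|T_1\|\to 0$. This step is essentially immediate.

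The main obstacle is $T_2$, because the vector $a^S U^*_{x,t}\psi_n$ depends on $x$, so Proposition~\ref{strc}, which gives uniform-in-$x$ strong convergence only for a \emph{fixed} vector, cannot be used directly. I would overcome this by expanding $U^*_{x,t}\psi_n = \sum_m (itx)^m H^m\psi_n/m!$ and commuting $a^S$ past the sum termwise. This is legitimate because the bound $\|a^S H^m\psi_n\|\leq 2^{3d-n/2}d!(32\max|c_j|)^m m!$ from Lemma~\ref{csek}, combined with $|x|\leq 1$ and $q=32|t|\max|c_j|<1$, provides absolute convergence of $\sum_m (itx)^m a^S H^m\psi_n/m!$ uniformly in $x$, and $a^S$ is closable. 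Therefore
\begin{equation}
T_2 = \sum_{m=0}^\infty \frac{(itx)^m}{m!}\bigl(U_{M,t}(x)-U_{x,t}\bigr)\, a^S H^m\psi_n.
\end{equation}

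Given $\epsilon>0$, I would choose a cutoff $\mu_\epsilon$, independent of $M$ and $x$, such that the tail (estimated using $\|U_{M,t}(x)-U_{x,t}\|\leq 2$ and the geometric-series bound above) is smaller than $\epsilon/2$ uniformly. The remaining finite sum over $m\leq\mu_\epsilon$ involves vectors $a^S H^m\psi_n$ that are \emph{independent of $x$}, so Proposition~\ref{strc} applies to each term and delivers uniform-in-$x$ convergence to zero as $M\to\infty$; taking $M$ large enough makes the finite part $<\epsilon/2$ as well. Combining the $T_1$ and $T_2$ estimates yields the claim.
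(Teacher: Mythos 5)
Your decomposition into $T_1$ and $T_2$ is exactly the one used in the paper's proof, and your treatment of $T_1$ via unitarity and Lemma~\ref{teft} (applied to $-t$) matches the paper. The interesting difference is in $T_2$: the paper simply writes that $\|(U_{M,t}(x)-U_{x,t})\Phi_n\|<\epsilon$ ``from the strong convergence\ldots proved in Prop.~\ref{strc},'' glossing over the point you explicitly flag, namely that $\Phi_n = a^S U_{x,t}^*\psi_n$ depends on $x$, whereas Proposition~\ref{strc} gives uniform-in-$x$ convergence only for a fixed vector $\eta$. Your fix --- expanding $a^S U_{x,t}^*\psi_n$ in the exponential series, using the Lemma~\ref{csek} bound together with $|x|\le 1$ to make the tail uniformly small, and applying Proposition~\ref{strc} to the finitely many $x$-independent vectors $a^S H^m\psi_n$ --- is correct and closes this gap. (An equivalent fix, closer to the spirit of the paper's phrasing, is to observe that $x\mapsto\Phi_n(x)$ is continuous, so its range is compact, and uniform boundedness $\|U_{M,t}(x)-U_{x,t}\|\le 2$ then upgrades vectorwise to uniform-over-compact-sets convergence by a finite $\epsilon/3$-net argument.) So: same split as the paper, but your argument for the $T_2$ piece is more careful than what the paper writes and actually justifies the uniformity that the lemma asserts.
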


\begin{proof}
  As in Lemma \ref{teft} note first that the expression in (\ref{eq:26}) is
  well defined since $U^*_{M,t}(x) \psi_n$ and $U_{x,t}\psi_n$ are in the
  domain of $a^S$. 

  From the same lemma we have in addition that $\Phi_{M,n}=a^S U^*_{M,t}(x)\psi_n$
  converges uniformly to $\Phi_n = a^S U^*_{x,t}\psi_n$. Hence
  \begin{equation}
    \|U_{M,t}(x)\Phi_{M,n}-U_{x,t}\Phi_n\| \leq
    \|U_{M,t} (\Phi_{M,n} - \Phi_n)\| + \|(U_{M,t} - U_{x,t}) \Phi_n \| < 2 \epsilon
   %\|(U_{M,t}(x)-U_{x,t})\Phi_{M,n}\| + \|U_{x,t}(\Phi_{M,n}-\Phi_n)\|< 2\epsilon    
  \end{equation}
  if $M>\max\{M_\epsilon,M'_\epsilon\}$ and for arbitrary $x$, where
  $M_\epsilon$ comes from the strong convergence of the second term proved in
  Prop. \ref{strc} whereas $M'_\epsilon$ from the above mentioned convergence.
\end{proof}

\subsection{The key estimate}
\label{sec:key-estimate}

The purpose of this Section is to prove the following Lemma which will allow
us to trace the convergence of sequences of unbounded operators $U_{M,t}^*(x)
a^S U_{M,t}(x)$ back to convergence of bounded operators.

\begin{lem}
  \label{yab}$\exists p, p' \in \mathbb{N} \cup \{\infty\}$ and $K\in\mathbb{R}_+$
  such that there is a bounds
  \begin{equation} \label{eq:34}
    \|(\no+p\Bbb{1})^{-\frac{p}{2}} U_{M,t}(x) a^S U^*_{M,t}(x)\|<K
  \end{equation}
  and
  \begin{equation} \label{eq:35}
    \| U_{M,t}(x) a^S U^*_{M,t}(x) (\no+p'\Bbb{1})^{-\frac{p'}{2}} \|<K'
  \end{equation}
  for sufficiently small $t$ and $\forall M\in {\mathbb N}$ and for all $x \in
  [0,1]$. 
\end{lem}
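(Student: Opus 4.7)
The plan is to reduce the operator-norm bound to a polynomial-in-$n$ estimate on $\|a^T U^*_{M,t}(x)\psi_n\|$, and then to establish that estimate by combining Lemma \ref{csek} with unitarity of $U_{M,t}(x)$.

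For the reduction, I expand a unit vector $\eta = \sum_n \eta_n \psi_n$; writing $\phi = (\mathbf{N}+p'\Bbb{1})^{-p'/2}\eta$ and applying Cauchy--Schwarz gives
\begin{equation*}
\|U_{M,t}(x) a^S U^*_{M,t}(x)\phi\|^2 \leq \|\eta\|^2 \sum_n (n+p')^{-p'}\|a^S U^*_{M,t}(x)\psi_n\|^2,
\end{equation*}
using that $U_{M,t}(x)$ is unitary to drop the outer $U_{M,t}(x)$. Hence the second bound reduces to uniform finiteness of $\sum_n (n+p')^{-p'}\|a^S U^*_{M,t}(x)\psi_n\|^2$. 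The first bound reduces to the same type of sum (with $a^S$ replaced by its adjoint $a^{-S^R}$ of the same length) by passing to the adjoint operator. Thus in both cases the task is to show $\|a^T U^*_{M,t}(x)\psi_n\| = O(n^{(|T|+1)/2})$ uniformly in $M,x$.

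Next, I establish a matrix-element bound for $U^*_{M,t}(x)$. Taylor-expanding $U^*_{M,t}(x)\psi_n = \sum_m (itH_M(x))^m\psi_n/m!$, observing that $\langle\psi_{n'}, H_M(x)^m\psi_n\rangle$ vanishes unless $|n-n'|\leq 2m$, and applying Lemma \ref{csek} with $d=0$ term-by-term, the coefficients $\alpha_{n',n}(t) = \langle\psi_{n'}, U^*_{M,t}(x)\psi_n\rangle$ satisfy
\begin{equation*}
|\alpha_{n',n}(t)|^2 \leq \min\!\Bigl(1,\ \frac{2^n q^{|n-n'|}}{(1-q)^2}\Bigr),\qquad q = 32|t|\max|c_j| < 1,
\end{equation*}
for $|t|$ below the Lemma \ref{teft} threshold. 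Combined with the standard estimate $\|a^T\psi_{n'}\|\leq (n'+|T|)^{|T|/2}$ from the proof of Lemma \ref{csek}, this yields
\begin{equation*}
\|a^T U^*_{M,t}(x)\psi_n\|^2 = \sum_{n'}|\alpha_{n',n}(t)|^2\,\|a^T\psi_{n'}\|^2 \leq \sum_{n'}|\alpha_{n',n}(t)|^2\,(n'+|T|)^{|T|}.
\end{equation*}

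The crucial step is then to split this sum at the crossover $|n-n'|= n\sigma$ with $\sigma = \log 2/\log(1/q)$, at which the exponential bound $2^n q^{|n-n'|}$ exactly crosses $1$. On the inner region $|n-n'|\leq n\sigma$, unitarity $|\alpha_{n',n}|^2\leq 1$ is used, with $O(n)$ terms each of size $\leq((1+\sigma)n+|T|)^{|T|}$, contributing $O(n^{|T|+1})$. On the outer region $|n-n'|>n\sigma$, the exponential bound applies, and by the identity $2^n q^{n\sigma}=1$ the $2^n$ prefactor collapses so that only a geometric series in $k = |n-n'|-n\sigma$ remains, contributing $O(n^{|T|})$. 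Altogether $\|a^T U^*_{M,t}(x)\psi_n\|^2 \leq C_{|T|}(n+1)^{|T|+1}$ uniformly in $M,x$. Choosing $p=p'=|S|+3$ makes $\sum_n (n+p)^{-p}(n+1)^{|S|+1}$ convergent and delivers the stated bounds with a constant $K$ depending only on $|S|$ and $\max|c_j|$.

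The main obstacle is precisely the gap between the exponential-in-$n$ prefactor $2^n$ emerging from Lemma \ref{csek} and the polynomial-in-$n$ bound needed for summability. The interpolation with unitarity $|\alpha_{n',n}|\leq 1$ is sharp: the choice of $\sigma$ makes exponential growth in $n$ exactly balance the geometric decay in $|n-n'|$, so that only the near-diagonal block (of width $O(n)$) contributes substantially, producing polynomial rather than exponential growth.
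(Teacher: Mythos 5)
Your proof is correct, and it takes a genuinely different route from the paper's. The paper bounds $\|(\no+p\Bbb{1})^{-p/2}U_{M,t}(x)\,a^S U^*_{M,t}(x)\psi_n\|$ directly for each $n$: it inserts the spectral projections $E_{[n/2,3n/2]}$ and $E_{[n/4,9n/4]}$ of $\no$, splits the norm into three pieces, shows the two "far" pieces are exponentially small in $n$ (Lemmas \ref{tech1}, \ref{tech2}, which need $q<1/4$), shows the "near" piece is bounded by an $n$-independent constant after choosing $p=|S|$, and handles small $n$ separately by appealing to Lemma \ref{ruc}. You instead use Cauchy--Schwarz to reduce the operator-norm bound to uniform finiteness of $\sum_n(n+p')^{-p'}\|a^S U^*_{M,t}(x)\psi_n\|^2$, then bound the matrix elements $\alpha_{n',n}$ of $U^*_{M,t}(x)$ by applying Lemma \ref{csek} with $d=0$ term-by-term in the Taylor series; the key move is interpolating between the unitarity bound $|\alpha_{n',n}|\le 1$ and the exponential estimate $2^n q^{|n-n'|}/(1-q)^2$ at the crossover $|n-n'|=n\sigma$, $\sigma=\log 2/\log(1/q)$, where the two bounds coincide. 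This yields the clean uniform polynomial estimate $\|a^T U^*_{M,t}(x)\psi_n\|^2=O(n^{|T|+1})$, after which the weighted sum converges for $p'\ge|S|+3$. Both arguments rest on the same phenomenon (the $2^{n/2}$ prefactor from Lemma \ref{csek} defeated by geometric decay of the Taylor tail, using that $H_M(x)^m$ shifts the level index by at most $2m$), but your organization buys a few things: you treat all $n$ at once rather than splitting off small $n$, you need only one estimate rather than a three-term spectral decomposition, and your argument works for all $q<1$, not just $q<1/4$. The cost is a slightly larger choice of $p$ ($|S|+3$ versus the paper's $|S|$), which is immaterial for the statement.
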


\begin{proof}
  Since $\psi_n$, $n\in \mathbb{N}$ is a basis, it is sufficient to show that
  \begin{equation}
    \|(\no+p\Bbb{1})^{-\frac{p}{2}} U_{M,t}(x) a^S U^*_{M,t}(x)\psi_n\|<K\
    \label{nli}
  \end{equation}
  and
  \begin{equation} \label{eq:33}
    \| U_{M,t}(x) a^S U^*_{M,t}(x) (\no+p\Bbb{1})^{-\frac{p}{2}} \psi_n\|<K\
  \end{equation}
  hold. Furthermore, it is sufficient to show this for $n>n_0$ since the overall
  bound will then be given by $\max\{\tilde{K},K_n|n<n_0\}$ where we denoted
  $\tilde{K}$ the bound for $n>n_0$, and $K_n$ is a bound for fixed $n$'s in
  (\ref{nli}), which exists due to Lemma \ref{ruc}. For the proof of
  (\ref{nli}) and (\ref{eq:33}) $\forall n>n_0$ with a suitable
  $n_0\in\mathbb{N}$ we proceed by introducing the spectral projections of the
  number operator $E_{(a,b)}=\chi_{[a,b]}(\no)$ and  proving some technical lemmas.

  \begin{lem}
    \label{tech1}
    There is an $n_\epsilon\in\mathbb{N}$ and $t_\epsilon\in\mathbb{R}_+$ such
    that
    \begin{equation}
      \|(\Bbb{1}-\spe)\Phi_{M,n}\|<\epsilon \ \text{with} \ \Phi_{M,n}=a^S
      U^*_{M,t}(x)\psi_n 
    \end{equation}
    $\forall M\in\Bbb{N},\;|t|<t_\epsilon,\;n>n_\epsilon$.
  \end{lem}

  \begin{proof}
    Since $a^S H^m_M(x)$ is a polynomial in $a_M(x),a_M^*(x)$ of order not greater than
    $d+2m$ (with $d=|S|$), we have
    \begin{equation}
      (\Bbb{1}-\spe)\Phi_{M,n}^{(\mu)}=0\quad\mbox{if}\quad\frac{n}{2}\geq
      d+2\mu\ ,
    \end{equation}
    where $\Phi_{M,n}^{(\mu)}$ denote the partial sums
    \begin{equation}
      \Phi_{M,n}^{(\mu)} = \sum_{m<\mu}\frac{a^S\bigl(itH_M(x)\bigr)^m}{m!}\psi_n.
    \end{equation}
    Hence for the limit of the series the following formula holds:
    \begin{equation}
      (\Bbb{1}-\spe)\Phi_{M,n} = ((\Bbb{1}-\spe)(\Phi_{M,n} -
      \Phi_{M,n}^{((n-2d)/4)})\ .
    \end{equation}
    Let us estimate
    \begin{align}
      \|(\Bbb{1}-\spe) (\Phi_{M,n}-\Phi_{M,n}^{((n-2d)/4)})\| & \leq
      \|(\Phi_{M,n} - \Phi_{M,n}^{((n-2d)/4)}) \| \\
      & \leq\sum_{m>\frac{n-2d}{4}} \frac{\|a^S H_M^m(x)\Psi_n\|}{m!}|t|^m \\ 
      & \leq 2^{3d+\frac{n}{2}}d!\!\!\sum_{m>\frac{n-2d}{4}}\!\!q^m
      \label{est}
    \end{align}
    with $q=32|t|\max|c_j|$. With sufficiently small $t$, c.f. Lemma
    \ref{teft}, that is, $q<1$ is assumed, we can sum the geometric series
    explicitly:
    \begin{equation} 
      2^{3d+\frac{n}{2}} d! \frac{q^{\frac{n-2d}{4}+1}}{1-q}
      =\frac{2^{3d}d!}{1-q}\exp(\frac{2-d}{2}\log q)\exp[n(\log
      \sqrt{2}+\frac{1}{4}\log q)]\equiv
      K_{1,q}\,e^{-K_{2,q}n}\label{lvm}
    \end{equation}
    where the constants $K_{i,q}$ can be read off from the formula.  Now
    shortening the time interval such that $q<1/4$ we find make $K_{2,q}<0$.
    Thus, we can conclude that there is an $n_{\epsilon,q}$ such that the
    statement of the Lemma is fulfilled.
  \end{proof}

  \begin{lem}
    \label{tech2}
    \begin{equation}
      \|\spe\Phi_{M,n}\|\leq\left(\frac{3n}{2}+2d\right)^{\frac{d}{2}}
    \end{equation}
  \end{lem}

  \begin{proof}
    Let us expand $\Phi_{M,n}$ and $U_{M,n}(x) \psi_n$ in terms of Hermite
    functions 
    \begin{equation}
      \Phi_{M,n}=\sum_k C_k\psi_k,\quad \phi=\sum_k c_k\psi_k\ .
    \end{equation}
    Then $C_{k+w(S)}=\Lambda(k,S)c_k$ holds with $\Lambda(k,S)$ satisfying
    $(k-d)^{d/2}<\Lambda(k,S)< (k+d)^{d/2}$, where we recall $|S|=d$. Since the
    relation $-d\leq w(S)\leq d$ also holds, the condition $k+w(S)\in
    [n/2,3n/2]$ implies $k\in[n/2-d,3n/2+d]$. From these we get
    \begin{equation}
      \|\spe \Phi_{M,n}\|^2\leq\sum_j
      |c_{j-w(S)}|^2(\frac{3n}{2}+2d)^d
    \end{equation}
    and since $\sum_k|c_k|^2=1$ due to $\|\psi_n\|=1$ and the unitarity of
    $U_{M,t}(x)$ we finally get to the statement of the Lemma.
  \end{proof}

  Now let us return to the bound in Equation (\ref{nli}) and decompose the
  norm in the following way. 
  \begin{align}
    \|(\no+p\Bbb{1})^{-\frac{p}{2}}U_{M,t} \Phi_{M,n}\| &\leq
    \|(\no+p\Bbb{1})^{-\frac{p}{2}}\speb U_{M,t} \spe\Phi_{M,n}\| \nonumber \\ 
    &+ \|(\no+p\Bbb{1})^{-\frac{p}{2}}(\Bbb{1}-\speb) U_{M,t} \spe\Phi_{M,n}\|
    \nonumber \\ 
    &+ \|(\no+p\Bbb{1})^{-\frac{p}{2}}U_{M,t} (\Bbb{1}-\spe)\Phi_{M,n}\|
  \end{align}
  The third term can be estimated as
  \begin{equation}\|(\no+p\Bbb{1})^{-\frac{p}{2}}U_{M,t} (\Bbb{1}-\spe)\Phi_{M,n}\|\leq 
    \|(\no+p\Bbb{1})^{-\frac{p}{2}}\|\|(\Bbb{1}-\spe)\Phi_{M,n}\|\leq p^{-\frac{p}{2}}\epsilon\end{equation}
  whenever $n> n_\epsilon$ due to Lemma \ref{tech1}. 
  For the second term we write
  \begin{multline}
    \|(\no+p\Bbb{1})^{-\frac{p}{2}}(\Bbb{1}-\speb) U_{M,t} \spe\Phi_{M,n}\|
    \leq \\ 
    \|(\no+p\Bbb{1})^{-\frac{p}{2}}\|\max_{k\in [\frac{n}{2},\frac{3n}{2}]}
    \|(\Bbb{1}-\speb) U_{M,t}\psi_k\|\|\spe\Phi_{M,n}\|\leq \\
    p^{-\frac{p}{2}} K_{1,q} e^{-K_{2,q}\frac{n}{2}} \left(\frac{3n}{2} +
      2d\right)^\frac{d}{2} \ ,
  \end{multline}
  where we used the Lemma \ref{tech2} and the technique of the proof of Lemma \ref{tech1} 
  for the middle and right norms in the second line
  to arrive at the third. Note, that for $k\in [n/2,3n/2]$ we have that $[k/2,3k/2]\supset [n/4,9n/4]$,
  which justifies using the same constants $K_{i,q}$ as in Lemma \ref{tech1}.    
  It is now clear, that increasing $n$ makes this bound arbitrary small.
  Finally, note that $(\no+p\Bbb{1})^{-p/2}$ commutes with $E_{[n/4,9n/4]}$, and since the latter is
  a projection we can write for the first term
  \begin{multline}
    \|(\no+p\Bbb{1})^{-\frac{p}{2}}\speb U_{M,t} \spe\Phi_{M,n}\| \leq \\ 
    \|\speb(\no+p\Bbb{1})^{-\frac{p}{2}}\speb\|\|U_{M,t}\| \| \spe \Phi_{M,n}\|\
    ,
  \end{multline}
  Now, for the norms we have
  \begin{equation} \label{eq:36}
    \|\speb(\no+p\Bbb{1})^{-\frac{p}{2}}\speb\|\leq 2^p (n+4p)^{-\frac{p}{2}},\quad 
    \|\spe \Phi_{M,n}\|\leq\left(\frac{3n}{2}+2d\right)^{\frac{d}{2}}\ ,
  \end{equation}
  the multiplication of which, after substitution $p=d$, gives $6^d r^{d/2}<6^d$ since 
  $r\equiv (n+4d/3)/(n+4d)<1$. Hence we have found $\tilde{K}=6^d+2\epsilon$,
  which leads to (\ref{nli}) and therefore (\ref{eq:34}). To get (\ref{eq:33})
  and (\ref{eq:35}) we can proceed in exactly the same way. The only
  difference is that the operator $(\no+p'\Bbb{1})^{-\frac{p'}{2}}$ is acting
  directly on $\psi_n$ and not at $U_{M,t} \spe\Phi_{M,n}$. Therefore we only
  have to replace the first norm in (\ref{eq:36}) by $(n+p')^{p'/2}$ to get
  the desired result and the proof is complete.
\end{proof}

Now we are ready to prove a convergence statement for the time evolution
$U_{M,t}(x) \rho U_{M,t}^*(x)$ which brings a step closer to our goal.

\begin{prop} 
  \label{Ss}For any Schwartz operator $\rho\in\cal{S}(\cal{H}_\infty)$ we have 
  \begin{equation} \label{eq:27}
   \lim_{M\to \infty}\tr [a^S U_{M,t}(x)\rho U_{M,t}^*(x)]=\tr [a^S
   U_{x,t}\rho U_{x,t}^*]
  \end{equation}
  uniformly in $x$.
\end{prop}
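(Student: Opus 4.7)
The plan is to reduce the statement via cyclicity of the trace, then combine the strong convergence of Lemma \ref{ruc} with the uniform operator-norm bound of Lemma \ref{yab} and a trace-norm truncation of $\rho$ exploiting its Schwartz property. Concretely, by Lemma \ref{lem:1}(\ref{item:7}) cyclicity gives $\tr[a^S U_{M,t}(x)\rho U_{M,t}^*(x)] = \tr[U_{M,t}^*(x) a^S U_{M,t}(x)\rho]$ and analogously at $M=\infty$; setting $B_M(x) := U_{M,t}^*(x) a^S U_{M,t}(x)$ and $B_\infty(x) := U_{x,t}^* a^S U_{x,t}$, the task becomes proving $\tr[B_M(x)\rho]\to \tr[B_\infty(x)\rho]$ uniformly in $x$. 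Applying Lemma \ref{ruc} with $t\to -t$ (permissible as the time bound $|t|<t_0$ is symmetric) yields $B_M(x)\psi_n \to B_\infty(x)\psi_n$ uniformly in $x$ for every fixed $n$, and applying Lemma \ref{yab} similarly produces $p'\in\Bbb{N}$ and $K'>0$ with
\begin{equation*}
  \|B_M(x)(\no+p'\Bbb{1})^{-p'/2}\| \leq K',\qquad \|B_\infty(x)(\no+p'\Bbb{1})^{-p'/2}\| \leq K',
\end{equation*}
uniformly in $M$ and $x$.

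I then introduce the spectral projection $E_N := \sum_{n\leq N}|\psi_n\rangle\langle\psi_n|$ of $\no$ and the truncation $\rho_N := E_N\rho E_N$. Since $\rho\in\mathcal{S}(\mathcal{H}_\infty)$, Proposition \ref{prop:5} makes $(\no+p'\Bbb{1})^{p'/2}\rho$ trace class; since $E_N$ commutes with $(\no+p'\Bbb{1})^{p'/2}$ and $E_N\to \Bbb{1}$ strongly, the standard fact that strong multiplication into a trace-class factor converges in trace norm yields $\|(\no+p'\Bbb{1})^{p'/2}(\rho-\rho_N)\|_1\to 0$. Given $\epsilon>0$, fix $N$ so that this is below $\epsilon$, and split
\begin{equation*}
  \tr[B_M(x)\rho] - \tr[B_\infty(x)\rho] = \tr[B_M(x)(\rho-\rho_N)] - \tr[B_\infty(x)(\rho-\rho_N)] + \tr[(B_M(x)-B_\infty(x))\rho_N].
\end{equation*}
Each of the first two terms is bounded by $K'\epsilon$ via $|\tr(AT)|\leq \|A\|\,\|T\|_1$ applied to the factorisation $B_M(x)(\rho-\rho_N) = [B_M(x)(\no+p'\Bbb{1})^{-p'/2}]\cdot [(\no+p'\Bbb{1})^{p'/2}(\rho-\rho_N)]$, and analogously with $B_\infty$. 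The third term is the finite double sum $\sum_{n,m\leq N}\langle\psi_n,\rho\psi_m\rangle\langle\psi_m,(B_M(x)-B_\infty(x))\psi_n\rangle$ whose every entry tends to $0$ uniformly in $x$ by Lemma \ref{ruc}, hence is $\leq\epsilon$ uniformly in $x$ for $M$ large enough, yielding the total uniform bound $(2K'+1)\epsilon$.

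The main obstacle I anticipate is producing the $M=\infty$ version of the operator-norm bound used above, since Lemma \ref{yab} as stated only quantifies over $M\in\Bbb{N}$; the mitigation is that its proof is entirely sectoral, relying on spectral projections $E_{[n/2,3n/2]}$ of $\no$ and on data from Lemmas \ref{csek} and \ref{lem:2} that persist in the limit $\beta_M(x,n)\to\sqrt{x}$, so the argument carries over to $B_\infty(x)$ with at most cosmetic changes in the constants. A secondary concern is the ``strong convergence into a trace-class factor gives trace-norm convergence'' step used in the truncation, but this is classical via finite-rank approximation and needs only a careful statement to make the error genuinely uniform in $x$.
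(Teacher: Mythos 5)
Your proposal is correct and follows essentially the same route as the paper: both arguments reduce via cyclicity (Lemma~\ref{lem:1}), combine the uniform norm bound of Lemma~\ref{yab} with the strong convergence of Lemma~\ref{ruc}, and use the $(\no+p\Bbb{1})^{\pm p/2}$ preconditioning to split off a trace-class factor — the only difference is that you truncate $\rho$ directly to $E_N\rho E_N$, whereas the paper truncates the test vector $\eta$ inside a strong-convergence sub-lemma for $X_M(x)=(\no+p\Bbb{1})^{-p/2}U_{M,t}^*(x)a^SU_{M,t}(x)$ and then invokes the general fact that strong convergence of uniformly bounded operators gives trace convergence against trace-class operators, pushing $(\no+p\Bbb{1})^{p/2}$ onto $\rho$ afterwards. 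Your version has the small advantage of making the uniformity in $x$ of that trace-convergence step explicit, and you correctly identify the (easily closed) gap that Lemma~\ref{yab} as stated quantifies only over $M\in\Bbb{N}$ — a point the paper's proof also relies on implicitly, and which follows either by rerunning the proof of Lemma~\ref{yab} with $\beta_M(x,n)$ replaced by $\sqrt{x}$ or, more cheaply, from the uniform bound for finite $M$ combined with strong convergence on a dense domain.
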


\begin{proof}
  According to Lemma \ref{lem:1} Equation (\ref{eq:27}) is equivalent to
  \begin{equation}
    \lim_{M\to \infty}\tr [U_{M,t}^* a^S U_{M,t}(x)\rho]=\tr [U_{x,t}^* a^S
   U_{x,t}\rho]
  \end{equation}
  To prove this let us introduce the notations 
  \begin{equation} 
    X_M(x)\equiv(\no+p\Bbb{1})^{-\frac{d}{2}}U_{M,t}^*(x)\,a^S\,U_{M,t}(x)\,,\quad
    X(x)\equiv(\no+p\Bbb{1})^{-\frac{d}{2}}U_{x,t}^*\,a^S\,U_{x,t}\ ,
  \end{equation}
  and show first strong convergence of the defined operators.

  \begin{lem}
    For any $\eta \in \mathcal{H}_\infty$ we have
    \begin{equation}
      \lim_{M \rightarrow \infty} \| (X_M(x) - X(x)) \eta \| = 0
    \end{equation}
    uniformly in $x$.
  \end{lem}

  \begin{proof}
    We use the same strategy as for Proposition \ref{strc} and decompose
    $\eta\in\hi_\infty$ as 
      \begin{equation}
    \eta_1=\sum_{n\leq n_\epsilon}\eta^{(n)}\psi_n,\quad
    \eta_2=\sum_{n>n_\epsilon}\eta^{(n)}\psi_n,
    \quad\|\eta_2\|<\epsilon \; ,
  \end{equation}
  cf. formula (\ref{stct}). Then
    \begin{align}
      \|(X_M(x)-X(x))\eta\| &\leq\|(X_M(x)-X(x))\eta_1\|+\|(X_M(x)-X(x))\eta_2\| \\ 
      &\leq\sum_{n\leq n_\epsilon}|\eta^{(n)}|\|(X_M(x)-X(x))\psi_n\|+2K\epsilon\\ 
      &\leq\big(\sum_{n\leq
        n_\epsilon}|\eta^{(n)}|\big)\,\|(\no+d\Bbb{1})^{-\frac{d}{2}}\| \frac{\epsilon}{n_\epsilon}
      + 2K\epsilon\ ,
    \end{align}
    where the second inequality is the consequence of Lemma \ref{yab}, whereas
    for the third one assumes that $M>\max_{n<n_\epsilon}(M_\epsilon)$ in the
    statement of Lemma \ref{ruc}. Since $\big(\sum_{n\leq
      n_\epsilon}|\eta^{(n)}|\big)\leq n_\epsilon$ and
    $(\no+d\Bbb{1})^{-\frac{d}{2}}$ is bounded, the convergence is proved.
  \end{proof}

  The sequence $X_M$ converges strongly and is norm bounded. This implies for
  any trace class operator convergence of traces $\tr(X_M(x) \rho)$ with
  $\lim_{M \rightarrow \infty} \tr(X_M(x) \rho) = \tr(X(x) \rho)$. Hence we get 
  \begin{equation}
    \lim_{M \rightarrow\infty} \bigl\lvert \tr\bigl(X_M(x) \rho\bigr) -
    \tr\bigl(X(x) \rho\bigr)\bigr\rvert = 0
  \end{equation}
  uniformly in $x$. Therefore we get with a \emph{Schwartz operator} $\rho$
  \begin{align}
    \lim_{M\to\infty}\tr(U_{M,t}^*(x)\,a^S\,U_{M,t}(x)\,\rho) &=
    \lim_{M\to\infty}\tr \big((\no+p\Bbb{1})^{\frac{d}{2}}(\no+p\Bbb{1})^{-\frac{d}{2}}
    U_{M,t}^*(x)\,a^S\,U_{M,t}(x)\,\rho\big) \\
    &=\lim_{M\to\infty}\tr \big(X_M\rho (\no+p\Bbb{1})^{\frac{d}{2}}\big)\\ 
    &= \lim_{M\to\infty}\tr X_M\tilde{\rho}=\tr
    X\tilde{\rho}=\tr(U_{x,t}^*\,a^S\,U_{x,t}\rho) 
  \end{align}
  where $\tilde{\rho}$ is again a Schwartz operator (and therefore trace class)
  according to Proposition \ref{prop:5}. 
\end{proof}

\subsection{Proof of the main theorem}
\label{sec:proof-main-theorem}

Based on the result derived so far we can now start to prove the main
theorem. This means in particular that we have to take the $x$-dependence
stronger into account. We start with a short lemma which will be the main tool
throughout this Subsection.

\begin{lem} \label{lem:3}
  For each $M \in \Bbb{N} \cup ~\{\infty\}$, $x \in [0,1]$, $t \in \Bbb{R}$
  with $|t|$ sufficiently small and each multiindex $S \in \{-1,1\}^d$
  (including the trivial case $|S| = d = 0$) let us define
  \begin{equation}
    \omega^S_{M,t,x}(\rho) = \tr \bigl(U_{M,t}^*(x)\,a^S\,U_{M,t}(x)\,\rho\bigr)
  \end{equation}
  and consider $\Sigma\subset\swa$ bounded, i.e.,
  \begin{equation}
    \|\rho\,(\no+p\Bbb{1})^{p/2}\|_1 < K_p\;\forall \rho\in\Sigma.
  \end{equation}
  Then we have
  \begin{equation}
    |\omega^S_{M,t,x}(\rho)| \leq K K_p
  \end{equation}
  with the constant $K$ from Lemma \ref{yab} and $p \in \Bbb{N}$ sufficiently
  big (such that the statement from Lemma \ref{yab} holds). Moreover
  $\omega^S_{M,t,x}(\rho)$ is continuous as a function of $x$.
\end{lem}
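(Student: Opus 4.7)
The bound is proved by a simple factorization and Hölder estimate. Writing
\[
\omega^S_{M,t,x}(\rho) = \tr\bigl(\rho\,(\no+p\Bbb{1})^{p/2}\cdot(\no+p\Bbb{1})^{-p/2}U_{M,t}^*(x)\,a^S\,U_{M,t}(x)\bigr),
\]
the first factor is trace class with trace norm at most $K_p$ by hypothesis. The second factor has operator norm at most $K$: this is not literally the bound (\ref{eq:34}) of Lemma \ref{yab}, but it follows from (\ref{eq:34}) by replacing $t$ with $-t$, which preserves the smallness assumption on $|t|$ and interchanges $U_{M,t}(x)$ with $U_{M,t}^*(x)$. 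Hölder's inequality for the trace-class / bounded pairing then yields $|\omega^S_{M,t,x}(\rho)|\le K K_p$ uniformly in $x$, $M$ and $\rho\in\Sigma$.

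For continuity in $x$, set $Z=\rho\,(\no+p\Bbb{1})^{p/2}$ (trace class, independent of $x$) and $B(x)=(\no+p\Bbb{1})^{-p/2}U_{M,t}^*(x)\,a^S\,U_{M,t}(x)$ (bounded uniformly by $K$), so that $\omega^S_{M,t,x}(\rho)=\tr(Z\,B(x))$. Since $B\mapsto\tr(ZB)$ is continuous on norm-bounded sets for the strong operator topology whenever $Z$ is trace class, it is enough to show that $x\mapsto B(x)$ is SOT-continuous.

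For $M=\infty$ this is immediate from Stone's theorem applied to $U_{x,t}=\exp(-ixtH)$, together with the fact that $a^S\psi_n$ is a finite linear combination of Hermite functions and that $(\no+p\Bbb{1})^{-p/2}$ is bounded. For finite $M$ the main input is that $x\mapsto\beta_M(x,n)$ is continuous on $[0,1]$ for every $n\in\mathbb{N}_0$ (the discontinuity of $\theta_1$ sits only at $x=1^+$, outside the domain relevant here), so each matrix element of $H_M(x)$ between Hermite functions depends continuously on $x$. The analytic-vector expansion used in the proof of Lemma \ref{teft} then shows that $x\mapsto U_{M,t}(x)\psi_n$ and $x\mapsto a^S U_{M,t}(x)\psi_n$ are continuous for every Hermite function $\psi_n$, by the uniform convergence of the exponential series granted by Lemma \ref{csek}. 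A standard add-and-subtract decomposition using the unitarity of $U_{M,t}^*(x)$ propagates continuity to $U_{M,t}^*(x)\,a^S\,U_{M,t}(x)\psi_n$, and composing with the bounded operator $(\no+p\Bbb{1})^{-p/2}$ yields continuity of $B(x)\psi_n$ for every $n$. For a general $\eta\in\mathcal{H}_\infty$ one writes $\eta=\eta_1+\eta_2$ as in (\ref{stct}), applies the pointwise continuity on the finite part $\eta_1$ and controls the remainder by $\|B(x)\eta_2\|\le K\|\eta_2\|<K\epsilon$; this delivers the desired SOT-continuity.

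The only mildly delicate point is that the effective support $E_{j_x}\mathcal{H}_\infty$ of $H_M(x)$ jumps by one dimension each time $Mx$ crosses an integer, so one might worry that $U_{M,t}(x)$ is discontinuous at those values of $x$. The remedy is that $\beta_M(x,n)$ vanishes continuously at these transitions, so $H_M(x)$, and consequently $U_{M,t}(x)$, is still strongly continuous in $x$ on all of $[0,1]$.
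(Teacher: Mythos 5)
Your proof is correct and follows the same overall strategy as the paper for the bound (insert $(\no+p\Bbb{1})^{p/2}(\no+p\Bbb{1})^{-p/2}$ and apply Hölder's inequality for the trace pairing), but you phrase the continuity argument differently and in a way that is arguably cleaner. Two observations are worth recording.

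First, you correctly noticed and patched a small detail the paper glosses over: Lemma~\ref{yab} as stated bounds $\|(\no+p\Bbb{1})^{-p/2} U_{M,t}(x)\,a^S\,U_{M,t}^*(x)\|$, whereas the factorization requires a bound on $\|(\no+p\Bbb{1})^{-p/2} U_{M,t}^*(x)\,a^S\,U_{M,t}(x)\|$; your substitution $t\mapsto -t$ (which preserves the smallness hypothesis on $|t|$) is exactly the right fix.

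Second, for continuity the paper argues via \emph{norm} continuity of $x\mapsto H_M(x)$ (hence of $U_{M,t}(x)$) in the finite-$M$ case, together with effective finite-dimensionality, and a direct smooth-dependence argument for $M=\infty$. You instead establish \emph{strong operator} continuity of $B(x)=(\no+p\Bbb{1})^{-p/2}U_{M,t}^*(x)\,a^S\,U_{M,t}(x)$ on Hermite vectors and then invoke the standard fact that $B\mapsto\tr(ZB)$ is SOT-continuous on norm-bounded sets when $Z$ is trace class. Both routes rest on the same two ingredients: continuity of $x\mapsto\beta_M(x,n)$ on $[0,1]$ (including across the values $x=n/M$ where modes decouple, precisely because $\beta_M$ vanishes there continuously) and the uniform convergence of the analytic-vector series from Lemma~\ref{csek}. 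Your version has the advantage of not requiring any finite-dimensionality claim, and it handles $M\in\Bbb{N}$ and $M=\infty$ more symmetrically. One small imprecision: for $M=\infty$, Stone's theorem by itself gives continuity of $x\mapsto U_{x,t}\psi_n$ but not of $x\mapsto a^S U_{x,t}\psi_n$; the uniform convergence of $\sum_m a^S(-itxH)^m\psi_n/m!$ (the same analytic-vector series you invoke for finite $M$) is what actually carries that step. This does not affect the validity of the argument, since you already cite Lemma~\ref{csek} in the next sentence, but it should be stated as part of the $M=\infty$ case rather than only the finite-$M$ case.
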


\begin{proof}
  Since $\rho$ is a Schwartz operator we can rewrite $\omega^S_{M,t,x}$ as
  \begin{equation}
    \omega^S_{M,t,x}(\rho) = \tr \bigl((\no+p\Bbb{1})^{-p/2} U_{M,t}(x)^* a^S
    U_{M,t}(x) \rho\,(\no+p\Bbb{1})^{p/2}\bigr)
  \end{equation}
  hence
  \begin{equation}
    |\omega^S_{M,t,x}(\rho)|\leq \|(\no+p\Bbb{1})^{-p/2} U_{M,t}(x)^* a^S
    U_{M,t}(x) \|\|\rho\,(\no+p\Bbb{1})^{p/2}\|_1 \leq K\cdot K_p\label{bdo}
  \end{equation}
  The first term is bounded by $K$ by Lemma \ref{yab}, the second term is
  bounded by $K_p$ by assumption (where $p$ has to be chosen large enough such
  that Lemma \ref{yab} is true). This shows boundedness.

  To show continuity in $x$ consider first the case $M = \infty$. For a
  quadratic Hamiltonian $\exp (-it H) a^S \exp (it H)$ is a linear
  differential operator in $x$ with coefficients smoothly depending on
  $t$. Hence the continuity of $\omega^S_{\infty,x,t}(\rho)$ in $x$ follows
  from that of $\exp(-i x t H)$ for each $t$ fixed.

  If $M$ is finite the situation is effectively finite dimensional, since
  $U_{M,t}(x)$ acts as the identity on all $\psi_n$ with $n>Mx$; the monomial
  $a^S$ can map $\psi_n$ to $\psi_{n+|S|}$; hence, we have to look at the
  Hilbert space spanned by $\psi_n,\, n=0,1,2,\dots,xM+|S|$. Now, continuity
  of $\omega^S_{M,x,t}(\rho)$ follows from the continuity of the map
  $x\mapsto H_M(x)$ in norm for all fixed $\lambda,M$, since this
  continuity implies norm continuity of $U_{M,t}(x)$ in $x\;\forall M,t$.
\end{proof}

\begin{lem}
  \label{ub}Consider a sequence of continuous maps: $R_{M,\lambda}:[0,1]\to
  \swa$, $M \in \Bbb{N}$ such that
  \begin{enumerate}
  \item \label{item:13}
    $\{R_{M,\lambda}(x)|x\in [0,1], M\in\mathbb{N}\}$ is a bounded subset
    of $\swa$
  \item $R_{M,\lambda}(x)$ converges in $\swa$ and uniformly on a
    neighbourhood $I\subset [0,1]$ of $\lambda$ to a continuous function
    $R_\infty : [0,1] \rightarrow \mathcal{S}(\mathcal{H}_\infty)$. 
  \end{enumerate}
  Then the sequence 
  \begin{equation}
    \tr \big(U_{M,t}^*(x)\,a^S\,U_{M,t}(x)\,R_{M,\lambda}(x) - U_{x,t}^*\,a^S\,U_{x,t}R_\infty(x)\big)
  \end{equation}
  is uniformly bounded on $x \in [0,1]$ and converges uniformly to $0$ on $x
  \in I$.
 \end{lem}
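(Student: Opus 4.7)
The plan is to exploit linearity of the trace in the density-operator slot and reduce everything to Lemma \ref{lem:3} and Proposition \ref{Ss}. Using the shorthand
\[\omega^S_{M,t,x}(\rho) = \tr\bigl(U_{M,t}^*(x)\,a^S\,U_{M,t}(x)\,\rho\bigr), \qquad \omega^S_{\infty,t,x}(\rho) = \tr\bigl(U_{x,t}^*\,a^S\,U_{x,t}\,\rho\bigr),\]
the quantity in the statement equals $\omega^S_{M,t,x}(R_{M,\lambda}(x)) - \omega^S_{\infty,t,x}(R_\infty(x))$, which I would split as
\[\omega^S_{M,t,x}\bigl(R_{M,\lambda}(x) - R_\infty(x)\bigr) + \bigl(\omega^S_{M,t,x} - \omega^S_{\infty,t,x}\bigr)\bigl(R_\infty(x)\bigr).\]

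For the uniform boundedness on $[0,1]$, I would apply Lemma \ref{lem:3} separately to $R_{M,\lambda}(x)$ and to $R_\infty(x)$. The first family is bounded in $\sw(\mathcal{H}_\infty)$ by hypothesis, and the image $R_\infty([0,1])$ is compact (continuous image of a compact set) and therefore bounded in the Fr\'echet space $\sw(\mathcal{H}_\infty)$. Hence for $p$ large enough (so that Lemma \ref{lem:3} applies and so that $(\no+p\Bbb{1})^{p/2}$ is a polynomial in $Q,P$ whose associated trace-norm is controlled by the defining seminorms of $\sw(\mathcal{H}_\infty)$), there exists a common constant $K_p$ with $\|\rho\,(\no+p\Bbb{1})^{p/2}\|_1 \leq K_p$ for every $\rho$ in either family, uniformly in $M$ and $x$. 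Lemma \ref{lem:3} then yields a uniform bound on both $|\omega^S_{M,t,x}(R_{M,\lambda}(x))|$ and $|\omega^S_{\infty,t,x}(R_\infty(x))|$, and the triangle inequality closes the boundedness part.

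For the uniform convergence on $I$, the first summand in the decomposition is disposed of directly: Lemma \ref{lem:3} bounds its modulus by $K\cdot\|(R_{M,\lambda}(x) - R_\infty(x))(\no+p\Bbb{1})^{p/2}\|_1$, which is controlled by the $\sw(\mathcal{H}_\infty)$-topology, and the hypothesis that $R_{M,\lambda}\to R_\infty$ uniformly on $I$ in this topology forces this expression to zero uniformly on $I$. The second summand is where I expect the main obstacle, since Proposition \ref{Ss} only delivers uniform-in-$x$ convergence for a \emph{fixed} Schwartz operator, whereas here the operator $R_\infty(x)$ moves with $x$. I would overcome this by an $\varepsilon$-net argument: using uniform continuity of $R_\infty$ on the compact set $I$ (in the $\sw(\mathcal{H}_\infty)$-topology), choose finitely many points $x_1,\dots,x_N \in I$ so that for every $x \in I$ some $x_i$ satisfies $\|(R_\infty(x) - R_\infty(x_i))(\no+p\Bbb{1})^{p/2}\|_1 < \varepsilon/(4K)$. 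Writing
\[\bigl(\omega^S_{M,t,x} - \omega^S_{\infty,t,x}\bigr)\bigl(R_\infty(x)\bigr) = \omega^S_{M,t,x}\bigl(R_\infty(x) - R_\infty(x_i)\bigr) + \bigl(\omega^S_{M,t,x} - \omega^S_{\infty,t,x}\bigr)\bigl(R_\infty(x_i)\bigr) + \omega^S_{\infty,t,x}\bigl(R_\infty(x_i) - R_\infty(x)\bigr),\]
the two outer terms are $< \varepsilon/4$ uniformly in $x$ by Lemma \ref{lem:3}, while the middle term can be made uniformly smaller than $\varepsilon/2$ for sufficiently large $M$ by applying Proposition \ref{Ss} to each of the \emph{finitely many} fixed operators $R_\infty(x_i)$ and taking the maximum of the resulting indices. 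This completes the uniform convergence on $I$.
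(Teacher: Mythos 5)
Your decomposition
\[
\omega^S_{M,t,x}\bigl(R_{M,\lambda}(x)-R_\infty(x)\bigr)
+\bigl(\omega^S_{M,t,x}-\omega^S_{\infty,t,x}\bigr)\bigl(R_\infty(x)\bigr)
\]
is exactly the one used in the paper, and the treatment of the first summand (Lemma \ref{lem:3} plus uniform convergence of $R_{M,\lambda}\to R_\infty$ in the Schwartz topology on $I$) as well as the boundedness argument (Lemma \ref{lem:3} applied to both families, using boundedness of $\{R_{M,\lambda}(x)\}$ and compactness of $R_\infty([0,1])$) coincide with the paper's. Where you genuinely go beyond the paper is the second summand. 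The paper simply asserts that it ``converges to zero by Proposition \ref{Ss}'' and that this is ``independent of $x\in I$'', but Proposition \ref{Ss} only delivers uniform-in-$x$ convergence for a \emph{fixed} Schwartz operator $\rho$, not for the $x$-dependent family $R_\infty(x)$. You correctly flag this as an obstacle and close the gap with an $\varepsilon$-net argument: cover the compact $\overline{I}$ by finitely many $x_i$ using uniform continuity of $R_\infty$ in the Schwartz topology, control the two ``replacement'' errors $\omega^S_{M,t,x}(R_\infty(x)-R_\infty(x_i))$ and $\omega^S_{\infty,t,x}(R_\infty(x_i)-R_\infty(x))$ by Lemma \ref{lem:3}, and then apply Proposition \ref{Ss} to each of the finitely many fixed operators $R_\infty(x_i)$. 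This is a standard equicontinuity device, and it makes rigorous a step the paper leaves implicit. Net assessment: same decomposition and same ingredients, but your write-up is more careful and actually repairs a small gap in the published proof.
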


\begin{proof} 
  Convergence of the $R_{M,\lambda}(x)$
  implies that for each $p \in\Bbb{N}$ and each $\epsilon$ we can find an
  $M_{p,\epsilon}$ such that
  \begin{equation}
    \| (R_{M,\lambda}(x) - R_\infty(x)) (\no+p\Bbb{1})^{p/2} \| <
    \frac{\epsilon}{K}\quad \forall M > M_{p,\epsilon}
  \end{equation}
  holds. With Lemma \ref{lem:3} this implies
\begin{equation}
  |\omega^S_{M,t,x}(\RM-R_\infty(x))|\leq K\frac{\epsilon}{K}=\epsilon,\;\forall
  x\in I, \forall M>M_\epsilon
\end{equation}
  Hence, we write
\begin{multline}
  \bigl\lvert \tr \bigl(U_{M,t}^*(x)\,a^S\,U_{M,t}(x)\,R_{M,\lambda}(x) -
  U_{x,t}^*\,a^S\,U_{x,t}R_\infty(x)\bigr)\bigr\rvert \leq \\
  \bigl\lvert\omega^S_{M,t,x}\bigl(\RM-R_\infty(x)\bigr)\bigr\rvert +
  \bigl\lvert\omega^S_{M,t,x}\bigl(R_\infty(x)\bigr) -
  \omega^S_{\infty,t,x}(R_\infty(x))\bigr\rvert \ ,  
  \end{multline}
  where the first term was just shown to be $<\epsilon$ and the second term
  converges to zero by Proposition \ref{Ss}. Note, that both statement are
  independent of $x\in I$. 

  Boundedness follows similarly, if use condition \ref{item:13} and the fact
  that the map $R_\infty(x) \in \mathcal{S}(\mathcal{H}_\infty)$ is bounded as well
  (due to continuity and compactness of $[0,1]$).
\end{proof}

\begin{lem} \label{lem:4}
  Consider $M \in \Bbb{N} \cup \{\infty\}$, $x \in [0,1]$, $t \in \Bbb{R}$
  with $t$| sufficiently small, a multiindex $S \in \{-1,1\}^d$ and a
  continuous map $R: [0,1] \rightarrow \mathcal{S}(\mathcal{H}_\infty)$. Then
  the functions 
  \begin{equation}
    [0,1] \ni x \mapsto \omega^S_{M,x,t}\bigl(R(x)\bigr) \in \Bbb{C}
  \end{equation}
  are continuous.
\end{lem}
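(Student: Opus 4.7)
The plan is to reduce this to the two ingredients of Lemma \ref{lem:3} by a standard telescoping argument. Fixing $x_0 \in [0,1]$ I would write
\begin{equation}
\omega^S_{M,x,t}(R(x)) - \omega^S_{M,x_0,t}(R(x_0)) = \omega^S_{M,x,t}\bigl(R(x)-R(x_0)\bigr) + \bigl[\omega^S_{M,x,t}(R(x_0)) - \omega^S_{M,x_0,t}(R(x_0))\bigr]
\end{equation}
and show that each piece tends to $0$ as $x \to x_0$.

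The second bracket involves the \emph{fixed} Schwartz operator $R(x_0)$, so it is handled immediately by the continuity-in-$x$ statement contained in Lemma \ref{lem:3} (proved there by exploiting the effective finite dimensionality for $M < \infty$ and the smooth dependence of $\exp(-itxH)$ on $x$ for $M = \infty$). Nothing more is needed for this term.

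For the first piece I would invoke the uniform bound of Lemma \ref{lem:3}, which gives
\begin{equation}
\bigl\lvert\omega^S_{M,x,t}(R(x)-R(x_0))\bigr\rvert \leq K \cdot \bigl\lVert (R(x)-R(x_0))(\mathbf{N}+p\mathbb{1})^{p/2}\bigr\rVert_1,
\end{equation}
where I choose $p$ even and large enough that Lemma \ref{yab} applies. The key observation is that for even $p$, $(\mathbf{N}+p\mathbb{1})^{p/2}$ is a polynomial in $a,a^*$, hence in $Q,P$. By Proposition \ref{prop:5} the map $\rho \mapsto \rho(\mathbf{N}+p\mathbb{1})^{p/2}$ is continuous from $\sw(\mathcal{H}_\infty)$ into itself, so the right-hand seminorm above is dominated by a finite combination of the seminorms defining the Fréchet topology of $\sw(\mathcal{H}_\infty)$. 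Since $R$ is continuous at $x_0$ in that topology by hypothesis, this upper bound tends to $0$.

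Combining the two estimates yields continuity of $x \mapsto \omega^S_{M,x,t}(R(x))$ at the arbitrary point $x_0$. The only real care to be taken is book-keeping: ensuring that $p$ can simultaneously be chosen even and large enough for Lemma \ref{yab} (harmless, since one may always increase $p$), and that the trace-norm expression $\|\rho(\mathbf{N}+p\mathbb{1})^{p/2}\|_1$ is genuinely controlled by the Fréchet seminorms of $\sw(\mathcal{H}_\infty)$ — which follows directly from Proposition \ref{prop:5}. I do not foresee any substantive obstacle beyond this accounting.
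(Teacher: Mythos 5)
Your proof is correct and follows essentially the same route as the paper's: the same telescoping decomposition, with the two pieces handled respectively by the uniform bound and the continuity-in-$x$ assertion of Lemma \ref{lem:3}. You merely spell out a little more carefully why the Fréchet seminorms of $\sw(\mathcal{H}_\infty)$ control $\|\rho(\no+p\Bbb{1})^{p/2}\|_1$, taking $p$ even so that $(\no+p\Bbb{1})^{p/2}$ is a polynomial in $Q,P$ and Proposition \ref{prop:5} applies.
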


\begin{proof}
  Consider a sequence $x_j \in [0,1]$, $ j \in \Bbb{N}$ converging to $x
  \in [0,1]$. Continuity of $R$ implies that for each $\epsilon > 0$ and each
  $r \in \Bbb{N}$ there is an $j_{r,\epsilon}$ such that
  \begin{equation} \label{eq:32}
    \| (\no+r\Bbb{1})^{r} \rho \| < \epsilon 
  \end{equation}
  holds for all $\rho$ in the set
  \begin{equation}
    \Sigma_{r,\epsilon} = \{ R(x_j) - R(x) \, | \, j > j_{\epsilon,r} \}.
      \end{equation}
  Hence $j > j_{r,\epsilon}$ implies
  \begin{multline}
    \bigl\lvert|\omega^S_{M,x_j,t}\bigl(R(x_j)\bigr) -
    \omega^S_{M,x,t}\bigl(R(x)\bigr) \bigr\rvert \leq \\
    \bigl\lvert|\omega^S_{M,x_j,t}\bigl(R(x_j) - R(x)\bigr)\bigr\rvert +
    \bigl\lvert \omega^S_{M,x_j,t}\bigl(R(x)\bigl) -
    \omega^S_{M,x,t}\bigl(R(x)\bigr)\bigr\rvert .
  \end{multline}
  The first term on the right hand side is bounded by $K \epsilon$ due to
  (\ref{eq:32}) and Lemma \ref{lem:3}. The second term can be made arbitrarily
  small due to continuity of $\omega^S_{M,x,t}(\rho)$ in $x$; cf. again Lemma
  \ref{lem:3}. 
\end{proof}

\begin{lem} 
  \label{elozo} The sequence
  \begin{equation} \label{eq:29}
    M\mapsto(a_M(x)^S-x^\frac{|S|}{2}a^S)(\no+2|S|\Bbb{1})^{-|S|}\ .
  \end{equation}
    converges to zero in norm uniformly in $x \in [0,1]$
    any closed subinterval of $x\in I\subset [0,1]$ not containing $0$.
\end{lem}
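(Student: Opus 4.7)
The plan is to exploit the explicit diagonal action of both $a_M^S(x)$ and $x^{|S|/2}a^S$ on the Hermite basis. The ladder identities (\ref{b1})--(\ref{b2}) show that each sends $\psi_n$ to a scalar multiple of $\psi_{n+w(S)}$, with the scalar available in closed form. Composed with the weight $(\no+2|S|\Bbb{1})^{-|S|}$, which acts on $\psi_n$ as multiplication by $(n+2|S|)^{-|S|}$, the operator in question likewise sends $\psi_n$ to a multiple $\mu_n(x,M)\psi_{n+w(S)}$. Since the target vectors $\psi_{n+w(S)}$ for different $n$ are mutually orthogonal, the operator norm equals $\sup_n|\mu_n(x,M)|$, so the whole task reduces to a pointwise-in-$n$ estimate of $\mu_n(x,M)$ that is uniform in $x$ and of size $O(M^{-1/2})$.

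Iterating $a_M^{(r)}(x)\psi_n=\beta_M(x,n+(r-1)/2)\sqrt{n+(r+1)/2}\,\psi_{n+r}$ yields
\begin{equation}
a_M^S(x)\psi_n=\Bigl(\prod_{j=1}^{|S|}\beta_M(x,n+q_j)\Bigr)\Bigl(\prod_{j=1}^{|S|}\sqrt{n+p_j}\Bigr)\psi_{n+w(S)}
\end{equation}
for integer offsets $p_j,q_j$ depending only on $S$, together with the identical formula for $x^{|S|/2}a^S\psi_n$ in which each $\beta_M(x,n+q_j)$ is replaced by $\sqrt{x}$. Factoring out the common $\prod_j\sqrt{n+p_j}$ and telescoping the remaining difference of products (using that all $\beta_M(x,\cdot)$ and $\sqrt{x}$ lie in $[0,1]$ for $x\in[0,1]$) reduces the estimate to a sum of $|S|$ single-factor differences, each bounded by $\sqrt{(n+C)/M}$ uniformly in $x$ by Lemma \ref{lem:2}. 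Combining with the elementary bound $\prod_j\sqrt{n+p_j}\leq(n+C')^{|S|/2}$ gives
\begin{equation}
|\mu_n(x,M)|\leq\frac{|S|\,(n+C)^{1/2}(n+C')^{|S|/2}}{(n+2|S|)^{|S|}\,\sqrt{M}}
\end{equation}
for all sufficiently large $n$, uniformly in $x\in[0,1]$, with constants depending only on $|S|$.

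The final step is a power count: the numerator grows like $n^{(|S|+1)/2}$ while the denominator grows like $n^{|S|}$, so for $|S|\geq 1$ the supremum over $n$ is finite and the whole estimate is $O(M^{-1/2})$, uniform in $x$. This is precisely the reason for the specific exponent $|S|$ on the weight. The case $|S|=0$ is trivial as the difference vanishes identically. The main piece of book-keeping I expect to be slightly tedious, and thus the only real obstacle, is treating the finitely many small $n$ for which some offset $n+q_j$ is negative: there $\beta_M(x,n+q_j)=\theta_1(x-(n+q_j)/M)$ is still well-defined and bounded by $1$, and the corresponding finite-dimensional contribution to the supremum vanishes as $M\to\infty$ by a separate direct estimate using the pointwise convergence $\beta_M(x,\cdot)\to\sqrt{x}$. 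Beyond that, once the diagonal structure has been identified the remainder is routine.
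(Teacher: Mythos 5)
Your proof is correct and follows the same overall strategy as the paper: reduce the operator norm to a supremum of diagonal Hermite-basis coefficients, estimate each $\beta_M$-deviation via Lemma \ref{lem:2}, and close with a power count in $n$. Where you diverge, you are in fact more careful than the paper. The paper's proof (echoing the schematic Equation (\ref{schem})) asserts that the coefficient is a \emph{product} of $|S|$ single-difference factors of the form $(\beta_M(x,n+r)-\sqrt{x})\sqrt{n+r'}\,(n+2|S|)^{-1}$ and bounds each factor by $O(M^{-1/2})$, which would give a rate $O(M^{-|S|/2})$. But a difference of products $\prod_j\beta_j-\prod_j\sqrt{x}$ is not a product of differences; the correct decomposition is the telescoping sum you write, with exactly one difference factor per term, yielding the weaker but still sufficient rate $O(M^{-1/2})$. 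Your power count (numerator $\sim n^{(|S|+1)/2}$ against denominator $\sim n^{|S|}$, bounded precisely because $|S|\geq1$) is the right explanation of why the exponent on the weight is $|S|$. The small-$n$ book-keeping you flag is not an obstacle: only finitely many such $n$ arise, and for each fixed $n$ the coefficient tends to zero uniformly in $x$, which is just Lemma \ref{whichone} combined with the bounded weight.
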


\begin{proof}
  The statement is proved if we can show that for each
  $\epsilon > 0$ we can find an $M_\epsilon$ which is \emph{independent of $n$ and
  $x$} such that 
  \begin{equation} \label{eq:28}
    \| a_M^S(x)-x^\frac{|S|}{2}a^S)(\no+2|S|\Bbb{1})^{-|S|} \psi_n \| < \epsilon
    \epsilon 
  \end{equation} 
  holds for all $M > M_\epsilon$. Using the definition of $a_M(x)$ in
  Equations (\ref{b1}) and (\ref{b2}) and properties of the standard creation
  and annihilation operators it is easy to see that the elements of the
  sequence in 
  (\ref{eq:28}) are products of $|S|$ terms of the form
  \begin{equation}
    \left(\beta_{M,\lambda}(x,n+r)-\xpl\right)\sqrt{n+r'}\,(n+2|S|)^{-1}
  \end{equation}
  which differ only in the parameter $r, r'$; we have used a similar argument
  already in the proof of Lemma \ref{whichone}; cf. Equation
  (\ref{schem}). The only difference are $|S|$ terms $(n+2|S|)^{-1}$ we have
  distributed among the $S$ factors. Using the bound in lemma \ref{lem:2} we
  get
  \begin{equation}
    \frac{\beta_{M,\lambda}(x,n+r)-\xpl}{\sqrt{n+2|S|}}
    \frac{\sqrt{n+r'}}{\sqrt{n+2|S|}} \leq \frac{1}{\sqrt{M}}
    \sqrt{\frac{n+r}{n+2|S|} \frac{n+r'}{n+2|S|}}
  \end{equation}
  The right hand side can be bound in an $n$-independent way to 
  \begin{equation}
    \sqrt{\frac{n+r}{n+2|S|} \frac{n+r'}{n+2|S|}} \leq \sqrt{ \max(1,r/2S)
      \max(1,r'/2S) }.
  \end{equation}
  This leads to the estimate in (\ref{eq:28}) which concludes the proof.
 \end{proof}

 \begin{lem} \label{lem:6}
   For each $p \in \Bbb{N}$, $t \in \Bbb{R}$ sufficiently small and
   a uniformly bounded sequence $R_M: [0,1] \rightarrow
   \mathcal{S}(\mathcal{H}_\infty)$  the sequence  
   \begin{equation}
     \| (\no+2p\Bbb{1})^{p} U_{M,t}(x) R_M(x) U_{M,t}^*(x) \|_1 \quad M \in
     \Bbb{N} 
   \end{equation}
   is uniformly bounded in $x$.
 \end{lem}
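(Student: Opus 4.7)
\textbf{Proof plan for Lemma \ref{lem:6}.} The approach is to shift the weight $(\no+2p\Bbb{1})^p$ through the unitary $U_{M,t}(x)$ into the Heisenberg picture, expand the resulting operator as a finite linear combination of the $\tilde{a}^S(x) = U^*_{M,t}(x)\,a^S\,U_{M,t}(x)$, and then factor each $\tilde{a}^S(x)R_M(x)$ as (bounded operator) $\times$ (trace class operator), with the bounded piece controlled by Lemma \ref{yab} and the trace-class piece controlled by the uniform Fr\'echet boundedness of $\{R_M(x)\}$.

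First I use unitary invariance of the trace norm twice to obtain
\begin{equation}
\|(\no+2p\Bbb{1})^p U_{M,t}(x) R_M(x) U^*_{M,t}(x)\|_1
= \|\tilde{N}_M(x,t) R_M(x)\|_1,
\end{equation}
where $\tilde{N}_M(x,t) = U^*_{M,t}(x)(\no+2p\Bbb{1})^p U_{M,t}(x)$. Since $\no = a^*a$, the operator $(\no+2p\Bbb{1})^p$ is a fixed polynomial of degree $2p$ in $a,a^*$, so I can write $(\no+2p\Bbb{1})^p = \sum_S c_S\,a^S$ as a finite linear combination of monomials with $|S|\leq 2p$. Consequently $\tilde{N}_M(x,t) = \sum_S c_S\,\tilde{a}^S(x)$ and the triangle inequality gives
\begin{equation}
\|\tilde{N}_M(x,t) R_M(x)\|_1 \leq \sum_S |c_S|\,\|\tilde{a}^S(x) R_M(x)\|_1.
\end{equation}

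For each term I insert $(\no+2p\Bbb{1})^{-p}(\no+2p\Bbb{1})^{p}=\Bbb{1}$ and apply the H\"older inequality for Schatten classes:
\begin{equation}
\|\tilde{a}^S(x) R_M(x)\|_1 \leq \|\tilde{a}^S(x)(\no+2p\Bbb{1})^{-p}\|\,\|(\no+2p\Bbb{1})^{p} R_M(x)\|_1.
\end{equation}
The first factor is uniformly bounded by Lemma \ref{yab}, with its parameter set to $2p$: although Lemma \ref{yab} is stated for $U_{M,t}(x)\,a^S\,U^*_{M,t}(x)$ rather than for $\tilde{a}^S(x) = U^*_{M,t}(x)\,a^S\,U_{M,t}(x)$, the proof is entirely symmetric in $t\leftrightarrow -t$ and can be repeated verbatim with $U_{M,t}(x)$ and $U^*_{M,t}(x)$ interchanged. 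For the second factor, $(\no+2p\Bbb{1})^p$ is a polynomial in $\no$, hence a polynomial in $Q$ and $P$; by Proposition \ref{prop:5} left multiplication by it defines a continuous map $\sw(\hi_\infty)\to\sw(\hi_\infty)$, in particular continuous with respect to the trace-norm seminorm. Uniform boundedness of $\{R_M(x) : M\in\Bbb{N},\,x\in[0,1]\}$ in $\sw(\hi_\infty)$ therefore transfers to a uniform bound $\|(\no+2p\Bbb{1})^p R_M(x)\|_1 \leq C$.

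Summing over the finitely many multiindices $S$ yields the required estimate, uniform in both $M$ and $x$. The only delicate ingredient is the adjoint variant of Lemma \ref{yab}, which however follows immediately from the existing proof by replacing $t$ with $-t$; the remainder is a routine combination of unitarity, polynomial expansion, H\"older's inequality, and Fr\'echet continuity from Proposition \ref{prop:5}.
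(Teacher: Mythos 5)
Your proof is correct and follows essentially the same route as the paper's: expand the polynomial weight $(\no+2p\Bbb{1})^p$ into monomials $a^S$, conjugate by the unitary, insert a compensating weight and apply H\"older, bound the operator-norm factor via Lemma \ref{yab} and the trace-norm factor via the Fr\'echet boundedness of $\{R_M(x)\}$ combined with Proposition \ref{prop:5}. The only genuinely welcome addition you make is the explicit remark that Lemma \ref{yab} is stated for $U_{M,t}(x)\,a^S\,U_{M,t}^*(x)$ whereas the factorization needs the conjugation in the opposite order, and that this follows by the $t\leftrightarrow -t$ symmetry of the argument; the paper uses this implicitly without comment. One small imprecision: Lemma \ref{yab} asserts the \emph{existence} of exponents $p, p'$ (which in its proof are tied to $|S|$), so "with its parameter set to $2p$" is not literally how the lemma is stated; however, the bound with a fixed, uniform exponent $2p$ for all $|S|\le 2p$ follows at once because $(\no+p'\Bbb{1})^{p'/2}(\no+2p\Bbb{1})^{-p}$ is bounded whenever $p'\le 2p$, so this is a gap you could close in one line.
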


 \begin{proof}
   Since $(\no+2p\Bbb{1})^{p}$ is a polynomial in $a$ and $a^*$ it is
   sufficient to look at
   \begin{equation}
     \| a^S U_{M,t}(x) R_M(x) U_{M,t}^*(x) \|_1 .  
   \end{equation}
   In addition we can rewrite this expression as
   \begin{align}
     \| a^S U_{M,t}(x) & R_M(x) U_{M,t}^*(x) \|_1  = \| U_{M,t}^*(x) a^S
     U_{M,t}(x) R_M(x)  \|_1 \\
     &= \| U_{M,t}^*(x) a^S  U_{M,t}(x) (\no+p'\Bbb{1})^{-p'/2}
     (\no+p'\Bbb{1})^{p'/2} R_M(x)  \|_1 \\
     &\leq \| U_{M,t}^*(x) a^S  U_{M,t}(x) (\no+p'\Bbb{1})^{-p'/2} \| \|
     (\no+p'\Bbb{1})^{p'/2} R_M(x) \|_1 \leq K' K_1
   \end{align}
   The first factor is bounded according to Lemma \ref{yab} the second because
   $R_M$ is uniformly bounded by assumption and $\rho \mapsto a^S \rho$ is a
   continuous map on $\mathcal{S}(\mathcal{H}_\infty)$; cf. Proposition
   \ref{prop:5}. 
 \end{proof}

 \begin{lem} \label{lem:7}
   For each $M \in \Bbb{N}$, $t \in \Bbb{R}$ sufficiently small and $R \in
   \{-1,1\}^d$ the function
   \begin{equation}
     [0,1] \ni x \mapsto \bigl\lvert \tr\bigl(U_{M,t}(x) R_M(x) U_{M,t}^*(x)
     (a_M^R(x) - a_\infty(x)^R)\bigr) \bigr\rvert \in \Bbb{C}
   \end{equation}
   is continuous.
 \end{lem}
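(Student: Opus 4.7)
The plan is to split the integrand by cyclicity of the trace and the identity $a_\infty^R(x) = x^{|R|/2} a^R$ into two pieces,
\begin{equation}
  \tr\bigl(U_{M,t}^*(x)\, a_M^R(x)\, U_{M,t}(x)\, R_M(x)\bigr) \;-\; x^{|R|/2}\, \omega^R_{M,t,x}\bigl(R_M(x)\bigr),
\end{equation}
and to verify continuity of the two separately; continuity of the absolute value of the sum then follows. The second piece is immediate: $x\mapsto x^{|R|/2}$ is continuous on $[0,1]$, and $x\mapsto \omega^R_{M,t,x}(R_M(x))$ is continuous by Lemma \ref{lem:4} applied to the integral representation $R_M$, which by definition is a continuous map $[0,1] \to \mathcal{S}(\mathcal{H}_\infty)$.

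For the first piece, the key observation is that for fixed $M\in\mathbb{N}$ the operator $a_M(x) = \beta_M(x,\mathbf{N})\,a$ is bounded on $\mathcal{H}_\infty$, since $\beta_M(x,n)=0$ for $n > Mx$ and so it acts nontrivially only on the finite-dimensional subspace spanned by the first $\lfloor Mx\rfloor+1$ Hermite functions, with $\|a_M(x)\|\leq\sqrt{M+1}$. Moreover $x\mapsto a_M(x)$ is norm continuous on $[0,1]$: from the formula
\begin{equation}
  (a_M(x)-a_M(y))\psi_n = \sqrt{n}\,\bigl(\beta_M(x,n-1)-\beta_M(y,n-1)\bigr)\psi_{n-1}
\end{equation}
one sees that the difference is supported on the finitely many $n$ with $n\leq M\max(x,y)+1$, and for each such $n$ the scalar $\beta_M(\cdot,n-1)$ is continuous---even across the activation point $x=n/M$, because $\beta_M(x,n-1)\to 0$ as $x\downarrow n/M$. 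Taking products yields norm continuity of $a_M^R(x)$ for every multiindex $R$.

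Combining this with the norm continuity of $x\mapsto U_{M,t}(x)$ (established in the proof of Lemma \ref{lem:3}) yields norm continuity of the bounded operator $x\mapsto U_{M,t}^*(x)\, a_M^R(x)\, U_{M,t}(x)$, while continuity of $R_M$ into $\mathcal{S}(\mathcal{H}_\infty)$ implies in particular trace-norm continuity of $x\mapsto R_M(x)$ (since $\|\cdot\|_1$ is one of the defining seminorms). The standard bound $|\tr(A\rho)-\tr(A'\rho')|\leq \|A-A'\|\,\|\rho\|_1+\|A'\|\,\|\rho-\rho'\|_1$ then delivers continuity of the first piece, completing the plan. The only delicate point---and the main obstacle---is precisely the continuity of $a_M(x)$ across the activation values $x=n/M$, which is resolved by the continuity and vanishing of $\beta_M$ there.
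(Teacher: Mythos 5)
Your proof is correct but takes a genuinely different route from the paper's. The paper's argument is a single sentence asserting that the integrand ``can be written as a linear combination of terms $\omega^S_{M,x,t}(R_M(x))$ with coefficients depending continuously on $x$'', after which Lemma \ref{lem:4} is invoked. Read literally this is not immediate, since $a_M^R(x)-a_\infty^R(x)$ involves the operator-valued prefactors $\beta_M(x,\mathbf{N}+\cdot)$ rather than scalar multiples of standard monomials $a^S$; to make it precise one would, e.g., Lagrange-interpolate $\beta_M(x,\cdot)$ on the finite spectral range of $\mathbf{N}$ that is relevant for fixed $M$ and then expand in $a,a^*$, none of which the paper carries out. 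You instead peel off only the $x^{|R|/2}a^R$ contribution, to which Lemma \ref{lem:4} applies directly, and treat the remaining term $\tr\bigl(U_{M,t}^*(x)\,a_M^R(x)\,U_{M,t}(x)\,R_M(x)\bigr)$ by an elementary operator-norm argument: for fixed $M$ the operator $a_M(x)=\beta_M(x,\mathbf{N})a$ is bounded and norm continuous in $x$, $U_{M,t}(x)$ is norm continuous in $x$ (already recorded in the proof of Lemma \ref{lem:3}), and $R_M$ is trace-norm continuous since $\|\cdot\|_1$ is among the defining Schwartz seminorms; the Hölder estimate for the trace pairing then closes the argument. This is more explicit and also slightly more robust, since it does not implicitly require $R_M(x)$ to be finite-rank between the interpolation nodes. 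One small slip that does not affect the argument: the activation point of the scalar $\beta_M(\cdot,n-1)$ appearing in your formula for $(a_M(x)-a_M(y))\psi_n$ is $x=(n-1)/M$ rather than $x=n/M$.
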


 \begin{proof}
   We can write this expression as a linear combination of therms
   $\omega^S_{M,x,t}\bigl(R_M(x)\bigr)$ with coefficients depending continuously
   on $x$. Hence the statement follows from Lemma \ref{lem:4}.
 \end{proof}

Now we are ready to finish the proof of Theorem \ref{thm:1}. To this end let
us go back to Equation (\ref{eq:9}) which was
\begin{multline}
  \bigl\lvert \tr(U_{M,t} \rho_M U_{M,t}^* a_M^R) - 
    \tr(U_{\lambda,t} \rho_\infty U_{\lambda,t}^* a^R ) \bigr\rvert < \\
    \int_0^1 \bigl\lvert \tr\bigl(U_{M,t}(x) R_M(x) U_{M,t}^*(x) (a_M^R(x) -
    a_\infty(x)^R)\bigr) \bigr\rvert \mu_M(dx) + \\ \int_0^1 \bigl\lvert
    \tr\bigl((U_{\lambda,t} R_\infty(x) U_{\lambda,t}^* - U_{M,t}(x)
    R_\infty(x) U_{M,t}^*(x)) a_\infty(x)^R\bigr)\bigr\rvert \mu_M(dx) 
\end{multline}
where $a_\infty(x) = x^{|R|/2} a$ according to Equation (\ref{eq:31}). 

The integrand of the second integral is -- according to Lemmas \ref{ub} and
\ref{lem:4} -- continuous and uniformly bounded in $x \in [0,1]$, and it
converges uniformly on $x \in I$ to $0$. In addition we know that the measures
$\mu_M$ converge weakly to the point measure at $\lambda$. Hence the second
integral vanishes in the limit $M \rightarrow 0$. 

The integrand of the first integral is continuous as well (Lemma \ref{lem:7})
and we can rewrite it according to
\begin{multline}
  \bigl\lvert\tr\bigl((a_M^R(x) - x^{R/2} a^R) (\no+2|S|\Bbb{1})^{-|S|}
  (\no+2|S|\Bbb{1})^{|S|} U_{M,t}(x) R_M(x) U_{M,t}^*(x) \bigr)\bigr\rvert \\
  \leq \| (a_M^R(x) - x^{R/2} a^R) (\no+2|S|\Bbb{1})^{-|S|} \| 
  \| (\no+2|S|\Bbb{1})^{|S|} U_{M,t}(x) R_M(x) U_{M,t}^*(x) \|_1.
\end{multline}
The first factor on the right hand side converges to $0$ uniformly in $x$ (Lemma
\ref{elozo}) while the second is uniformly bounded (Lemma \ref{lem:6}). Hence in
the limit $M \rightarrow \infty$ this integral vanishes as well, which concludes
the proof.

\section{Outlook}
\label{sec:outlook}

We have shown that a one degree of freedom, continuous variable quantum system
can be simulated by mean field fluctuations of an ensemble of qubits. 
% This
% includes the simulation of time evolution if we restrict the description to
% small times and quadratic Hamiltonians. 
Our results substantially exceed existing schemes. This includes in particular
the range of states we can simulate and the treatment of the
dynamics. Nevertheless, there are a number of open questions which needs to be
treated in forthcoming papers.

\begin{itemize}
\item 
  \textbf{Dynamics.} Our results concerning the dynamics have two essential
  restrictions: The statements are restricted to quadratic Hamiltonians and to
  small times. The latter is probably only a restriction of the methods in the
  proof and can be removed with a more careful analysis of the limiting behavior
  of the unitaries $U_{M,t}(x)$. One possibility is to show that the time
  evolved density operators $U_{M,t}(x)R_M(x) U_{M,t}^*(x)$ converge in the same
  as the §$R_M(x)$ do, because in this case we can stack arbitrarily many
  finite time intervals together.
  
  The discussion of more general Hamiltonians is most likely a more difficult
  problem. In Section \ref{sec:herm-funct-analyt} we have made heavy use of
  the fact that Hermite functions are analytic vectors for \emph{all}
  quadratic Hamiltonians, and this argument has to be replaced somehow. A
  minimal requirement is in any case that the dynamics leaves the space of
  Schwartz functions invariant. The questions are: Which Hamiltonians have
  this property and is it sufficient to prove an analog of Theorem
  \ref{thm:1}? 
\item 
  \textbf{The reference state.} The reference state $\theta$ plays a very
  crucial role in the theory, which was overlooked in previous
  publications. It provides the parameter $\lambda$ which plays in some
  respect the role of an effective $\hbar$, although this is not universally
  true, as we have seen during the discussion of Theorem \ref{thm:2}. In
  particular the limit $\lambda \rightarrow 0$ seems to be boring, because in
  our scheme the observables $Q_\infty$ and $P_\infty$ just scale down with
  $\lambda^{-1/2}$ and therefore all expectation values vanishes if $\lambda$
  approaches $0$. However, this is not necessarily the end of the
  story. Instead of using sequences of states converging according to
  Definition \ref{def:2} we can use more divergent ones such that expectation
  values do not vanish in the limit $M \rightarrow \infty$. Alternatively we
  can rescale $Q_M(x)$ and $P_M(x)$ by $x^{-1/2}$ and $x^{1/2}$ respectively. 

  Another important aspect arises in connection dynamics. The time evolution
  we have looked at in this paper leave the references state invariant (at
  least asymptotically in the limit $M \rightarrow \infty$). It would be
  interesting to break this property and consider situations where the
  reference state is time-dependent.  
\item 
  \textbf{Classical observables.}
  Throughout our analysis we have basically ignored the fluctuation operator
  $F_M(\sigma_3)$. A short calculation along the lines of Section
  \ref{sec:fluct-oper} shows that $F_M(\sigma_3)$ commutes in the limit $M
  \rightarrow \infty$ with $Q_M$ and $P_M$, hence it describes a classical
  observables. The fact that $\tr(\sigma_3 \theta) = 0$ holds indicates that
  it is related to $\lambda$. 
\item 
  \textbf{Higher dimensions.}
  To extend of our schemes to systems with arbitrary, finite dimensional
  one-particle spaces is a very natural task, and the generalization of the
  discussion in Section \ref{sec:fluct-oper} is straightforward. The analysis
  of permutation invariant states along the lines of Section
  \ref{sec:perm-invar-stat} is not. This concerns in particular the embedding
  of irreducible representation spaces of unitary groups into Hilbert spaces
  $\mathrm{L}^2(\Bbb{R}^d)$.
\item 
  \textbf{Gaussian states.} 
  A lot of theory is available for Gaussian states and it is not completely
  clear how it is related to our approach. In \cite{Taku} it is shown that a
  translation invariant, exponentially clustering state $\omega$ of an
  infinite spin chain leads to Gaussian fluctuations if restrictions of
  $\omega$ to finite parts of the chain are used as the sequence $\rho_M$. In
  this case it clear already from the analysis of correlation functions (as in
  Section \ref{sec:fluct-oper}) that the operators $Q_\infty$ and $P_\infty$
  can be realized as canonical position and momentum in the Schrödinger
  representation (modulo factors of the form $\lambda^{\pm 1/2}$). Also the
  density operator $\rho_\infty$ is easily calculated since only one and two
  point correlation functions are needed. It is, however, not clear whether
  these sequences converge as described in Section \ref{sec:perm-invar-stat}. 

  Another interesting aspect concerns the relation between Gaussianity and
  correlations. The analysis in \cite{Taku} uses in a crucial way that the
  $\rho_M$ arises from restrictions of a thermodynamic limit state. Our
  scheme, on the other hand, does not need this assumption and it would be
  interesting up to which degree absence of long range correlations (in a to
  be specified limiting sense) still implies Gaussianity. Even more
  interesting (and more ambitious as well) is question whether we can
  distinguish classical correlations and entanglement by properties of the
  limit state $\rho_\infty$.\end{itemize}

% As one big achievement
% over existing results our scheme is valid for a large class of states of the
% finite and of the continuous system. In the latter case we have shown that all
% states $\rho_\infty$ which can be described by Schwartz operator can be
% simulated. Roughly speaking this means that our method is valid as long as
% all the moments of $\rho_\infty$ have to be finite. For the discrete system we
% have to specify the limiting behavior at $M \rightarrow \infty$ and the main
% ingredient is that the distribution of total spins has to become sharp 

% \subsection{Classical correlations and Gaussian states}

% \subsection{Classical degrees of freedom}

% \subsection{Higher dimensions and $\mathrm{SU}(d)$ representation theory}

% \subsection{Dissipation and quantum mechanics with dynamical $\hbar$}

\section*{Acknowledgments}

This work was supported by the European Commission under 
grants COQUIT (contract number 233747), ERC Grant GEDENTQOPT, and CHISTERA
QUASAR. We also like to thank the Spanish MINECO (Projects No.
FIS2009-12773-C02-02) the Basque
Government (Project No. IT4720-10), and the support
of the National Research Fund of Hungary OTKA (Contract No. K83858).

\begin{appendix}
  \section{Fluctuation operators and representations of tensor algebras}

  To prove Proposition \ref{prop:1} let us consider the tensor algebra
  $\mathcal{A}$ over the vector space $V = \Bbb{C}^2$:
  \begin{equation}
    \mathcal{A} = \bigoplus_{M \in \Bbb{N}} V^{\otimes M},\quad V^{\otimes 0} = \Bbb{C}.
  \end{equation}
  Together with the tensor product and the involution 
  \begin{equation}
    (x^{(1)} \otimes \dots \otimes x^{(k)})^* = \bar{x}^{(k)} \otimes \dots
    \otimes \bar{x}^{(1)},  
  \end{equation}
  (where $\bar{x}$ denotes complex conjugation in the canonical basis of $V =
  \Bbb{C}^2$) the space $\mathcal{A}$ becomes a *-algebra. The vector space $V$
  is naturally embedded in $\mathcal{A}$ and its canonical basis
  \begin{equation} \label{eq:18}
    q = \frac{(1,0)}{\sqrt{2}} \in V,\quad p = \frac{(0,1)}{\sqrt{2}} \in V
  \end{equation}
  is a complete set of generators. 

  The fluctuation operators $F_M$ give now rise to a *-representation of
  $\mathcal{A}$ by 
  \begin{equation}
    \Phi_M(q) = Q_M,\quad \Phi_M(p) = P_M
  \end{equation}
  where $Q_M = 2^{-1/2}F_M(\sigma_1)$, $P_M = 2^{-1/2} F_M(\sigma_2)$ are the
  operators defined in Equation (\ref{eq:7}). Any density operator $\rho_M \in
  \mathcal{B}(\mathcal{H}^{\otimes M})$ leads to a state 
  \begin{equation}
    W_M(X) = \tr(\Phi_M(X) \rho_M),\quad X \in \mathcal{A}
  \end{equation}
  of the algebra and if the sequence $\rho_M$, $M \in \Bbb{N}$ has $\sqrt{M}$
  fluctuations the limit
  \begin{equation}
    W(X) \lim_{M\rightarrow\infty} W_M(X) = \tr(\Phi_M(X) \rho_M)
  \end{equation}
  exists for all $X$ and defines again a state $W$ of $\mathcal{A}$. 
  
  The state $W$ defines the GNS representation\footnote{Usually the GNS
    representation is only considered for C*-algebras, but it can easily be
    generalized to more general *-algebras. We only have to allow
    representations in terms of unbounded operators. Our case is basically an
    instance of the celebrated reconstruction theorem of Wightman quantum field
    theory \cite{StreatWight}.} $(\mathcal{H}_W,D_W,\pi_w,\Omega_W)$ of
  $\mathcal{A}$ consisting of a Hilbert space $H_W$, a dense subspace $D_W$, a
  *-morphism $\pi_W$ into the algebra of (unbounded) operators $D_W \rightarrow
  D_W$ and a cyclic vector $\Omega_W$ satisfying
  \begin{equation}
    \langle \Omega_W ,\pi_w(X) \Omega_W\rangle = W(X).
  \end{equation}
  Without loss of generality we will assume in the following that $D_W$
  coincides with the subspace generated by vectors $\pi_W(X) \Omega_W \in
  D_W$. Applying $\pi_W$ to the generators $q,p$ from  Equation (\ref{eq:18})
  we get two (unbounded) operators 
  \begin{equation}
    Q_\infty = \pi_W(q),\quad P_\infty = \pi_W(p)
  \end{equation}
  with domain $D_W$. For any polynomial $f(q,p)$ they satisfy
  \begin{equation}
    \lim_{M \rightarrow \infty} \tr\bigl(f(Q_M,P_M) \rho_M\bigr) = \tr
    \bigl(f(Q_\infty, P_\infty) \rho_\infty\bigr)\quad \text{with}\quad
    \rho_\infty = \kb{\Omega_W}.
  \end{equation}
  This establishes all statements in Proposition \ref{prop:1} except the last
  commutation relations (item \ref{item:4}). 

  To show the latter, let us consider two elements $X_1,X_2 \in \mathcal{A}$ and
  the expression (cf. Equation (\ref{eq:10}))
  \begin{multline}
    \tr\bigl(\Phi_M(X_1) [Q_M,P_M] \Phi_M(X_2)\bigr) -i \lambda
    \tr\bigl(\Phi_M(X_1) \Phi_M(X_2) \rho_M\bigr) \\ = \frac{i}{2\sqrt{M}} 
    \tr\bigl(\Phi_M(X_1) F_M(\sigma_3) \Phi_M(X_2)\bigr).
  \end{multline}
  If the sequence $\rho_M$ has $\sqrt{M}$ fluctuations we have
  \begin{equation}
    \lim_{M \rightarrow \infty} \tr\bigl(\Phi_M(X_1) F_M(\sigma_3)
    \Phi_M(X_2)\bigr) = C < \infty,
  \end{equation}
  hence
  \begin{multline}
    W\bigl(X_1 (qp-pq-i\lambda\Bbb{1}) X_2\bigr) \\ = \lim_{M \rightarrow \infty}
    \bigl[ \tr\bigl(\Phi_M(X_1) [Q_M,P_M] \Phi_M(X_2)\bigr) -i \lambda
    \tr\bigl(\Phi_M(X_1) \Phi_M(X_2) \rho_M\bigr) \bigr] = 0.
  \end{multline}
  In other words, the limiting state $W$ vanishes on the ideal generated by
  $qp-pq-i\lambda\Bbb{1}$ which leads to 
  \begin{equation}
    [Q_\infty,P_\infty] \phi = i\lambda\phi \quad \forall \phi \in D_W
  \end{equation}
  as stated.
\end{appendix}

\end{document}